\pdfoutput=1
\documentclass[11pt, twoside, table]{predoc2_adjusted}
\usepackage{distoperators}
\usepackage{microtype}
\usepackage{subcaption}
\usepackage{caption}
\usepackage{graphicx}
\usepackage[export]{adjustbox}
\usepackage{tikz}
\usepackage{fancyvrb,newverbs,xcolor} 
\usepackage[shortlabels,inline]{enumitem}
\usepackage{listings}
\usepackage{longtable}
\usepackage{booktabs}
\usepackage{multirow}
\usepackage[title]{appendix}
\usepackage{xfrac}
\usepackage{tabularx}
\usepackage{tabularray}
\usepackage{multirow}
\usepackage[framemethod=TikZ]{mdframed}
\usepackage{amsthm}

\newcounter{theo}[section] 
\renewcommand{\thetheo}{\arabic{theo}}
\newenvironment{theo}[2][]{%
\refstepcounter{theo}%
\ifstrempty{#1}%
{\mdfsetup{%
frametitle={%
\tikz[baseline=(current bounding box.east),outer sep=0pt]
\node[anchor=east,rectangle,fill=lightgray]
{\strut Desideratum~\thetheo};}}
}%
{\mdfsetup{%
frametitle={%
\tikz[baseline=(current bounding box.east),outer sep=0pt]
\node[anchor=east,rectangle,fill=lightgray]
{\strut \sffamily #1};}}%
}%
\mdfsetup{innertopmargin=5pt,linecolor=lightgray,%
linewidth=2pt,topline=true,%
frametitleaboveskip=\dimexpr-\ht\strutbox\relax
}
\begin{mdframed}[]\relax%
\label{#2}}{\end{mdframed}}

\usetikzlibrary{calc}

\theoremstyle{mathdoc}
\newtheorem{proposition}{Proposition}[subsection]
\newtheorem{definition}[proposition]{Definition}
\newtheorem{lemma}[proposition]{Lemma}
\newtheorem{corollary}[proposition]{Corollary}

\title{To select or not to select: predictively consistent priors instead of model selection}
\author[1,*]{Anna Elisabeth Riha}
\author[2]{Leevi Lindgren}
\author{David Kohns}
\author[3]{Paul-Christian Bürkner}
\author[4,1]{Aki Vehtari}
\affil[1]{Department of Computer Science, Aalto University, Finland}
\affil[2]{work done while at Department of Computer Science, Aalto University, Finland}
\affil[3]{Department of Statistics, TU Dortmund University, Germany}
\affil[4]{ELLIS Institute Finland, Finland}
\affil[*]{Corresponding author: anna.riha@aalto.fi}
\makeatletter
\def\runauthor{Riha et al.}
\def\runtitle{To select or not to select}
\makeatother

\begin{document}
\maketitle

\begin{abstract}
    Bayesian modelling workflows often consider multiple candidate models of varying complexity. 
    Model selection is commonly used to navigate potential trade-offs between model complexity and generalisability to new data. 
    We study when model selection is unnecessary or can even be harmful for predictive performance in finite data regimes and find that the need for selecting simpler models can depend on prior choice.
    We formalise predictively consistent priors, which keep prior predictive implications stable as model complexity increases.
    Across examples and numerical experiments, including adding covariates in linear and logistic regression, forward variable selection, and nonlinear modelling,
    flexible models with predictively consistent priors typically match or outperform selected simpler models in out-of-sample predictive performance. 
    When selection helps, it can indicate poor joint prior implications, such as excessive prior mass on implausible predictive values. 
    Based on our findings, we propose replacing the notion of sparsity or parsimony at the level of model components with specifying priors that remain sensible in predictive space as models become more complex.
\end{abstract}

\begin{keywords}
    model selection, model comparison, prior specification, predictively consistent priors, Bayesian workflows
\end{keywords}

\thispagestyle{empty}

\section{Introduction}
Modellers often start with a simple model that is easy to fit, but, ultimately, want to safely iterate towards more complex models that predict well and are expressive yet computationally feasible \citep{gelman_bayesian_2020}. 
Alternative modelling choices, model checks, or iterative refinements can generate multiple candidate models, motivating transparent strategies for model expansion and comparison 
\citep[see, e.g.,][]{riha_supporting_2024, gelman_garden_2013}.
\citet{dubova_is_2025} emphasise the need for both parsimonious and complex models for scientific progress. 
Complex models are often more realistic \citep{rasmussen_occam_2000}, but preferences or resource limits can motivate choosing a simpler candidate via model selection \citep[for reviews of Bayesian model selection and criteria of model quality, see][, respectively]{vehtari_survey_2012, burkner_models_2023}. 

Statistical folklore says that selecting models with fewer components can avoid overfitting, that is, it helps to balance accuracy on observed (training) data with generalisability on unseen (test) data \citep{hastie_elements_2009}. 
A typical textbook example of this is a polynomial regression where higher-degree polynomials eventually achieve an arbitrarily good fit to the given data, but predict poorly on out-of-sample data \citetext{\citealp[Ch. 7]{mcelreath_statistical_2020}; \citealp{johnson_bayes_2022}}.
Model selection is a commonly suggested remedy, but has itself also been criticised to favour models that are prone to overfitting, for example, when considering stepwise selection from many candidate models \citep{smith_step_2018,ellis_stepwise_2024}. 

To give practical guidance for Bayesian modelling workflows, we examine when model selection is needed and when it may even harm out-of-sample predictive performance. 
Further, we investigate how prior choices affect selection outcomes and the level of model complexity we can reach safely without deteriorating test performance.
To this end, we focus on additive models in finite data regimes, which covers a wide range of use cases across the quantitative sciences.
In particular, we
\begin{enumerate}[nosep,leftmargin=*, label=\textbullet]
  \item Provide general desiderata for good priors and specifically define predictively consistent priors (second-order consistency and predictive parsimony; Sections \ref{sec:good-priors}-\ref{sec:examples-pred-consistent-priors}).
  \item Demonstrate predictively consistent priors in three case studies: adding covariates in linear and logistic models, forward stepwise selection, and increasing nonlinear complexity (Sections \ref{subsec:examples-pred-consistent-priors-illustrative-example-part-3}, \ref{subsec:ex-1-logistic}-\ref{subsec:ex-3-nonlinear}).
  \item Give an example of adjusting a joint prior towards predictive consistency for a randomised-controlled-trial setup with available prior knowledge for the treatment coefficient (Section \ref{subsec:ex-4-rct-studies}).
  \item Summarize our main takeaways and practical recommendations (Section \ref{intro:takeaways-for-modellers}).
\end{enumerate}
Our main recommendation is to replace the notion of sparsity or parsimony at the level of model components with \textit{predictively consistent priors}, that is, proper priors that retain consistent prior predictive distributions on the target space when model complexity increases \citep{gelman_prior_2017, gelman_regression_2020}. 
We operationalise this notion via the implied variance of the predictor distribution (intuition: Section \ref{subsec:methods-intuition-prior-desiderata}; formal definition: Section \ref{subsec:methods-formalisation-prior-desiderata}). 

When correctly integrating over uncertainties and using predictively consistent priors, we find that selecting a simpler model is often unnecessary.
Throughout our experiments in Section \ref{sec:experiments}, complex models with predictively consistent priors typically match or exceed the out-of-sample performance of selected simpler models; and when worse, the loss is usually negligible. 
In some cases, selection within a set of nested models can even harm performance relative to using the biggest model.
If overfitting is a concern, this should be addressed in the initial (joint) priors rather than by selecting among models with predictively inconsistent priors that may favour overfitting.
Importantly, we are not saying  model selection cannot be useful. 
Resource constraints can force reductions in model size and selecting a simpler model can also mitigate the shortcomings of a prior.
In fact, when selection ``helps'', it can indicate a poor prior with too much mass on nonsensical predictive values, motivating better joint prior specification upfront.
  
\subsection{Takeaways for modellers}\label{intro:takeaways-for-modellers}
\paragraph{(1) Joint prior specification should consider independence assumptions and dependence structures between model components.} 
Seemingly weakly informative marginal priors like independent normal priors can imply surprisingly informative joint priors \citep[Figure \ref{fig:illustrative-example-part-3-implied-prior-r2-elpd-test-rho-05-r2-05}; see also][]{efron_discussion_1973}.
Independence is not a safe default as exemplified throughout Section \ref{subsec:ex-1-logistic} to \ref{subsec:ex-2-forward-search}. 
At the same time, when a component is truly independent as in Section \ref{subsec:ex-4-rct-studies}, assuming dependence can also lead to problematic prior predictives and worse predictive performance compared to a prior that reflects the dependence structure.

\paragraph{(2) Prior predictive checks and sensitivity analysis can reveal deficiencies.}
Prior predictive checks often detect too diffuse priors producing nonsensical prior predictive distributions \citep{gabry_visualization_2019, gelman_bayesian_2020} and prior sensitivity analysis highlights influential choices and prior-data conflicts \citep{kallioinen_detecting_2023}.
Checking joint quantities and model summaries can reveal unintended joint effects of individually specified priors and predictive inconsistencies \citep[Figures \ref{fig:illustrative-example-part-3-implied-prior-r2-elpd-test-rho-05-r2-05}, \ref{fig:prior-predictives-logistic-regression} and \ref{fig:ex-1-linear-regression-prior-r2}; see also][]{aguilar_intuitive_2023, aguilar_generalized_2025}.

\paragraph{(3) With predictively consistent priors, model selection is not always needed.}
Predictively consistent priors keep relevant features of the prior predictive distribution stable as model components are added (intuition: Section \ref{subsec:methods-intuition-prior-desiderata}; formal definition: Section \ref{subsec:methods-formalisation-prior-desiderata}).
Across our experiments (Section \ref{sec:experiments}), posterior integration combined with such priors allows increasing model complexity without loss of out-of-sample performance and simpler models rarely outperform more complex ones \citep[see also][]{neal_bayesian_1996,rasmussen_occam_2000}. 
\paragraph{(4) Priors are important, even if the main goal is to reduce the model size.}
When cost or explainability motivate reducing model size, projection predictive inference is recommended \citep{hahn_decoupling_2015, piironen_comparison_2017, piironen_projective_2020, pavone_using_2023,mclatchie_robust_2025}.
Our experiments in Section \ref{subsec:ex-2-forward-search} show that a predictively consistent prior can improve the forward search process of \texttt{projpred} \citep[][]{piironen_projpred_2026}, compared to the case when independent diffuse priors are used.

\paragraph{(5) Model selection being helpful might be an indication of a poor prior.}
If model selection favours a simpler model over a more complex one, this can be caused by a prior placing too much probability mass on nonsensical predictive values favouring overfitting (see Figures \ref{fig:illustrative-example-part-3-implied-prior-r2-elpd-test-rho-05-r2-05} and \ref{fig:ex-2-forward-search-elpd-test-paths-all-rho-n-100-200-r2-05}). 
If selecting a simpler model seems to improve results, there is reason to revisit the chosen priors, especially, if wide default priors have been used or marginal priors have been chosen without considering potential dependence structures between model components. 

\subsection{Related work}\label{subsec:intro-previous-work} 
\subsubsection{In favour of big models}
Building the largest, encompassing model when possible and computationally feasible is a repeatedly advocated idea \citep{lindley_choice_1968, neal_bayesian_1996, rasmussen_occam_2000, vehtari_survey_2012, pavone_using_2023}.
There are many reasons to build complex models, for example, to include all available information and allow flexibility for future applications of the model.
Established concepts like the bias-variance trade-off underline tensions between model complexity and predictive performance on test data \citep{hastie_elements_2009}. 
At the same time, complex yet well-generalising models, for example, Gaussian processes or (Bayesian) neural networks, suggest less straightforward relationships between complexity, bias, and variance \citep{neal_bayesian_1996, rasmussen_gaussian_2006, wilson_position_2025}. 
\cite{rasmussen_occam_2000} argue that parsimony need not imply that a simpler model would perform better, when Ockham's razor is applied at the level of functional complexity.
In the context of generalisation phenomena of deep neural networks like double descent,
\citet{wilson_position_2025} reasons that ``restriction biases'' \citep[][p.3]{wilson_position_2025} are not necessary to avoid overfitting and shows that similarly unintuitive generalisation also occurs in simpler models like linear regression models. 

Even if we do not work with the largest model directly, for example, because it is computationally expensive, it can still be appealing to build a model that is as complex as is feasible. 
Expanding a candidate model is often recommended, especially when the initial model does not satisfy criteria of model quality \citetext{see, e.g., \citealp{gilks_model_1995};
\citealp{ohagan_kendalls_2004}}.
In real-world applications of Bayesian data analysis, 
a typical workflow is to start with a simple, tractable model and iteratively expand toward more realistic models that capture the investigated phenomenon well, provide reliable posterior results and accurate predictions, while also performing model checking and monitoring computation and convergence and revising models if needed \citep{gelman_bayesian_2020}.
As such, Bayesian workflows require strategies for safe model expansion, that is, increasing model complexity without losses in out-of-sample performance.
Moreover, \citet{lindley_choice_1968} suggests performing model selection by identifying models that perform similarly to the largest model encompassing all uncertainty about the modelling problem. 
In continuation and based on ideas by \citet{goutis_model_1998} and \citet{dupuis_variable_2003}, projection predictive inference finds simpler models with comparable predictive performance to a reference model \citetext{\citealp{hahn_decoupling_2015}; \citealp{piironen_projective_2020}}. 
Importantly, the quality of the results depends on the chosen reference model and we need well-performing complex models for successful projection predictive selection \citep[see, e.g.,][]{piironen_comparison_2017,mclatchie_robust_2025, pavone_using_2023}. 

\subsubsection{Prior specification and model complexity}
Prior specification and model selection are separate tasks in Bayesian workflows \citep[see, e.g,][]{bernardo_bayesian_1994,key_bayesian_1999,vehtari_survey_2012, mclatchie_efficient_2024}, but the motivation for employing model selection can help to set better priors from the start.
Particularly, if we care about posterior predictive performance on yet unobserved data, we need to consider the implications of our prior assumptions in predictive space \citep[][]{gelman_prior_2017, simpson_penalising_2017}. 
In practice, limited data availability restricts what we can learn and not all effects in our models can be large \citep{tosh_piranha_2025}, but that does not imply that some or most of the true effects are zero. 
Instead of ``sparsifying'' the model after building it, priors can reflect a notion of sparsity that allows small or close-to-zero effects despite increasing model complexity. 
\citet{wilson_position_2025} recommends that modellers should 
``embrace a flexible hypothesis space, with a soft preference for simpler solutions [...] consistent with the data'' \citep[][p.1]{wilson_position_2025} as a result of investigating generalisation properties of complex models like deep neural networks. 
While targeting different models, this suggestion is conceptually close to our further discussion of desired properties of priors in Section \ref{subsec:methods-intuition-prior-desiderata} and \ref{subsec:methods-formalisation-prior-desiderata}.

The more parameters a model has, the more challenging it can be to define meaningful individual priors that also imply sensible joint priors. 
It is not enough to elicit priors on a parameter-to-parameter basis, as their joint behaviour affects inferences \citep{gelman_prior_2017, mikkola_prior_2024}. 
Sometimes, a simpler model is used only because it is hard to come up with a sensible prior for a more complex version of the model \citep{neal_bayesian_1996}. 
A range of existing approaches tackle aspects of this challenge, for example, by embedding sparsity assumptions with global-local shrinkage priors \citep{polson_shrink_2011}, restricting predictive complexity with reference model priors \citep{nalisnick_predictive_2021}, considering a prior tree structure and a hierarchical decomposition of prior contributions for Bayesian hierarchical models \citep[][]{hem_makemyprior_2024, fuglstad_intuitive_2020}, or setting a prior on a joint quantity like the proportion of variance in the outcome explained by the model \citep[Bayesian $R^2$,][]{gelman_r-squared_2019}, that is propagated back to implied priors for the individual model components \citetext{\citealp{zhang_bayesian_2022}; \citealp{aguilar_intuitive_2023}; \citealp{yanchenko_r2d2_2025}}. 
Another strategy to ensure sensible joint priors is setting priors on function space.
This is an integral part of approaches using Gaussian processes \citep{rasmussen_gaussian_2006} and nonparametric Bayes \citetext{\citealp{hjort_bayesian_2010}; \citealp{ghosal_fundamentals_2017}}, including tree-based function priors such as Bayesian additive regression trees \citep{chipman_bart_2010}. 
It has also been studied in the context of prior specification for Bayesian neural networks \citep[see, e.g., ][]{fortuin_priors_2022, flam-shepherd_mapping_2017, tran_all_2022}.
To enable a more thorough mathematical investigation, we focus primarily on additive models in this paper.

\subsection{Structure of this paper}
Sections \ref{sec:methods} and \ref{sec:good-priors} define key concepts, including our notion of predictively consistent priors. 
Section \ref{sec:examples-pred-consistent-priors} introduces two such priors with illustrative examples, and Section \ref{sec:experiments} evaluates them on simulated data across three settings that are typically framed as prone to overfitting: adding covariates in logistic regression (Section \ref{subsec:ex-1-logistic}), forward variable selection (Section \ref{subsec:ex-2-forward-search}), and increasing complexity in nonlinear models (Section \ref{subsec:ex-3-nonlinear}). 
We also customise a hierarchical shrinkage prior for randomised controlled trials with treatment-specific prior information (Section \ref{subsec:ex-4-rct-studies}). 
We conclude with a discussion of the main findings and directions for future work in Section \ref{sec:discussion}.\footnote{Code to replicate the illustrative examples and experiments in this paper is available at \href{https://github.com/annariha/to-select-or-not}{https://github.com/annariha/to-select-or-not}.} 

\section{Methods}\label{sec:methods} 
\subsection{Key concepts}
We introduce terminology used throughout this paper, including a collection of nested candidate models with an associated notion of complexity, a utility for ranking them, and a selection rule given by an oracle model (see Table \ref{tab:methods-terms} for a summary). 
\begin{table}[t]
    \centering
    {\small 
    \begin{tabularx}{\textwidth}{l|X}
        Term & Definition and/or example \\
        \hline
        True data-generating process & $p_t(y):= p(y \mid \theta=\theta_0)$ with known true $\theta_0$ \\
        True-structure model & $p(\theta, y)$ with parameter space $\Theta$ such that $\theta_0 \in \Theta$ \eqref{eq:methods-true-structure-models} \\
        KL-minimising parameters & $\theta_{\rm KL} =\argmin_{\theta \in \Theta} \mathrm{KL} \left(p_t(y) \ || \ p(y \mid \theta) \right)$ \eqref{eq:methods-theta-KL} \\
        Generative model & $p\left(y,X, \theta\right) = p\left(\theta\right) \ p\left(X\right) \ p\left( y \mid X, \theta\right) $ \\
        Partially generative model & 
        $p(\theta,y\mid X) = p(\theta)\,p(y\mid X,\theta)$, $\rm X$ treated as fixed \\
        Finite collection of models & $\mathcal{M} = \left\{\mathrm{M}_k\right\}_{k=1}^K $ where each $\mathrm{M}_k := p_{\mathrm{M}_k}(\theta_{\mathrm{M}_k},y\mid X)$ with parameter space $\Theta_{\mathrm{M}_k}$ \\
        Nested models & $p_{\mathrm{M}_2}(y \mid \{\theta_{\mathrm{M}_1}, \theta_{\mathrm{M}_2\setminus\mathrm{M}_1} = 0\}) = p_{\mathrm{M}_1}(y \mid \theta_{\mathrm{M}_1})$ where $\Theta_{\mathrm{M}_2}=\Theta_{\mathrm{M}_1}\times\Theta_{\mathrm{M}_2\setminus\mathrm{M}_1}$ and $\Theta_{\mathrm{M}_1}\times\{0\}\subset\Theta_{\mathrm{M}_2}$ \eqref{eq:methods-nested}
        \\ 
        Observation model/data model &  $p(y \mid \theta)$ as a function of $y$\\
        Likelihood & $p(y \mid \theta)$ as a function of $\theta$ \\
        Generative prior & proper, allows to generate coefficient values \\ 
        Predictively consistent prior & generative in predictive space, fulfils second-order predictive consistency \eqref{def:second-order-pc}, often also predictive parsimony \eqref{def:pred-parsimony}  \\
        Oracle model & ``best-in-collection'' $\argmax_{\mathrm{M}\in\mathcal{M}} \ \mathbb{E}_{\tilde{y}} 
        \big[\mathrm{u}\big(\mathrm{M}(y), \tilde{y}\big)\big]$ \eqref{eq:oracle-def}
        \\
    \end{tabularx}}
    \caption{Terms and definitions used in the article.}
    \label{tab:methods-terms}
\end{table}
\paragraph{Candidate models and model complexity}
We work with a finite collection of candidate models $\mathcal{M}=\{\mathrm{M}_k\}_{k=1}^{K}$. 
Each candidate is the joint model $\mathrm{M}_k := p_{\mathrm{M}_k}(\theta_{\mathrm{M}_k},y\mid X)$, with parameter space $\Theta_{\mathrm{M}_k}$ and covariates
$X$. 
After observing training data $y$, we write $\mathrm{M}_k(y) := p_{\mathrm{M}_k}(\,\cdot\mid y,X)$ as the posterior predictive, which is evaluated at new $\tilde{y}$ when assessing test performance. 
Two models are \emph{nested}, written $\mathrm{M}_1\subset\mathrm{M}_2$, when their parameter spaces decompose as $\Theta_{\mathrm{M}_2}=\Theta_{\mathrm{M}_1}\times\Theta_{\mathrm{M}_2\setminus\mathrm{M}_1}$, where $\Theta_{\mathrm{M}_2\setminus\mathrm{M}_1}$ collects parameters present in $\mathrm{M}_2$ but not in $\mathrm{M}_1$, and $\Theta_{\mathrm{M}_1}$ is identified with $\Theta_{\mathrm{M}_1}\times\{0\}\subset\Theta_{\mathrm{M}_2}$ via 
$\theta_{\mathrm{M}_1}\mapsto(\theta_{\mathrm{M}_1},0)$, such that
\begin{align}\label{eq:methods-nested}
p_{\mathrm{M}_2}\!\left(y\mid\theta_{\mathrm{M}_1},\,\theta_{\mathrm{M}_2\setminus\mathrm{M}_1}=0,X\right)
\,=\,p_{\mathrm{M}_1}\!\left(y\mid\theta_{\mathrm{M}_1},X\right).
\end{align}
We omit the explicit conditioning on $X$ in the equations below.
For nested models $\mathrm{M}_1\subset\cdots\subset\mathrm{M}_K$, we identify \emph{model complexity} with a model's position, that is, $\mathrm{M}_K$ is the most complex model and $\mathrm{M}_{k+1}$ is more complex than $\mathrm{M}_k$ (see Figure \ref{fig:illustration-key-concepts-true-structure-model-KL-minimising-parameter} for a simplified illustration of 
$\Theta_{k-1}\subset\Theta_k\subset\Theta_{k+1}\subset\cdots$).

\paragraph{True data-generating process and true-structure models} 
We denote by $p_t(y)$ the true model or \emph{true data-generating process} (DGP) of $y$. 
The DGP does not need to be within the set of candidate models $\mathcal{M}$, but when it is, there exists a model $M_k \in \mathcal{M}$ and a parameter value $\theta_0\in\Theta_{\mathrm{M}_k}$ such that 
\begin{align}\label{eq:methods-true-structure-models} 
    p_t(y)\,=\,p_{\mathrm{M}_k}(y\mid\theta_0),
\end{align}
we call $\mathrm{M}_k$ a \emph{true-structure model} and any such $\theta_0$ \emph{true-structure parameters} \citep{bernardo_bayesian_1994, vehtari_survey_2012}. 
Further, $p_{\mathrm{M}_k}(y\mid\theta_0)$ is then (a parameterisation of) the DGP.
The smallest $\mathrm{M}_k$ in a set of nested models is the \emph{minimal true-structure model} within this set.
Figure \ref{fig:illustration-key-concepts-true-structure-model-KL-minimising-parameter} 
\begin{figure}[t]
    \centering
    \begin{tikzpicture}
    \def\irregular{
    (0,1.8)
        .. controls (2.2,1.6) and (2.6,0.2) .. (1.8,-0.6)
        .. controls (1.2,-1.6) and (-0.4,-1.2) .. (-1.6,-0.6)
        .. controls (-2.4,0.4) and (-1.8,1.6) .. (0,1.8)
    }
    \newcommand{\cross}[2]{%
    \draw[very thick] ($(#1)+(-#2,-#2)$) -- ($(#1)+(#2,#2)$);
    \draw[very thick] ($(#1)+(-#2,#2)$) -- ($(#1)+(#2,-#2)$);
  }
  \foreach \s/\pos/\lab in {
      1.40/1.0/{\cdots},
      1.2/1.0/{\Theta_{k+1}},
      0.7/1.0/{\Theta_{k}},
      0.30/1.0/{\Theta_{k-1}}
  }{
    \draw[thick] [scale=\s] \irregular
    node[pos=\pos, yshift=-2pt, fill=white, inner sep=1pt] {$\lab$};
  }
    \begin{scope}[shift={(3.40,0.10)}, scale=0.75, rotate=10]
        \draw[thick] \irregular
        node[pos=1.2, yshift=-8pt, fill=white, inner sep=1pt] {$\Theta_{m}$};
    \end{scope}
  \coordinate (t0) at (0.82,0.30);
  \cross{t0}{0.08}
  \node[anchor=north] at ($(t0)+(0.2,0.05)$) {\footnotesize{$\theta_0$}};
  \coordinate (tKL) at (1.93,0.15);
  \cross{tKL}{0.09}
  \node[anchor=north] at ($(tKL)+(0.37,0.15)$) {\footnotesize{$\theta_{\rm KL}$}};
  \draw[dashed, thick] (t0) -- (tKL);
  \end{tikzpicture}
  \vspace{-0.8cm}
  \caption{Parameter spaces of true-structure models $\Theta_k$ and $\Theta_{k+1}$ containing $\theta_0$, with $\Theta_k$ being the parameter space of the minimal true-structure model since $\theta_0 \notin \Theta_{k-1}$. The KL-minimising parameter is $\theta_{\rm KL} = \argmin_{\theta \in \Theta} \mathrm{KL}\left(p_t(y \ || \ p(y \mid \theta)\right)$ and $\theta_{\rm KL} = \theta_0$ if the model is a true-structure model.}
  \label{fig:illustration-key-concepts-true-structure-model-KL-minimising-parameter}
\end{figure}
illustrates $\theta_0\in\Theta_k$ but $\theta_0\notin\Theta_{k-1}$. 
This means $\mathrm{M}_k$ is the minimal true-structure model, and every $\mathrm{M}_{k'}$ with $k'\ge k$ is also a true-structure model.

Independently of whether \eqref{eq:methods-true-structure-models} can be satisfied, we define the Kullback–Leibler (KL) projection of $p_t$ onto model $\mathrm{M}_m$ as
\begin{align}\label{eq:methods-theta-KL}
\theta_{\mathrm{KL}}\,=\,\argmin_{\theta\in\Theta_{\mathrm{M}_m}}\mathrm{KL}\!\left(p_t(y)\,\Vert\,p_{\mathrm{M}_m}(y\mid\theta)\right),
\end{align}
that is, the parameter in $\Theta_{m}$ whose implied distribution is closest to $p_t$ in KL divergence. 
If $\mathrm{M}_m$ is a true-structure model, we have $\theta_{\mathrm{KL}}=\theta_0$. 
Figure \ref{fig:illustration-key-concepts-true-structure-model-KL-minimising-parameter} shows the parameter space $\Theta_m$ of a misspecified candidate $\mathrm{M}_m$ with the dashed line pointing to its KL projection $\theta_{\mathrm{KL}}\in\Theta_m$.

\paragraph{Evaluating predictive performance} We use the log score $\mathrm{u}(\mathrm{M}(y),\tilde{y})=\log p_{\mathrm{M}}(\tilde{y}\mid y)$ as a default utility to evaluate a model's predictive performance for new observations \citep{geisser_predictive_1979,bernardo_bayesian_1994,gneiting_strictly_2007}. 
The expected log predictive density ($\mathrm{elpd}$)
is    
\begin{align}
    \mathrm{elpd}(\mathrm{M} \mid y) = \mathbb{E}_{\tilde{y}}\big[\log p_{\mathrm{M}}(\tilde{y} \mid y)\big] 
\end{align} 
where the expectation is with respect to the DGP of $\tilde{y}$ and $p_{\mathrm{M}}(\tilde{y}\mid y) = \int p_{\mathrm{M}}(\tilde{y} \mid \theta) p_{\mathrm{M}}(\theta \mid y) \text{d} \theta$ is the posterior predictive density of $\mathrm{M}$. 
Given independent test data $\{\tilde{y}_i\}_{i=1}^{n_{\text{test}}}$, we can estimate 
$\mathrm{elpd}_{\text{test}}(\mathrm{M} \mid y) = \sum_{i=1}^{n_{\text{test}}} \log p(\tilde{y}_i \mid y)$. 
The in-sample analogue is the log pointwise predictive density $\mathrm{lppd}(\mathrm{M} \mid y) = \sum_{i=1}^n \log p(y_i \mid y)$. 
When no test data are available, leave-one-out cross-validation \citep[LOO-CV, ][]{geisser_predictive_1979} can approximate out-of-sample performance via $\mathrm{elpd}_{\text{loo}} = \sum_{i=1}^{n} \log p(y_i \mid y_{-i})$, where $p(y_i \mid y_{-i}) = \int p(y_i \mid \theta) p(\theta \mid y_{-i}) \text{d} \theta$ is the LOO posterior predictive density obtained by fitting the model to the data without the $i$-th observation. 
LOO-CV with Pareto-smoothed importance sampling \citep[PSIS,][]{vehtari_practical_2017,vehtari_pareto_2024} avoids $n$ refits and provides Pareto-$\hat{k}$ values to diagnose the reliability of the estimates. 
To support comparability, we report $\mathrm{elpd}_{\text{test}}$ on the $\mathrm{elpd}_{\text{loo}}$ scale, that is, scaled with $(n/n_{\text{test}})$.   
On $\mathrm{elpd}_{\text{loo}}$ scale, a difference $\leq 4$ can indicate indistinguishable predictive performance \citep[][]{sivula_uncertainty_2025, mclatchie_efficient_2024}.

\paragraph{Overfitting} Overfitting is a commonly used concept to caution against ``fitting to the data'' without ensuring good test performance \citep{hastie_elements_2009}. 
It is difficult to assess for a single dataset and model, but, for two models $\mathrm{M}_1\subset\mathrm{M}_2$, we say $\mathrm{M}_2$ is \emph{overfitting relative to} $\mathrm{M_1}$ on the training data $y$ if
\begin{align} \mathrm{lppd}(\mathrm{M}_2 \mid y) \geq \mathrm{lppd}(\mathrm{M}_1 \mid y) \quad\text{and}\quad \mathrm{elpd}_{\text{test}}(\mathrm{M}_2 \mid y) < \mathrm{elpd}_{\text{test}}(\mathrm{M}_1 \mid y),  
\label{eq:overfitting-lppd-elpd}
\end{align}
that is, if in-sample predictions improve (or remain the same) but out-of-sample performance reduces.
When a test set is unavailable, the second inequality can be replaced by $\mathrm{elpd}_{\text{loo}}(\mathrm{M}_2\mid y)<\mathrm{elpd}_{\text{loo}}(\mathrm{M}_1\mid y)$, or by corresponding $\mathrm{elpd}$s obtained from any other CV scheme. 
If we can define a unique ordering based on model complexity, we can use the same reasoning for non-nested models.
There is no requirement that the models are true-structure models.
However, misspecified models can sometimes simultaneously overfit and underfit for different areas of observed data, for example, when fitting a linear model to data with nonlinear patterns or vice versa.

\paragraph{Best-predicting models} Following \citet[][, Section 2.2]{arlot_survey_2010}, an \emph{oracle model} is a model in $\mathcal{M}$ that, after being fitted on training data $y=\{y_i\}_{i=1}^{n}$, has the lowest expected loss on unseen data $\tilde{y}$ from the DGP.
With the log-score, this is equivalent to maximising $\mathrm{elpd}$,
\begin{align}\label{eq:oracle-def}
\mathrm{M}_{\mathrm{OR}}(y)\,\in\,\argmax_{\mathrm{M}\in\mathcal{M}}\,
\mathbb{E}_{\tilde{y}}\!\left[\mathrm{u}\!\left(\mathrm{M}(y),\tilde{y}\right)\right]
\,=\,\argmax_{\mathrm{M}\in\mathcal{M}}\,\mathrm{elpd}(\mathrm{M}\mid y).
\end{align}
For simplicity, we will just speak of \textit{the} oracle model in the following, assuming its uniqueness.
The oracle depends on $y$ and averaging additionally over $y$ gives the \emph{oracle in expectation}:
$\argmax_{\mathrm{M}\in\mathcal{M}}\mathbb{E}_{y,\tilde{y}}\!\left[\mathrm{u}(\mathrm{M}(y),\tilde{y})\right]$, which is what we report when summarising over repeated simulations in Section \ref{sec:experiments}. 
The oracle maximises expected utility within $\mathcal{M}$. 
So, for any other model $\mathrm{M}\in\mathcal{M}\setminus\{\mathrm{M}_{\mathrm{OR}}(y)\}$, we have $\mathrm{elpd}(\mathrm{M}\mid y)\le\mathrm{elpd}(\mathrm{M}_{\mathrm{OR}}(y)\mid y)$, with strict inequality whenever the maximiser is unique. 
The oracle is defined only with respect to the collection $\mathcal{M}$, which does not need to contain any true-structure models. 
Even when a true-structure model is included, the oracle does not need to be a true-structure model itself \citep[][p. 46]{arlot_survey_2010}.
Following \eqref{eq:overfitting-lppd-elpd}, models more complex than $\mathrm{M}_{\mathrm{OR}}(y)$ with lower expected predictive performance are overfitting, and models simpler than $\mathrm{M}_{\mathrm{OR}}(y)$ with lower expected predictive performance are underfitting.

\subsection{Illustrative example: Part I}\label{subsec:methods-illustrative-example-part-1} We consider a simple setup with two normal models: an intercept-only model ($\mathrm{M}_0$) with $\mu_{\mathrm{M}_0} = \alpha$ and a single-covariate model ($\mathrm{M}_1$) with $\mu_{\mathrm{M}_1} = \alpha + \beta x_i$.
Importantly, we observe similar phenomena with more complex models, often becoming more pronounced in higher-dimensional settings (see Section \ref{subsec:examples-pred-consistent-priors-illustrative-example-part-3}, \ref{subsec:ex-1-logistic}-\ref{subsec:ex-4-rct-studies}).
We consider $0.01 \leq \beta^* \leq 1$, set  $\alpha^*=0$ and $\sigma^{2*}=1$ and generate data as
\begin{align}\label{eq:illustration-simple-dgp}
       y_i \sim \normal(\alpha^{*} + \beta^* x_i, \sigma^{2*}), \text{ with} \;
       x_i \sim \normal(0, 1). 
\end{align}
Following \eqref{eq:methods-true-structure-models}, $\mathrm{M_1}$ is a true-structure model, while $\mathrm{M}_0$ is not, since we assume $\beta^* \neq 0$.
\begin{figure}[tp]
    \centering
    \includegraphics[]{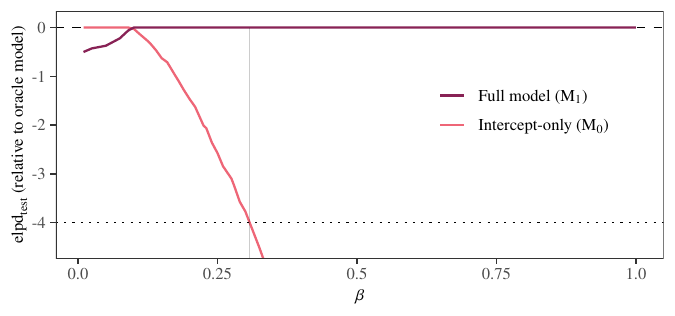}
    \vspace{-0.5cm}
    \caption{Illustrative example: Part 1. Test performance for a model with one covariate ($\mathrm{M}_1$) with prior $\beta \sim \normal(0,1)$ and the intercept-only model ($\mathrm{M}_0$), relative to the oracle model (Equation \eqref{eq:oracle-def}), averaged across $500$ repetitions and rescaled to $\mathrm{elpd}_{\text{loo}}$ scale. 
    We indicate a difference of $4$ to oracle performance with a dotted line.
    }
   \label{fig:intro-single-predictor-only-models-prior-normal-dgp-normal}
\end{figure}
Figure \ref{fig:intro-single-predictor-only-models-prior-normal-dgp-normal} compares $\mathrm{elpd}_{\text{test}}$ of $\mathrm{M_0}$ and $\mathrm{M_1}$ averaged over $500$ repetitions, relative to the oracle model \eqref{eq:oracle-def} where $\mathcal{M}=\{\mathrm{M_0}, \mathrm{M_1}\}$. 
We assume $n=100$, $n_{\text{test}}=2000$, and priors $\alpha \sim \normal(0, 2.5)$, $\sigma^2 \sim \expdist(1)$ and $\beta \sim \normal(0, 1)$.
For $\beta\gtrsim 0.3$, $\mathrm{M}_1$ is the oracle and dominates $\mathrm{M}_0$ by a margin exceeding $4$ on $\mathrm{elpd}_{\text{loo}}$ scale. 
However, for sufficiently small $\beta$, $\mathrm{M}_0$ is the oracle, even though it omits a nonzero true $\beta$, as the variance from estimating a weak $\beta$ with $n=100$ may outweigh the bias from setting it to zero. 
These observations are related to phenomena investigated by \citet{hjort_exact_1994} when comparing ``narrow'' incorrect models with higher bias to ``wide'' true-structure models with higher variability.

We can extend the example by also investigating different strategies for aggregating or selecting $\mathrm{M}_0$ or $\mathrm{M}_1$, as summarised in Table \ref{tab:illu-all-strategies}.
In particular, we consider (1) Bayes factors \citep{kass_bayes_1995}\footnote{A threshold of six for the Bayes factor corresponds to an implicit prior point probability mass of $\approx 0.86$ on the zero effect of $x$ \citep{campbell_bayes_2022}. In practice, this threshold should be chosen such that it matches the prior knowledge at hand.}, and (2) $\mathrm{elpd}_{\text{loo}}$ estimated with PSIS-LOO-CV. 
We also consider aggregating the models using either (3) Bayesian model averaging \citep[BMA; ][]{hoeting_bayesian_1999, raftery_bayesian_1997} weights defined as
    \begin{align}      
        w_k := p(\mathrm{M}_k \mid y) = \frac{p(y \mid \mathrm{M}_k)p(\mathrm{M}_k)}{\sum_{k=1}^K p(y \mid \mathrm{M}_k)p(\mathrm{M}_k)} \label{eq-bma-weight}
    \end{align}
for a model $\mathrm{M}_k \in \mathcal{M}$, or (4) stacking weights  \citep[][]{yao_using_2018}, defined via the optimisation problem\footnote{Bayesian model stacking optimises the averaged predictive distribution to provide the best predictive performance, which can produce quite different weights compared to BMA \citep[see more detailed discussion by][]{yao_using_2018, yao_bayesian_2021}{}.} 
    \begin{align}
        w^* = \argmax_{w_k} \frac{1}{n} \sum_{i = 1}^n \log \sum_{k=1}^K w_k \ p\left( y_i \mid y_{-i}, \mathrm{M}_k\right) \; \; \; \text{such that} \; \; \; w_k \geq 0, \; \sum_{k=1}^K w_k = 1 \label{eq-stack-weights}. 
    \end{align}
\begin{table}[htp]
    \centering
    {\small
    \begin{tblr}{l|l}
        Strategy & Details \\
        \hline
        (1) Bayes factor & Select $\mathrm{M}_1$ if $\text{BF}_{1,0} = \frac{p(\mathrm{M}_1 | y)}{p(\mathrm{M}_0 | y)} > 6$ \\
        (2) Predictive selection & Select model with the highest $\mathrm{elpd}_{\rm loo}$\\
        (3) Bayesian model averaging & Aggregate models with BMA weights \eqref{eq-bma-weight} \\
        (4) Bayesian model stacking & Aggregate models with stacking weights \eqref{eq-stack-weights} \\
    \end{tblr}
    }
    \caption{Illustrative example: Part 1. Different strategies for selecting or combining models.
    Additional results based on model probability, stacking weights, pseudo-BMA weights and LOO-BB weights can be found in Appendix  \ref{app:illustrative-example-part-1}.}
    \label{tab:illu-all-strategies}
\end{table}
\begin{figure}[tp]
    \centering
    \includegraphics[]{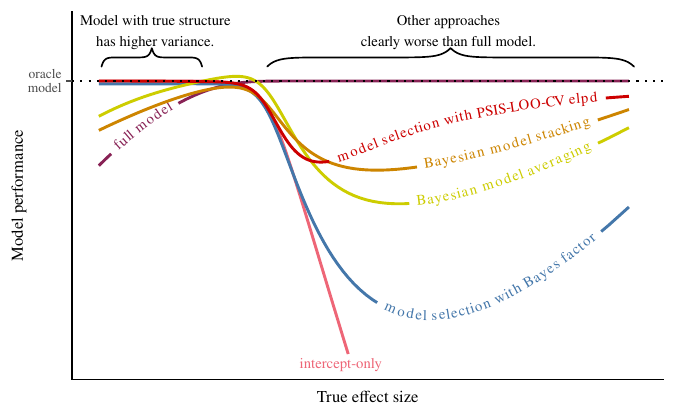}
    \vspace{-0.3cm}
    \caption{Illustrative example: Part 1. Schematic illustration of model performance relative to the oracle model for increasing true effect size in the finite-data regime, comparing the full (true-structure) model ($\mathrm{M}_1$), the intercept-only model ($\mathrm{M}_0$), Bayesian model averaging and stacking, as well as selecting a model with Bayes factor or PSIS $\mathrm{elpd}_{\text{loo}}$. 
    This is a simplified summary based on the results of simulated experiments in Appendix  \ref{app:illustrative-example-part-1}. 
    For small true effect sizes, the true-structure model can overfit, but, over the considered range, the full model has the best expected performance. 
    }\label{fig:illustration-oracle-model-comparison}
\end{figure}
For $n=\{20, 100\}$ and normal DGP, Figure \ref{fig:app-illustrative-example-dgp-normal-elpd-test-models-some-selection-aggregation-strategies} in Appendix \ref{app:illustrative-example-part-1} shows that differences in test performance to the oracle are $< 4$ for most strategies in this illustrative example.
We can still make some interesting observations, summarised in Figure \ref{fig:illustration-oracle-model-comparison}. 
In particular, we find a small range of true $\beta$ values where $\mathrm{M}_0$ and the true-structure model $\mathrm{M}_1$ perform so similar that BMA can slightly outperform either of them.
Importantly, $\mathrm{M}_1$ converges to oracle performance 
most consistently, in comparison to the strategies for selecting or aggregating models which all exhibit a dip in predictive performance after the performance of $\mathrm{M}_0$ drops. 
For example, model selection using Bayes factor $> 6$ is particularly prone to larger losses due to the implicit prior mass this strategy places on the zero effect \citep[see related demonstration on how Bayes factor requires bigger data size before switching to more complex model by][]{erven_catching_2012}.
Additional results for a Student-$t$ DGP (Figure \ref{fig:app-illustrative-example-dgp-stud-t-elpd-test-models-some-selection-aggregation-strategies}) as well as selection based on model probabilities or stacking, pseudo-BMA, and LOO-BB weights (Figures \ref{fig:app-illustrative-example-selection-with-model-prob}-\ref{fig:app-illustrative-example-selection-with-loo-bb-weights}) are in Appendix \ref{app:illustrative-example-part-1}.

\section{Properties of good priors}\label{sec:good-priors}
Given a true-structure model, a ``good'' prior puts prior mass close to the true $\theta_0$ and enables fast convergence of the posterior mass towards $\theta_0$.
There are no universally good priors, as choosing priors is always context-dependent \citep{ohagan_kendalls_2004, gelman_prior_2017}, but we can identify four desired properties
based on existing recommendations and discussions \citep{gelman_prior_2017, mikkola_prior_2024, simpson_penalising_2017}. 
\subsection{Intuition}\label{subsec:methods-intuition-prior-desiderata}
\begin{theo}[Desideratum 1: Proper priors]{desi:1}
    We want proper priors to make models (partially) generative before seeing data and enable well-defined prior predictive checks.
\end{theo}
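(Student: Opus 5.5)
Since this desideratum is a normative requirement rather than a theorem, the ``proof'' amounts to showing that properness is necessary and sufficient for the model to be (partially) generative in the sense of Table~\ref{tab:methods-terms}, and hence for prior predictive checks to be well defined.

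\textbf{Sufficiency.} I would start from the factorisation of the partially generative model, $p(\theta,y\mid X)=p(\theta)\,p(y\mid X,\theta)$. If $p(\theta)$ is proper and $p(y\mid X,\theta)$ is a normalised observation model for every $\theta$, then Tonelli's theorem gives
\begin{align*}
\int\!\!\int p(\theta)\,p(y\mid X,\theta)\,\mathrm{d}y\,\mathrm{d}\theta=\int p(\theta)\,\mathrm{d}\theta=1 .
\end{align*}
So the prior predictive $p(y\mid X)=\int p(y\mid X,\theta)\,p(\theta)\,\mathrm{d}\theta$ is a genuine probability density. Ancestral sampling, that is, drawing $\theta\sim p(\theta)$ and then $y\sim p(y\mid X,\theta)$, produces draws from it. This is exactly what a prior predictive check requires. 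The same argument with $p(X)$ included covers the fully generative case.

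\textbf{Necessity.} I would argue by contrapositive. If $\int p(\theta)\,\mathrm{d}\theta=\infty$, the computation above shows $\int p(y\mid X)\,\mathrm{d}y=\infty$, so $p(y\mid X)$ cannot be normalised and there is nothing to simulate from. A concrete instance makes this tangible: in the model $\mathrm{M}_1$ of Section~\ref{subsec:methods-illustrative-example-part-1}, take a flat prior on $\beta$. Then $p(y\mid X)$ is constant along the direction of $X$ in data space, so it is not integrable.

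\textbf{Link to model comparison.} An improper prior is defined only up to an arbitrary constant $c$. Through the marginal likelihood $p(y\mid \mathrm{M}_k)$, this constant propagates into the BMA weights \eqref{eq-bma-weight} and into the Bayes factor strategy of Table~\ref{tab:illu-all-strategies}. I would show that rescaling the improper prior by $c$ multiplies $\mathrm{BF}_{1,0}$ by $c$, making both selection and averaging arbitrary. This connects the desideratum to the selection strategies used in Section~\ref{sec:methods}.

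\textbf{Main obstacle.} The delicate point is a qualification rather than a calculation. Properness ensures the prior predictive exists, but not that it is sensible, since very diffuse proper priors can still place mass on implausible predictive values. This is why I would phrase the result as a minimal requirement, and the later desiderata are needed on top of it. I would also note that an improper prior can still yield a proper posterior, so posterior predictive quantities such as $\mathrm{elpd}$ may be defined even when Desideratum~\ref{desi:1} fails; what is lost is the prior predictive and the marginal likelihood.
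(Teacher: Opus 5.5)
Your argument is correct, but it does not reconstruct anything in the paper. There, Desideratum~\ref{desi:1} is stated as a normative modelling principle with no proof. It is supported only by citations to the prior-predictive-checking literature, and the text moves straight on to motivating Desideratum~\ref{desi:2}.

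What you add is a precise version of the implicit claim. Marginalisation preserves total mass (Tonelli), so $p(y\mid X)$ is a probability distribution exactly when $p(\theta)$ is proper, given that $p(y\mid X,\theta)$ is normalised for every $\theta$. Your translation-invariance example for a flat prior on $\beta$ in $\mathrm{M}_1$ is right, and so is the scaling $\mathrm{BF}_{1,0}\mapsto c\,\mathrm{BF}_{1,0}$.

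Two refinements are worth making. First, the Bayes-factor arbitrariness arises only for improper priors on parameters not shared by the compared models, as with $\beta$ here. A common improper prior on, for example, the intercept scales both marginal likelihoods by the same constant, which then cancels. Second, the same mass-preservation argument is what makes the induced signal distribution $q_d$ in Section~\ref{subsec:methods-formalisation-prior-desiderata} a probability measure. Without it, $\Var_{q_d}(\eta_d)$ in Definition~\ref{def:second-order-pc} would not be defined, and this is the role Desideratum~\ref{desi:1} actually plays later in the paper.

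The paper can skip the argument because its substantive content lies in Desiderata~\ref{desi:2}--\ref{desi:4}. Your closing caveat, that diffuse proper priors can still imply implausible predictives, correctly anticipates why those further desiderata are needed.
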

Good priors should provide a prior predictive distribution that generates data, which we would expect to observe \citep[][]{gelman_prior_2017, gelman_regression_2020}{}{}. 
We can investigate this using prior predictive checking and prior sensitivity analysis \citep{gelman_prior_2017, gelman_regression_2020, gabry_visualization_2019, kallioinen_detecting_2023} and by visualising joint quantities like a-priori Bayesian $R^2$ to reveal unintended joint implications \citep{aguilar_generalized_2025}.
\begin{theo}[Desideratum 2: Sensible prior predictives]{desi:2}
    We want to choose priors such that the model-generated data is sensible, that is, the prior predictive distributions reflect key characteristics of our modelling problem such as scale, variability or tail behaviour. 
\end{theo}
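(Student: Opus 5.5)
Desideratum~2 is a requirement rather than a theorem, so there is nothing to prove in the strict sense. What I would establish is that it can be turned into a checkable condition, and that the condition actually discriminates between priors in the settings of Section~\ref{subsec:methods-illustrative-example-part-1}. The plan has three steps.

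\textbf{Step 1: make ``sensible'' concrete.} I would fix a small set of summary functionals $T$ of the prior predictive $p(\tilde y \mid X)=\int p(\tilde y\mid X,\theta)\,p(\theta)\,\mathrm{d}\theta$. Natural choices are
\begin{itemize}
\item the prior predictive mean and variance, as a measure of scale and variability;
\item a pair of upper and lower quantiles, as a measure of tail behaviour.
\end{itemize}
I would then declare the prior sensible if each $T$ falls in a range elicited from domain knowledge about the outcome scale. Desideratum~1 (properness) guarantees that these functionals are well defined, and that the associated prior predictive checks are meaningful.

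\textbf{Step 2: compute the variance functional explicitly for the additive models.} For $\mu=\alpha+\sum_{j=1}^p\beta_j x_j$ with a normal observation model, I would apply the law of total variance conditional on $\theta$. With independent zero-mean priors and standardised covariates this gives
\begin{align*}
\mathrm{Var}(\tilde y)=\mathrm{Var}(\alpha)+\sum_{j=1}^p \mathrm{Var}(\beta_j)\,\mathrm{Var}(x_j)+\mathbb{E}[\sigma^2].
\end{align*}
This follows because the cross terms vanish under independence. If the covariates are correlated, the sum is replaced by the quadratic form $\mathbb{E}[\beta^\top\Sigma_x\beta]=\operatorname{tr}(\Sigma_x\Sigma_\beta)$.

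This formula is what makes the desideratum operational:
\begin{itemize}
\item It shows directly that fixed marginals such as $\beta_j\sim\normal(0,1)$ produce a predictive variance that grows linearly in $p$. Such a prior eventually violates any fixed elicited range, and so fails Desideratum~2 as complexity increases.
\item A scaling $\mathrm{Var}(\beta_j)\propto 1/p$, or more generally a joint prior with $\operatorname{tr}(\Sigma_x\Sigma_\beta)$ held fixed, keeps the functional stable.
\end{itemize}
For the tail functionals I would use simulation from the prior predictive, in the style of \citet{gabry_visualization_2019}, rather than closed forms.

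\textbf{Step 3: non-Gaussian links.} For models such as logistic regression, I would apply the same argument to the linear predictor $\eta$ rather than to $\tilde y$. I would then check that the implied distribution of $\mathrm{logit}^{-1}(\eta)$ does not pile up mass at $0$ and $1$ when $\mathrm{Var}(\eta)$ is large.

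\textbf{Main obstacle.} The hardest step is Step~1. Deciding which features are the ``key characteristics'', and what counts as a plausible range for them, is inherently context-dependent and cannot be derived mathematically. My first attempt would be to restrict attention to the second moment, since it is analytically tractable. Tail and joint checks would be deferred to the empirical prior predictive diagnostics discussed above, such as the a-priori Bayesian $R^2$ of \citet{aguilar_generalized_2025}. This also anticipates the formalisation through implied predictor variance that the paper announces for Section~\ref{subsec:methods-formalisation-prior-desiderata}.
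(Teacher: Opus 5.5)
You are right that there is nothing to prove. The paper offers no proof of Desideratum~2. It states it as a modelling requirement and says only that it can be assessed by prior predictive checks, prior sensitivity analysis, and joint summaries such as the a-priori Bayesian $R^2$. Its only formalisation, the variance of the induced signal in Definition~\ref{def:second-order-pc}, is attached to Desideratum~3.

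Your Step~2 is essentially the paper's decomposition \eqref{eq:ppv-decomposition}--\eqref{eq:ppv-standardised}, applied to $\tilde y$ instead of $\eta_d$. Your Step~3 mirrors the logistic argument in Section~\ref{subsec:ex-1-logistic}. Note, however, that the linear growth in $p$ you derive is a failure of Desideratum~3, not of Desideratum~2. For any fixed $p$, a $\normal(0,1)$ prior may well give sensible prior predictives. The paper deliberately keeps ``sensible for a given model'' separate from ``stays sensible as complexity grows''.

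One claim in your Step~1 is wrong. Properness guarantees that quantiles of the prior predictive exist, but not that its mean or variance are finite. Three examples:
\begin{itemize}
\item A half-Cauchy prior on $\sigma$ gives $\mathbb{E}[\sigma^2]=\infty$.
\item A Student-$t_\nu$ observation model with $\nu\le 2$ has infinite or undefined predictive variance under every prior.
\item The proper R2D2 prior has $\mathbb{E}[\tau^2]=\infty$ unless $(1-\mu_{R^2})\varphi_{R^2}>1$.
\end{itemize}
This is exactly why the paper puts its variance criterion on $\eta_d$ rather than on $\tilde y$, and notes in a footnote that infinite-variance processes need some other variability measure. Your formula includes $\mathbb{E}[\sigma^2]$, so any heavy-tailed noise prior would look equally bad whatever the coefficient prior.

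Two smaller points. First, with independent centred priors, correlation among covariates does not change $\sum_j \Var(\beta_j)\Var(x_j)$, because $\mathbb{E}[\beta_j\beta_k]=0$ kills the cross terms. The general trace form, together with the term $\mu_\beta^T\Sigma_x\mu_\beta$, matters only when the prior itself is dependent or non-centred. Second, bounded variance does not by itself make a prior sensible. The paper's Discussion notes that $p$-scaled normal priors bound the variance only by shrinking every coefficient strongly. Experiment~4 shows a joint R2D2 prior whose total signal variance is controlled, yet whose prior on the treatment effect narrows like $\tau^2/(p+1)$, which the paper calls a problematic prior predictive. So restricting to the second moment is a reasonable first check, but not a sufficient operationalisation of Desideratum~2.
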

Joint prior behaviour affects inferences \citep{gelman_prior_2017, mikkola_prior_2024}. 
When considering a collection of models, prior predictives should follow prior beliefs even as model complexity increases.
Predictively consistent priors, formally defined in Section \ref{subsec:methods-formalisation-prior-desiderata}, satisfy this logical consistency requirement, for example, by keeping the prior predictive variance constant or restricting it such that it will not explode as the model complexity grows.\footnote{In case of modelling processes with infinite variance some other variability measure would be used.}
\begin{theo}[Desideratum 3: Consistency when increasing model complexity]{desi:3}
    When adding model components or otherwise increasing model complexity, the prior predictives should stay sensible and aligned with prior beliefs. 
\end{theo}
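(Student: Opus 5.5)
Desideratum~\ref{desi:3} is a requirement rather than a theorem, so the plan is to make it precise and then give a construction that provably satisfies the precise version. I would use the additive predictor $\eta_{\mathrm{M}_k} = \alpha + \sum_{j=1}^{p_k} \beta_j x_j$ along the nested sequence $\mathrm{M}_1\subset\cdots\subset\mathrm{M}_K$.

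\textbf{Formalisation.} I would measure ``sensible and aligned'' by the second moment of the prior predictive of the linear predictor. The target is a sequence of priors $p_{\mathrm{M}_k}(\theta_{\mathrm{M}_k})$ such that $\operatorname{Var}_{p_{\mathrm{M}_k}}(\eta_{\mathrm{M}_k}) = \tau^2$ for every $k$, with $\tau^2$ fixed by the modeller's prior belief about the explainable signal scale. Examples of such a belief are a prior on Bayesian $R^2$ or on the outcome scale. This is the natural candidate for the ``second-order predictive consistency'' announced in Table~\ref{tab:methods-terms}. For non-Gaussian likelihoods, I would add only a statement that the variance of the prior predictive of $y$ stays bounded. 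Nesting via \eqref{eq:methods-nested} is needed so that every model's predictor lives on a common scale.

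\textbf{Negative benchmark.} I would first show that the default of independent marginal priors violates this requirement. Take $\beta_j\sim\normal(0,s^2)$ i.i.d.\ and standardised covariates with correlation matrix $\Sigma$. Because the $\beta_j$ are independent with mean zero, conditioning on $X$ gives
\begin{align*}
\operatorname{Var}(\eta_{\mathrm{M}_k}-\alpha) = \mathbb{E}\big[\beta^\top\Sigma\beta\big] = s^2\operatorname{tr}(\Sigma) = p_k s^2 .
\end{align*}
This grows linearly in the model size, which explains the drift in the implied prior of $R^2$ seen in the examples.

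\textbf{Construction satisfying the requirement.} I would then use a decomposition prior: $\beta_j\mid\phi,\tau\sim\normal(0,\tau^2\phi_j)$ with $\phi\sim\mathrm{Dirichlet}(a,\dots,a)$ on the $p_k$ components, as in R2D2-type priors. The same computation gives
\begin{align*}
\operatorname{Var}(\eta_{\mathrm{M}_k}-\alpha) = \tau^2\,\mathbb{E}\Big[\sum_j \phi_j\Sigma_{jj}\Big] = \tau^2 ,
\end{align*}
whatever $p_k$ and $\Sigma$ are, and whether $\tau^2$ is fixed or given a proper prior. The key point is that off-diagonal terms vanish because $\mathbb{E}[\beta_i\beta_j]=0$ for $i\ne j$. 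Since all components are proper, Desiderata~\ref{desi:1} and~\ref{desi:2} are preserved.

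For nonlinear terms I would apply the same bookkeeping: basis-function coefficients share a variance budget whose total is fixed. Examples are splines, or Gaussian process approximations with a fixed marginal variance.

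\textbf{Main obstacle.} I expect the hardest step to be justifying that constant variance is the right notion of ``aligned with prior beliefs''. Second moments can match while tails or the location of prior mass change. A Dirichlet concentration $a$ that is too small, for instance, concentrates mass on a few components. Under a nonlinear link, a constant variance of $\eta$ also does not give a constant prior predictive for $y$; a logistic model with large $\tau$ piles predictive mass near $0$ and $1$. I would handle this by stating the guarantee only at the level of the predictor variance, and by recommending prior predictive checks of further summaries on top of the variance condition.
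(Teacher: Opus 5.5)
Your route is the paper's: you measure the induced predictor variance under a standardised reference distribution for $x$, you use i.i.d.\ normal priors as the negative benchmark with $\Var = p_k s^2$, and you use a simplex-constrained (R2D2-type) decomposition prior as the positive construction. The positive construction, however, has a gap at the very point the paper treats as the crux. Once $\tau^2$ is random, the law of total variance gives $\Var(\eta_{\mathrm{M}_k}-\alpha)=\mathbb{E}[\tau^2\sum_j\phi_j\Sigma_{jj}]=\mathbb{E}[\tau^2]$, not $\tau^2$. A proper prior on $\tau^2$ need not have a finite mean.

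In the R2D2 parametrisation $\tau^2=\sigma^2R^2/(1-R^2)$ with $R^2\sim\mathrm{Beta}(\mu_{R^2},\varphi_{R^2})$, the ratio $R^2/(1-R^2)$ is beta-prime, with finite mean only when $(1-\mu_{R^2})\varphi_{R^2}>1$. If $(1-\mu_{R^2})\varphi_{R^2}\le 1$ (this includes the bounded-influence settings), or if you use a half-Cauchy global scale, every $\mathrm{M}_k$ has infinite predictor variance even though all components are proper. So your claim that the identity holds ``whether $\tau^2$ is fixed or given a proper prior'' is false. You need the explicit hypothesis $\mathbb{E}_{p_d}[\tau^2]<\infty$, and for R2D2 additionally $\mathbb{E}[\sigma^2]<\infty$. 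This is exactly the condition the paper isolates in Corollary~\ref{lem:gl-characterisation} and Section~\ref{subsubsec:methods-r2d2}.

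Second, requiring $\Var=\tau^2$ exactly for every $k$ is stricter than the paper's Definition~\ref{def:second-order-pc}, which asks only for $\sup_d\Var_{q_d}(\eta_d)\le C$ plus convergence to some $L$. This strictness contradicts your claim about nonlinear terms. For a truncated HSGP with fixed hyperparameters, $\Var(f_k(x)\mid\theta)=\sum_{j=1}^{k} S_\theta(\sqrt{\lambda_j})\,\phi_j(x)^2$ increases in $k$ and reaches $\sigma^2_{\mathrm{GP}}$ only as $k\to\infty$. It therefore fails your criterion while satisfying the paper's with $C=\mathbb{E}[\sigma^2_{\mathrm{GP}}]$; the sup-and-limit form is there precisely to admit such increasing-but-bounded, and also non-monotone, sequences.

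Finally, your proposed extra condition for non-Gaussian likelihoods, a bounded prior predictive variance of $y$, does not work. It is vacuous for binary data, since $\Var(y)\le 1/4$ under any prior, including the i.i.d.\ normal prior that Section~\ref{subsec:ex-1-logistic} shows to be pathological. It is unattainable for Student-$t_\nu$ with $\nu\le 2$. The paper therefore states the criterion on the latent $\eta_d$ for all observation models, which is where your closing paragraph ends up anyway.
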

Most often, we can only keep some properties like the first two moments of the predictive distribution similar.
For nested models, \citet{gelman_discussion_1995} recommend using larger models with priors favouring ``the region of the parameter space near the smaller model'' \citep[][p.131]{gelman_discussion_1995}. 
This is also the main principle for penalised complexity priors \citep{simpson_penalising_2017} and motivates a fourth desideratum. 
\begin{theo}[Desideratum 4: Fall back to parsimony]{desi:4} 
    For a more complex version of a simpler model, a good prior should be centred and place most mass on the simpler (base) model, especially when the predictive distribution changes with added model complexity.
\end{theo}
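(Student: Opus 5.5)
Since Desideratum~4 is a design requirement rather than a theorem, my plan is to turn it into a checkable statement and then show that a concrete construction meets it. Let $\mathrm{M}_1\subset\mathrm{M}_2$ be nested as in \eqref{eq:methods-nested}. I will say that a prior $p(\theta_{\mathrm{M}_2})=p(\theta_{\mathrm{M}_1})\,p(\theta_{\mathrm{M}_2\setminus\mathrm{M}_1}\mid\theta_{\mathrm{M}_1})$ \emph{falls back to parsimony} if the following three conditions hold.
\begin{enumerate}[label=(\roman*)]
\item The added parameters are centred at the base point: $\theta_{\mathrm{M}_2\setminus\mathrm{M}_1}=0$ is the prior median or mode.
\item For every $\varepsilon>0$, the mass $\Pr\!\left(\mathrm{KL}\!\left(p_{\mathrm{M}_2}(y\mid\theta)\,\Vert\,p_{\mathrm{M}_1}(y\mid\theta_{\mathrm{M}_1})\right)<\varepsilon\right)$ can be made at least any target level $1-\delta$ by a tuning constant.
\item This concentration does not disappear as further components are added.
\end{enumerate}
Condition (ii) measures closeness in predictive space. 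This is the only reading that respects the emphasis in the desideratum on cases where ``the predictive distribution changes''.

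The first step is to verify the identification at the base point. By \eqref{eq:methods-nested}, the distance $d(\theta)=\sqrt{2\,\mathrm{KL}(p_{\mathrm{M}_2}\Vert p_{\mathrm{M}_1})}$ vanishes exactly when $\theta_{\mathrm{M}_2\setminus\mathrm{M}_1}=0$. So if a prior on $d$ has its mode at $0$, the mode of the induced prior is the base model $\mathrm{M}_1$.

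Next, following the penalised-complexity construction of \citet{simpson_penalising_2017}, I will place $d\sim\expdist(\lambda)$. This gives $\Pr(d<u)=1-e^{-\lambda u}$. Choosing $\lambda=-\log\delta/u$ therefore yields (ii) directly, and (i) holds because the density of $d$ is maximised at $0$.

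For the additive Gaussian models used later in the paper, I will make the construction explicit. With $\mu=\alpha+\sum_j\beta_j x_j$ and standardised covariates, the divergence $\mathrm{KL}$ reduces to a quadratic form in $\beta_{\mathrm{M}_2\setminus\mathrm{M}_1}$ scaled by $\sigma^{-2}$. Hence a symmetric prior with scale $\tau\,\phi_j^{1/2}$ on the new coefficients gives (i) by symmetry. It gives (ii) by Chebyshev's inequality applied to the quadratic form, since its expectation is proportional to $\tau^2\sum_{j\notin\mathrm{M}_1}\phi_j$.

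The step I expect to be the main obstacle is (iii): ensuring that fall-back to parsimony survives as complexity grows, rather than being traded off against Desideratum~3. With independent priors of fixed scale, the expected divergence grows linearly in the number of added components, so (ii) fails for large models. My plan is to use an R2D2-type decomposition, $\phi\sim\mathrm{Dirichlet}(a,\dots,a)$ with a fixed total $\tau^2$. Then the predictor variance is held constant, which addresses Desideratum~3, while the share assigned to any newly added block is $\mathrm{Beta}$-distributed with mean shrinking in $K$. Taking $a<1$ should concentrate that share near zero, and bounding the Beta tail then gives (ii) uniformly in $K$. If this uniform bound proves too delicate, I would settle for showing that the expected divergence stays bounded in $K$, which still gives a $K$-independent Chebyshev bound.
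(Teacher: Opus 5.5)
Your three conditions are the penalised-complexity reading of the desideratum, and that is not how the paper formalises it. The paper gives no proof here. It encodes Desideratum~4 as predictive parsimony (Definition~\ref{def:pred-parsimony}): summability of $D_{\mathrm{KL}}(q_{d+1}\Vert q_d)$ between successive \emph{marginal} induced signal distributions. A different formalisation would be fine on its own, but step (iii) has a concrete gap because you switch between two different comparisons.

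\textbf{The two comparisons in (iii).}
Your claim that fixed-scale independent priors fail concerns a fixed base $\mathrm{M}_1$ against $\mathrm{M}_K$ with a growing number $K-1$ of added components. There $\mathbb{E}[\mathrm{KL}]=(K-1)\overline{\tau}^2/(2\sigma^2)$. Your R2D2 argument instead concerns a block of fixed size $b$ added to a growing model; only then is the block's Dirichlet share $\mathrm{Beta}(ba,(K-b)a)$, with mean $b/K\to 0$. Under the fixed-base comparison with a one-coefficient base, the added block's share is $\sum_{j\geq 2}\phi_j\sim\mathrm{Beta}((K-1)a,a)$, with mean $(K-1)/K\to 1$. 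So the Beta-tail route gives nothing uniform in $K$, for any $a$.

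Under the successive comparison, by contrast, the independent prior does not fail. Because $\theta_{\mathrm{M}_K}$ is shared, $\mathrm{KL}(p_{\mathrm{M}_{K+1}}\Vert p_{\mathrm{M}_K})=\beta_{K+1}^2/(2\sigma^2)$ for standardised covariates. Its law does not depend on $K$, so one $K$-independent choice of $\overline{\tau}$ satisfies (i)--(iii). The restriction $a<1$ is also unnecessary there. The inequality you are actually using is Markov's, not Chebyshev's, and with $\mathbb{E}[\phi_{K+1}]=1/(K+1)$ it works for every $a>0$.

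\textbf{What the fallback actually shows.}
Under the fixed-base reading, the argument that works is your fallback. Uniformly in $K$, $\mathbb{E}[\mathrm{KL}(p_{\mathrm{M}_K}\Vert p_{\mathrm{M}_1})]\leq \tfrac12\mathbb{E}[\tau^2/\sigma^2]$, and Markov gives (ii) once this quantity is tuned below $\varepsilon\delta$. That is correct, but it uses only a uniform bound on prior signal variance, i.e.\ second-order predictive consistency (Definition~\ref{def:second-order-pc}) with a small $C$. It makes the base model likely by shrinking the whole signal, not by favouring the base within a fixed budget; indeed the base's own share $\phi_1$ also vanishes.

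Under your definition, then, Desideratum~4 becomes a tuned corollary of Desideratum~3. The paper deliberately keeps the two apart. Lemma~\ref{lem:pc-pp-relationship} shows parsimony and second-order consistency are logically independent. In particular, the independent normal prior, with $V_d=d\overline{\tau}^2$ and $D_{\mathrm{KL}}(q_{d+1}\Vert q_d)\approx 1/(4d^2)$, \emph{is} parsimonious. You need to fix what (iii) compares. The successive reading matches the paper's verdict on independent priors; the fixed-base reading is really Desideratum~3.

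\textbf{Two smaller points.}
First, \eqref{eq:methods-nested} only gives that $d(\theta)$ vanishes \emph{when} $\theta_{\mathrm{M}_2\setminus\mathrm{M}_1}=0$, not \emph{only} then; the converse needs identifiability, e.g.\ non-degenerate covariates. Second, a mode of $d$ at $0$ becomes a mode of the induced prior at the base only if the Jacobian cooperates. For a scalar parameter $\xi$ the induced density is $f_d(d(\xi))\,|d'(\xi)|$, which vanishes at the base when $d'(0)=0$. This is harmless for Gaussian additive models, where $d$ is a norm, but not automatic in general.
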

Parsimony is complementary to, and weaker than, second-order predictive consistency (see Section \ref{subsec:methods-formalisation-prior-desiderata}), but still a useful principle for setting priors on less intuitive parameters such as the degrees of freedom of the Student-$t$-distribution. 
While predictive consistency often implies parsimony, exceptions are discussed in Lemma \ref{lem:pc-pp-relationship}.

\subsection{Formalisation}\label{subsec:methods-formalisation-prior-desiderata}
We now formalise Desiderata \ref{desi:3} and \ref{desi:4}.
Consider a sequence of nested models $\{\mathrm{M}_d\}_{d \in \mathbb{N}}$ with increasing complexity.
The complexity index $d$ may refer to number of predictors, varying-effect terms, or group levels.
Each model $\mathrm{M}_d$ has the following hierarchical structure:
\begin{enumerate}[nosep]
    \item \emph{Hyperparameters:} $(\zeta_d, \psi) \sim p(\zeta_d, \psi)$, where $\zeta_d$ are prior hyperparameters for $\theta_d$ (e.g., coefficient scales, group-level variances), and $\psi$ collects likelihood parameters outside any latent hierarchy on $\theta_d$ (e.g., residual variance, dispersion, degrees of freedom).
    \item \emph{Model parameters:} $\theta_d \mid \zeta_d \sim p(\theta_d \mid \zeta_d)$, using the first $d$ model components, including latent group-level effects.
    \item \emph{Systematic component:} $\eta_d := f_d(x_d; \theta_d)$, a deterministic function of $\theta_d$ and inputs $x_d$, including group-level design matrices.
    \item \emph{Observation model:} $y \mid \eta_d, \psi \sim p(y \mid \eta_d, \psi)$, where only $\eta_d$ depends on $\theta_d$ and $\psi$ are shared across all models.
\end{enumerate}
\vspace{0.15cm}
The joint prior is $p_d(\theta_d, \zeta_d, \psi) = p(\theta_d \mid \zeta_d)\,p(\zeta_d, \psi)$.
Thus, the $d$-indexed hyperparameters $\zeta_d$ affect $y$ only through the path $\zeta_d \to \theta_d \to \eta_d \to y$. 
Let $x_d \sim p_d(x_d)$ be a single predictor observation from a reference distribution such that $x_d$ is independent of parameters ($x_d \perp (\theta_d, \zeta_d, \psi) \ \forall d$) with $\mathbb{E}[x_d]=0$ and unit marginal variance.
The \emph{induced signal distribution} $q_d(\eta_d)$ is the law of the scalar systematic component $\eta_d = f_d(x_d;\theta_d)$, obtained as the pushforward of the joint prior $p(x_d,\theta_d,\zeta_d)$ under $f_d$,
\begin{equation}
    q_d(\eta_d) = \int \mathbb{I}(\eta_d = f_d(x_d;\theta_d)) \, p(x_d) \, p(\theta_d, \zeta_d) \, d x_d \, d \theta_d \, d \zeta_d,
\end{equation}
where $p_d(\theta_d, \zeta_d)$ is the marginal prior of $(\theta_d, \zeta_d)$ obtained from $p_d(\theta_d, \zeta_d, \psi)$ by integrating out $\psi$.
\begin{definition}[Second-order predictive consistency]\label{def:second-order-pc}
A prior sequence $\{p_d\}_{d \in \mathbb{N}}$ is \emph{second-order predictively consistent} if
\begin{enumerate}[(i)]
    \item the prior predictive variance of the systematic component is globally bounded:
    \begin{equation}
        \sup_{d \in \mathbb{N}}\; \Var_{q_d}(\eta_d) \leq C \label{eq:pc-variance-bound}
    \end{equation}
    for some constant $C \in \left[0, \infty\right)$, and
    \item the variance converges to a finite asymptotic limit, $L \in \left[0,\infty\right)$:
    \begin{equation}
        \lim_{d \to \infty} \Var_{q_d}(\eta_d) = L. \label{eq:pc-variance-limit}
    \end{equation}
\end{enumerate}
\end{definition}
Definition \ref{def:second-order-pc} uses the distribution over $\eta_d$ rather than the prior predictive over $y$, since observation models with infinite or undefined marginal variance (e.g., Student-$t_{\nu}$ with $\nu \leq 2$) violate the bound on global boundedness regardless of the prior, while for bounded responses (e.g., binary data) the prior predictive variance is trivially bounded and finite. 
The supremum condition \eqref{eq:pc-variance-bound} allows for non-monotone variance sequences and ensures global boundedness. 
Non-monotonicity can arise when the signal aggregates many components, for example, in Bayesian neural networks \citep{lampinen_bayesian_2001}, or when parameters are negatively correlated.
Note that $C$ may be potentially zero for point priors.
The limit condition \eqref{eq:pc-variance-limit} ensures that the variance stabilises to some $L$, which, due to potential non-monotony, may differ from $C$.\footnote{We acknowledge that the limit condition \eqref{eq:pc-variance-limit} implies a finite global variance. 
However, we would like to highlight explicitly that the variance sequence should be well-behaved and that the global and limiting variance may be two different values.}
Together, this formalises Desideratum \ref{desi:3}: the prior predictive distribution stays sensible as model complexity grows precisely when the signal variance remains bounded and converges.

In Desideratum \ref{desi:4}, parsimony is complementary to, and usually weaker than, second-order predictive consistency: it does not require bounded variance, but controls how much the induced signal distribution changes across model sizes.
A natural formalisation is to require that the induced signal distributions do not diverge from one another as $d\rightarrow \infty$ and that their pairwise divergences accumulate to a finite sum.
\begin{definition}[Predictive parsimony]\label{def:pred-parsimony}
Assume $q_{d+1} \ll q_d$ for all $d \in \mathbb{N}$. 
A prior sequence $\{p_d\}_{d \in \mathbb{N}}$ satisfies \emph{predictive parsimony} if the KL divergence between successive induced signal distributions is summable:
\begin{equation}
    \sum_{d=1}^{\infty} D_{\mathrm{KL}}\!\left(q_{d+1} \,\|\, q_d\right) < \infty. \label{eq:kl-parsimony}
\end{equation}
\end{definition}
The asymmetric direction $D_{\mathrm{KL}}(q_{d+1}\|q_d)$ encodes a preference for the simpler model by penalising mass placed by $q_{d+1}$ where $q_d$ has zero support.
The absolute-continuity assumption $q_{d+1} \ll q_d$ ensures each summand is finite.
For Gaussian signal families $q_d = \normal(0, V_d)$, the KL divergence between successive models conditional on $V$ is
\begin{equation}
    D_{\mathrm{KL}}(q_{d+1} \| q_d) = \frac{1}{2}\left[\frac{V_{d+1}}{V_d} - 1 - \log\frac{V_{d+1}}{V_d}\right], \label{eq:kl-gaussian}
\end{equation}
which depends only on the variance ratio $V_{d+1}/V_d$.

\begin{lemma}[Relationship between predictive consistency and parsimony]\label{lem:pc-pp-relationship}
Let $q_d = \normal(0, V_d)$ with $V_d > 0$ for all $d$. Write $\mathrm{S}$ for sequences satisfying second-order predictive consistency (Definition \ref{def:second-order-pc}) and $\mathrm{F}$ for those satisfying predictive parsimony (Definition \ref{def:pred-parsimony}). Then:
\begin{enumerate}[(i)]
    \item $\mathrm{S} \not\subseteq \mathrm{F}$: bounded and convergent variance need not imply parsimony. For example, the oscillating but convergent series $|V_{d+1} - V_d| = \Omega(d^{-1/2})$ does not fulfil parsimony.
    \item $\mathrm{F} \not\subseteq \mathrm{S}$: parsimony does not necessarily imply bounded variance. For example, the independent normal prior ($V_d = d\overline{\tau}^2$) is in $\mathrm{F}$ but not $\mathrm{S}$.
    \item $\mathrm{S} \cap \mathrm{F}$: a prior sequence is in both $\mathrm{S}$ and $\mathrm{F}$ if and only if $V_d$ is bounded, convergent, $\inf_d V_d > 0$, and $\sum_{d=1}^{\infty}(V_{d+1} - V_d)^2 < \infty$. Monotone convergence to a strictly positive limit is a sufficient condition.
\end{enumerate}
\end{lemma}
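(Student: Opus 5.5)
The whole argument rests on the Gaussian formula \eqref{eq:kl-gaussian}. Write $r_d = V_{d+1}/V_d$ and $g(r) = \tfrac12(r-1-\log r)$, so that predictive parsimony means $\sum_d g(r_d)<\infty$. The key elementary fact is how $g$ behaves in two regimes. Near $r=1$ it is quadratic: $g(r) = \tfrac14 (r-1)^2 + O(|r-1|^3)$, and on any compact interval $[a,b]\ni 1$ with $a>0$ we have two-sided bounds $c_1 (r-1)^2 \le g(r) \le c_2 (r-1)^2$. Away from $1$ it is large: $g(r)\to\infty$ as $r\to 0$ or $r\to\infty$, and $g$ is bounded below by a positive constant outside any neighbourhood of $1$. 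I would prove this comparison lemma first, by Taylor expansion plus continuity and positivity of $g(r)/(r-1)^2$ on $(0,\infty)\setminus\{1\}$. Parts (i)–(iii) then follow from it.

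For (ii), take $V_d = d\bar\tau^2$, so $r_d = 1+1/d$ and $g(r_d)\sim \tfrac14 d^{-2}$. This series is summable, hence the sequence lies in $\mathrm{F}$. But $V_d\to\infty$, which violates \eqref{eq:pc-variance-bound}, so it does not lie in $\mathrm{S}$.

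For (i), I would build an oscillating convergent sequence, for instance $V_d = L + (-1)^d d^{-1/2}$ with $L>1$ so that $V_d>0$. It is bounded and tends to $L$, so it lies in $\mathrm{S}$. Its increments satisfy $|V_{d+1}-V_d| \ge d^{-1/2}$ for large $d$ (roughly $2d^{-1/2}$). Since $r_d\to 1$, the comparison lemma gives $g(r_d)\ge c_1 (V_{d+1}-V_d)^2/V_d^2 \gtrsim d^{-1}$. The harmonic series diverges, so the sequence is not in $\mathrm{F}$. Stated abstractly: any sequence in $\mathrm{S}$ with positive limit and $|V_{d+1}-V_d|=\Omega(d^{-1/2})$ fails parsimony.

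For (iii), which I expect to be the main obstacle, I would argue both directions.

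For $\Leftarrow$: boundedness gives $\sup_d V_d\le C$, and $\inf_d V_d=m>0$, so every ratio satisfies $r_d\in[m/C, C/m]$, a compact subset of $(0,\infty)$. On that interval the comparison lemma gives $g(r_d)\le c_2 (V_{d+1}-V_d)^2/m^2$, which is summable. Convergence is assumed, so the sequence is in $\mathrm{S}$ as well.

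For $\Rightarrow$: membership in $\mathrm{S}$ gives boundedness and convergence directly. The delicate point is showing $\inf_d V_d>0$. If the infimum were $0$, then since every $V_d>0$ the limit $L$ would have to be $0$. I would then try to derive a contradiction from summability, e.g. via $\sum_d \log r_d = \log(V_{D}/V_1)\to-\infty$, using the fact that $g(r)\ge c\,|\log r|$ whenever $r$ is far from $1$.

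I suspect this direction can fail. Take $V_d = 1/d$: here $r_d\to1$ and $g(r_d)\sim \tfrac14 d^{-2}$ is summable, yet $V_d\to 0$. The sequence is bounded and convergent with $L=0$ permitted by \eqref{eq:pc-variance-limit}, so it lies in $\mathrm{S}\cap\mathrm{F}$ while $\inf_d V_d=0$. The honest plan is therefore to prove the characterisation under the explicit assumption $L>0$ (equivalently $\inf_d V_d>0$, given positivity). Under that assumption, $r_d\to1$ follows, and the lower comparison bound $g(r_d)\ge c_1(V_{d+1}-V_d)^2/C^2$ yields $\sum_d (V_{d+1}-V_d)^2<\infty$.

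Finally, the sufficient condition in (iii). Monotone convergence to $L>0$ gives $\inf_d V_d=\min(V_1,L)>0$. It also gives
\[
\sum_d (V_{d+1}-V_d)^2 \le \sup_d|V_{d+1}-V_d|\,\sum_d|V_{d+1}-V_d| = \sup_d|V_{d+1}-V_d|\,|L-V_1|<\infty,
\]
because the increments all have one sign and telescope. Membership in $\mathrm{S}\cap\mathrm{F}$ then follows from the $\Leftarrow$ direction.
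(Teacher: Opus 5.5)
Your argument follows the paper's proof. The paper reduces everything to the same two-sided comparison: it writes $D_{\mathrm{KL}}$ via $h(r)=[r-\log(1+r)]/r^2$ in the relative increment and bounds it by continuity and positivity on a compact set. It also uses $V_d=d\overline{\tau}^2$ for (ii), an alternating sequence for (i), and the same telescoping bound for the monotone case. For (i) the paper takes $V_d=L+(-1)^d d^{-1/3}$; your exponent $1/2$ works equally well, matches the rate in the statement, and sits at the sharpness threshold of Proposition~\ref{prop:non-monotonic-parsimony}(ii).

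Your objection to the ``only if'' direction of (iii) is correct. Take $V_d=1/d$: it satisfies $\sup_d V_d=1$ and has limit $L=0$, which \eqref{eq:pc-variance-limit} allows. It also satisfies $D_{\mathrm{KL}}(q_{d+1}\|q_d)=\tfrac14 d^{-2}+O(d^{-3})$. So it lies in $\mathrm{S}\cap\mathrm{F}$ with $\inf_d V_d=0$, and the lemma as stated is false. The paper's proof never confronts this. It opens (iii) with ``given $0<V_{\min}\le V_d\le C$'', so it proves exactly the restricted version you propose, with $L>0$, equivalently $\inf_d V_d>0$.

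If you want an unconditional statement, your comparison lemma already gives it. Because $g$ is bounded away from $0$ outside any neighbourhood of $1$, $\sum_d g(r_d)<\infty$ forces $r_d\to 1$. Hence $\mathrm{F}$ holds iff $\sum_d (V_{d+1}/V_d-1)^2<\infty$. So $\mathrm{S}\cap\mathrm{F}$ is characterised by boundedness, convergence and square-summable \emph{relative} increments, which reduces to the paper's condition when $L>0$.

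Two side remarks follow. Square-summability of the absolute increments is still necessary when $L=0$, since $(V_{d+1}-V_d)^2\le C^2(r_d-1)^2$; only the clause $\inf_d V_d>0$ fails. That clause also cannot simply be dropped from the sufficient direction: $V_d=e^{-d}$ is bounded and convergent with summable squared increments, but its ratio is constantly $e^{-1}$, so it is not in $\mathrm{F}$.
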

The proof is given in Appendix \ref{app:non-monotonic-parsimony}.
A key consequence is that predictive parsimony alone is too weak to exclude independent normal priors: each added predictor changes the signal distribution only marginally ($\overline{\tau}^2$ out of $d\overline{\tau}^2$, giving $D_{\mathrm{KL}}(q_{d+1}\|q_d) \approx 1/(4d^2)$, 
summable), yet the variance increments are multiples of $d$, so $V_d = d\overline{\tau}^2$ diverges for any $\overline{\tau}>0$.

\subsubsection{Predictive consistency and parsimony for global-local priors}
For the linear model \eqref{eq-regression-setup}, the widely used class of global-local scale-mixture priors \citep[][]{polson_shrink_2011} takes the conditional form $\theta_j \mid \zeta_d \sim \normal(0, \tau^2 \phi_j)$, where $\zeta_d = (\tau^2, \phi_1, \ldots, \phi_d)$ contains the global scale $\tau^2$ and local weights $\phi_j \geq 0$. 
Typically, heavy-tailed priors are chosen for the local weights and global scale to simultaneously shrink noise heavily to zero and only weakly regularise signals.
For centred priors and a standardised predictor reference distribution, the decomposition under global local priors becomes:
\begin{equation}
    \Var_{q_d}(\eta_d) = \mathbb{E}_{p_d}\!\left[\tau^2 \sum_{j=1}^{d} \phi_j\right]. \label{eq:ppv-global-local}
\end{equation}
This yields a necessary and sufficient condition for predictive consistency:

\begin{corollary}[Characterisation of predictive consistency for global-local priors]\label{lem:gl-characterisation}
Consider the linear model \eqref{eq-regression-setup} with standardised predictor reference distribution and a centred global-local prior with $\theta_j \mid \zeta_d \sim \normal(0, \tau^2 \phi_j)$.
The prior sequence is second-order predictively consistent (Definition  \ref{def:second-order-pc}) if and only if
\begin{equation}
    \sup_{d \in \mathbb{N}} \;\; \mathbb{E}_{p_d}\!\left[\tau^2 \sum_{j=1}^{d} \phi_j\right] < \infty \quad \text{and} \quad \lim_{d \to \infty} \mathbb{E}_{p_d}\!\left[\tau^2 \sum_{j=1}^{d} \phi_j\right] \;\text{exists}. \label{eq:gl-condition}
\end{equation}
\end{corollary}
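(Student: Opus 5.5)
The plan is to derive the variance identity \eqref{eq:ppv-global-local} from first principles, and then to read off the corollary directly from Definition \ref{def:second-order-pc}. Take the linear systematic component $\eta_d = \sum_{j=1}^d x_{j}\theta_j$ (an intercept, if present, is handled either as a separate term with fixed variance or as part of $\psi$). Here $x_d=(x_1,\dots,x_d)$ is drawn from the standardised reference distribution, independent of $(\theta_d,\zeta_d,\psi)$, with $\mathbb{E}[x_j]=0$ and $\mathbb{E}[x_j^2]=1$.

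First I compute the conditional moments given $\zeta_d$ and use the laws of total expectation and total variance. Because the prior is centred, $\mathbb{E}[\theta_j\mid\zeta_d]=0$, so $\mathbb{E}[\eta_d]=0$ and
\[
\Var_{q_d}(\eta_d)=\mathbb{E}[\eta_d^2]=\sum_{j,k}\mathbb{E}[x_jx_k]\,\mathbb{E}[\theta_j\theta_k].
\]
The global-local prior makes the $\theta_j$ conditionally independent given $\zeta_d$ with variance $\tau^2\phi_j$. Hence $\mathbb{E}[\theta_j\theta_k\mid\zeta_d]=\delta_{jk}\tau^2\phi_j$, and the towering step gives $\mathbb{E}[\theta_j\theta_k]=\delta_{jk}\,\mathbb{E}[\tau^2\phi_j]$. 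The off-diagonal terms vanish regardless of any correlation among the predictors, so only $\mathbb{E}[x_j^2]=1$ enters. This yields
\[
\Var_{q_d}(\eta_d)=\mathbb{E}_{p_d}\Bigl[\tau^2\sum_{j=1}^d\phi_j\Bigr].
\]

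Next I substitute $V_d:=\mathbb{E}_{p_d}[\tau^2\sum_j\phi_j]$ into Definition \ref{def:second-order-pc}. Condition \eqref{eq:pc-variance-bound} becomes $\sup_d V_d<\infty$, and condition \eqref{eq:pc-variance-limit} becomes existence of a finite $\lim_d V_d$. Both directions of the equivalence follow at once, because $V_d\ge0$ and a limit that exists under a finite supremum is automatically finite and lies in $[0,\infty)$. I would also note that the limit condition by itself already implies the supremum condition, since a convergent sequence of finite terms is bounded. I would state the two conditions together anyway, mirroring the definition.

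The main obstacle is justifying the moment manipulations when some hyperprior moments could be infinite, for example with heavy-tailed half-Cauchy priors on $\tau$. My approach is to work throughout with nonnegative quantities. Tonelli's theorem justifies the interchange for $\mathbb{E}[\eta_d^2]$, which is a nonnegative integrand. The cross terms need $\mathbb{E}|\theta_j\theta_k|<\infty$. When $V_d=\infty$ both sides of the identity are infinite and the equivalence still holds trivially, because the sequence then violates the bound. When $V_d<\infty$, Cauchy–Schwarz gives $\mathbb{E}|\theta_j\theta_k|\le(\mathbb{E}\theta_j^2\,\mathbb{E}\theta_k^2)^{1/2}<\infty$, so Fubini applies and the cross terms vanish as claimed.
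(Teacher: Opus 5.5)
Your proposal is correct and follows the paper's route. The paper's proof reads both conditions of Definition~\ref{def:second-order-pc} directly off the identity \eqref{eq:ppv-global-local}, which it states without derivation; you additionally derive that identity from conditional independence given $\zeta_d$ and the standardised reference distribution, and you treat integrability, which the paper omits. The case $V_d=\infty$, which you only assert, follows cleanly by conditioning on $(x_d,\zeta_d)$: $\mathbb{E}[\eta_d^2\mid x_d,\zeta_d]=\tau^2\sum_{j}x_j^2\phi_j\ge 0$, and Tonelli then gives $\mathbb{E}[\eta_d^2]=V_d$ in $[0,\infty]$ with no cross-term bookkeeping.
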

\paragraph{Proof.} Both conditions are direct translations of \eqref{eq:pc-variance-bound} and \eqref{eq:pc-variance-limit} via the identity \eqref{eq:ppv-global-local}. \hfill $\square$

For any global-local prior with local weights satisfying $\sum_j \phi_j = 1$, Corollary \ref{lem:gl-characterisation} reduces second order predictive consistency to requiring $\mathbb{E}_{p_d}[\tau^2] < \infty$, independently of model size.
Some global-local priors respect this condition, others can be set up to be predictively consistent, see Appendix \ref{app:section-examples-pred-consistent-priors-other-global-local} for more details. 

\paragraph{Parsimony for global-local priors.} When the local weights are constrained to the simplex, that is $\sum_j\phi_j=1$,
and the prior on $\tau^2$ does not depend on $d$, the conditional variance $\Var(\eta_d \mid \tau^2,\phi) = \tau^2$ is independent of $d$. 
With a Gaussian conditional prior on $\theta_d$, the variance sequence $V_d$ entering Lemma \ref{lem:pc-pp-relationship} is constant in $d$, so $D_{\mathrm{KL}}(q_{d+1}\|q_d) = 0$ in the variance-only view. 
The full mixture distributions $q_d$ still depend weakly on $d$ through the law of $\tau^2 \sum_j \phi_j x_{d,j}^2$, but this dependence vanishes 
as $d \to \infty$ (e.g., by concentration of $\sum_j \phi_j x_{d,j}^2$ around its mean when $\phi_j = 1/d$). 
For unconstrained global-local priors like the horseshoe prior \citep{carvalho_horseshoe_2010}, parsimony additionally requires square-summable relative increments of $S_d := \mathbb{E}_{p_d}[\tau^2 \sum_j \phi_j]$, which must be verified case by case.

\paragraph{Implications of Desiderata~3-4 on shrinkage}
In the context of global-local priors, shrinkage factors $\kappa_j = (1 + \sigma^{-2}\tau^2\phi_j)^{-1}$ are routinely analysed in order to characterise the properties of the chosen hyper-priors on the implied distribution of effective non-zero coefficients defined as $m_{\text{eff}, d} = \sum_{j=1}^d (1 - \kappa_j)$ \citep{piironen_sparsity_2017}\footnote{Whenever the observation model is not normal, one may use the approximating variance $\tilde{\sigma}^2$ as defined in \citet{piironen_sparsity_2017}}. 
Because $\frac{x}{\sigma^2 + x} \le \frac{x}{\sigma^2}$ for any $x \ge 0$, we can establish a strict upper bound on the expected effective number of parameters, conditional on $\sigma^2$:
\begin{equation}
\mathbb{E}_{p_d}[m_{\text{eff}, d}] = \mathbb{E}_{p_d} \left[ \sum_{j=1}^d \frac{\tau^2 \phi_j}{\sigma^2 + \tau^2 \phi_j} \right] \le \mathbb{E}_{p_d} \left[ \sum_{j=1}^d \frac{\tau^2 \phi_j}{\sigma^2} \right] = \frac{1}{\sigma^2} \mathbb{E}_{p_d} \left[ \tau^2 \sum_{j=1}^d \phi_j \right] \leq C/\sigma^2.
\end{equation}
However, predictive parsimony, alone does not guarantee a bound on the effective number of non-zero coefficients, since under the isotropic normal prior with fixed variance $\overline{\tau}^2$: 
\begin{equation}
m_{\text{eff}, d} = \sum_{j=1}^d \left( \frac{\bar{\tau}^2}{\sigma^2 + \bar{\tau}^2} \right) = d \left( \frac{\bar{\tau}^2}{\sigma^2 + \bar{\tau}^2} \right).
\end{equation}
As $d \to \infty$, the effective model size $m_{\text{eff}, d} \to \infty$.

\paragraph{Generality beyond additive normal models.}
Since Definitions \ref{def:second-order-pc} and \ref{def:pred-parsimony} are stated in terms of the induced signal distribution $q_d$ and its variance $\Var_{q_d}(\eta_d)$, they apply not only to additive normal regression but to any model for which this quantity is well-defined, including models where the role of $\theta_d$ may be taken by an infinite-dimensional latent stochastic process, such as in a Gaussian process.
In such cases, the definitions still provide a verifiable criterion and can be evaluated via prior predictive simulations.

\subsection{Illustrative example: Part 2}\label{subsec:good-priors-illustrative-example-part-2}
We extend the normal model from Section \ref{subsec:methods-illustrative-example-part-1} and write it more generally as
\begin{equation}
    \begin{aligned}
        y \sim  \normal\left(f(x_d; \theta_d), \sigma^2\right) \; \; \text{with} \; \; \theta_d, \sigma^2  \sim p_d(\theta_d, \sigma^2),
    \end{aligned}\label{eq-regression-setup}
\end{equation}
where $f(x_d; \theta_d) = x_d \, \theta_d$ is the linear predictor over the first $d$ covariates. In the regression examples below, $\theta_d$ collects the regression coefficients written $\beta$, with $j$-th entry $\beta_j$, so that $\theta_d \equiv \beta_{1:d}$.
We assume the moments $\mu_{\theta,d} = \mathbb{E}_{p_d}[\theta_d]$ and $\Sigma_{\theta,d} = \mathrm{Cov}_{p_d}(\theta_d)$ exist and are finite.
Denoting $\Sigma_{x,d} = \mathrm{Cov}_{p_d}(x_d)$, the signal variance is then given by
\begin{equation}
    \Var_{q_d}(\eta_d) = \mathrm{tr}\!\left(\Sigma_{x,d}\Sigma_{\theta,d}\right) + \mu_{\theta,d}^T \Sigma_{x,d} \mu_{\theta,d}. \label{eq:ppv-decomposition}
\end{equation}
If, in addition, the prior is centred so that $\mu_{\theta,d}=0$ and either $\Sigma_{x,d}$ or $\Sigma_{\theta,d}$ are diagonal, then \eqref{eq:ppv-decomposition} reduces to
\begin{equation}
    \Var_{q_d}(\eta_d) = \mathrm{tr}\!\left(\Sigma_{\theta,d}\right) = \sum_{j=1}^{d} \Var_{p_d}(\theta_j). \label{eq:ppv-standardised}
\end{equation}
For independent normal priors $\theta_d \sim \normal(0, \overline{\tau}^2)$, with fixed value $\overline{\tau}^2 > 0$ (i.e., $\zeta_d$ degenerate at $\overline{\tau}^2$), we have $\Var_{q_d}(\eta_d) = d\overline{\tau}^2$, which diverges as $d \to \infty$, thus violating the predictive consistency requirement \eqref{eq:pc-variance-bound}.

\section{Examples of predictively consistent priors}\label{sec:examples-pred-consistent-priors}
We focus on two examples for predictively consistent priors, introduced below. 
Other ways to specify predictively consistent priors were not investigated here but could provide interesting insights (see, e.g., examples listed in Section \ref{subsec:intro-previous-work} 
and Corollary \ref{lem:gl-characterisation} in Section \ref{subsec:methods-formalisation-prior-desiderata}). 

\subsection{$R^2$-induced Dirichlet decomposition (R2D2) priors}\label{subsubsec:methods-r2d2}
The experiments in Section \ref{subsec:examples-pred-consistent-priors-illustrative-example-part-3} and \ref{subsec:ex-1-logistic} to \ref{subsec:ex-2-forward-search} use R2D2 priors \citep{zhang_bayesian_2022} that set a prior on $R^2$, the proportion of variance in the outcome explained by the model.
For a linear regression model $y_i = \alpha + x_i^T \beta + \epsilon_i$ with $i = 1, \cdots n$ observations, $\epsilon_i \sim \normal(0, \sigma^2)$, and coefficient vector $\beta = (\beta_1, \cdots, \beta_p)^T$, following \citet{gelman_r-squared_2019}, \textit{realised} Bayesian $R^2$ is given by
\begin{align}
    R^2 &:= \frac{\Var\left( x^T \beta \right)}{\Var\left( x^T \beta \right) + \sigma^2}.\label{eq:def-bayesian-r2-gelman}
\end{align} 
When marginalising out $X$, this reduces to $\tau^2 / (\tau^2 + \sigma^2)$ and the R2D2 prior is then defined as:  
\begin{equation}
   \begin{aligned}
        R^2 &\sim \betadist(\mu_{R^2}, \varphi_{R^2}) \text{ and } 
        \tau^2 = \sigma^2\frac{R^2}{1 - R^2} \\
        \phi_j &\sim \Dirichlet(a_1, \ldots, a_p) \to \sum_{j=1}^p \phi_j = 1 \\
        \beta_j &\sim \normal(0, \lambda_j^2) \text{ with } \lambda_j^2 = \tau^2 \phi_j, 
    \end{aligned}\label{eq:r2d2-prior}
\end{equation}
following the investigation of normal kernels by \citet{aguilar_intuitive_2023}. 
A Beta prior on the model's $R^2$ implies a Beta prime prior on the global scale $\tau^2$.
Dirichlet weights for the local-scale components $\phi_j$ allocate a share of global variance to each coefficient via $\lambda_j^2=\tau^2\phi_j$.\footnote{The local variances $\lambda^2_j$ should be rescaled to $\tilde{\lambda}^2_j = \lambda^2_j\sfrac{\sigma^2_y}{\sigma^2_{x_{j}}}$ or all variables standardised to ensure that the scale of $R^2$ is independent of the outcome and the covariates \citep{aguilar_intuitive_2023}.}

The finite-moment condition $\mathbb{E}_{p_d}[\tau^2] = \mathbb{E}_{p_d}[\sigma^2 R^2/(1-R^2)] < \infty$ from Lemma \ref{lem:gl-characterisation} therefore becomes the key requirement for predictive consistency.
This holds whenever $(1-\mu_{R^2})\varphi_{R^2} > 1$, that is, when the R2D2 prior does not place too much mass near one\footnote{Note that this condition on $\tau^2$ excludes the bounded influence property which is achieved for the R2D2 prior when $(1-\mu_{R^2})\varphi_{R^2} \leq 0.5$.}, and $\mathbb{E}_{p_d}[\sigma^2] < \infty$.
Hence, setting a prior directly on $R^2$ is not merely a convenient elicitation device; it is the natural parameterisation for achieving second-order predictive consistency.

\subsection{Illustrative example: Part 3}\label{subsec:examples-pred-consistent-priors-illustrative-example-part-3}
To illustrate the different implications of prior choices, we extend the previous example in Section \ref{subsec:methods-illustrative-example-part-1} and \ref{subsec:good-priors-illustrative-example-part-2} to a collection of models with up to $p=30$ predictors, added in fixed order. 
Section \ref{subsec:ex-2-forward-search} considers the (more realistic) case of order based on model selection.
We compare independent normal priors $\beta_j \sim \normal(0, 1), \; \; \forall j=1, \cdots, p$ and an R2D2 prior putting most mass on $R^2$ values close to zero with $\mu_{R^2} = \sfrac{1}{3}$, $\varphi_{R^2} = 3$ and Dirichlet concentration parameters $a_{1:p} = (1, \ldots, 1)$. 
The shape across $R^2$ values can be adjusted through the configuration of the R2D2 prior.\footnote{To support prior elicitation for $R^2$, one can, for example, visualise the R2D2 prior and its components with this useful web-app \href{https://solviro.shinyapps.io/R2D2_shiny/}{https://solviro.shinyapps.io/R2D2\_shiny/} by Sölvi Rögnvaldsson.} 
\begin{figure}[tp]
    \centering
    \includegraphics[]{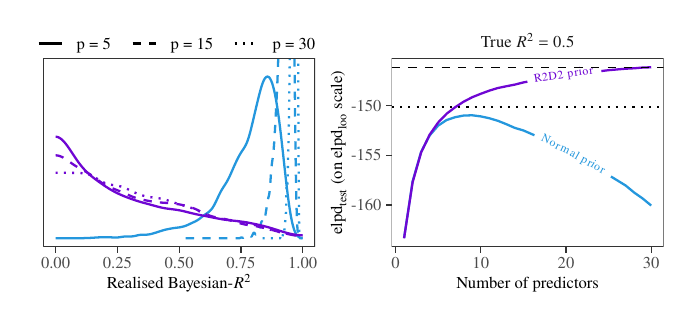}
    \caption{Illustrative example: Part 3. 
    Left: realised prior Bayesian-$R^2$ for $p \in \{5, 15, 30\}$ covariates with $\sigma^2=1$, 
    $\frac{1}{n} X^TX = I_p$ and using $ \Var\left(X^T \beta \mid \beta\right)=||\beta||^2$ with $||\beta||^2 \sim \chi^2(p)$ for normal priors.
    As $p$ increases, the R2D2 prior keeps $R^2$ stable, while it concentrates near 1 with independent normal priors. 
    Right: $\mathrm{elpd}_{\text{test}}$ (with $n=100$, $\rho=0.5$, $R^2=0.5$, $n_{\text{test}}=2000$, averaged over $500$ repetitions), on $\mathrm{elpd}_{\text{loo}}$ scale. 
    A difference of 4 (dotted line) to the best-performing model (dashed line) indicates substantially different predictive performance.  
    Models with R2D2 prior perform best across $p$ and, unlike independent normal priors, allow to increase model size up to the true $p=30$ without a drop in test performance.}
    \label{fig:illustrative-example-part-3-implied-prior-r2-elpd-test-rho-05-r2-05}
\end{figure}
Figure \ref{fig:illustrative-example-part-3-implied-prior-r2-elpd-test-rho-05-r2-05} (left) visualises implied prior $R^2$ for both prior choices and increasing $p \in \{5, 15, 30\}$, illustrating that with higher number of predictors, independent normal priors imply a prior $R^2$ increasingly concentrated at $1$. 
Explaining close to 100\% of the variation in the outcome is in many cases a largely unrealistic assumption.
Relatedly, \citet{efron_discussion_1973} points out that seemingly weakly informative independent normal priors can lead to unexpectedly informative joint priors.
In contrast, the implied prior $R^2$ distribution remains stable with R2D2 prior.  
To add to this, Figure \ref{fig:app-ex-1-adding-covariates-prior-posterior-r2} in Appendix  \ref{app:illustrative-example-part-3-prior-posterior-r2} illustrates that putting most prior mass close to $R^2=1$ can also pull the posterior implied $R^2$ away from the truth for independent normal priors.

Figure \ref{fig:illustrative-example-part-3-implied-prior-r2-elpd-test-rho-05-r2-05} (right) also shows test performance under the two different prior choices averaged over $500$ repetitions for model sizes up to $p=30$. 
We generate training data $\{x_i, y_i\}_{i=1}^n$ with $n=100$ and independent test data with $n_{\text{test}} = 2000$. 
We assume $x_i \sim \normal(0, \Sigma), \; x_i \in \mathbb{R}^{p}$ and $\Sigma\in \mathbb{R}^{p\times p}$ with unit variance and correlation $\rho=0.5$, that is, the true model is the encompassing model.
Following the approach detailed in Appendix  \ref{app:illustrative-example-part-3-data-generation}, we set the true values for the coefficients $\beta_j$ such that the true $R^2 = 0.5$. 
Both priors are different from the DGP.
The chosen R2D2 prior allows to increase model complexity up to the true model ($p=30$) without declining test performance while test performance drops when using independent normal priors. 
Figure \ref{fig:app-illustrative-example-adding-covariates-elpd-paths-r2-0-05-08-rho-0-p-30-100} to \ref{fig:app-illustrative-example-adding-covariates-elpd-paths-r2-0-05-08-rho-09-p-30-100} in Appendix  \ref{app:examples-pred-consistent-priors} show largely similar patterns for $\rho\in \{0, 0.5, 0.9\}$, true $R^2 \in \{0, 0.5, 0.8\}$ and $p\in \{30,100\}$.
Only for uncorrelated predictors, true $R^2 \in \{0.5, 0.8\}$ and $p=30$, the two priors perform similarly.

\subsection{Priors on function space}\label{subsec:methods-function-priors} 
Function-space priors provide yet another framework for constructing predictively consistent priors. 
They allow eliciting prior knowledge directly on properties of functions, like smoothness, periodicity, or stationarity, rather than individual coefficients. 
When these are chosen explicitly to bound the function's marginal variance, they ensure the total signal variance remains controlled as more complexity is added.
In Section \ref{subsec:ex-3-nonlinear}, we use Gaussian process priors which are among the most commonly used function-space priors \citep[e.g.,][]{rasmussen_gaussian_2006}.

The Hilbert space approximate Gaussian process \citep[HSGP, ][]{riutort-mayol_practical_2022, solin_hilbert_2020} used in Section \ref{subsec:ex-3-nonlinear} approximates a Gaussian process by a truncated basis expansion
\begin{align}
   \eta_k = f_k(x) \approx \sum_{j=1}^{k} \left(S_{\theta}\!\left(\sqrt{\lambda_j}\right)\right)^{1/2} \phi_j(x) \,\beta_j,
\end{align}
with $\beta_j \sim \normal(0, 1)$, 
eigenvectors $\phi_j$ and eigenvalues $\lambda_j$ of the covariance operator of the kernel, spectral density $S_{\theta}$, and length-scale $\ell$ and marginal scale of the GP $\sigma^2_{\mathrm{GP}}$ as hyperparameters $\theta = (\ell, \sigma^2_{\mathrm{GP}})$. 
For fixed $\theta$, $\Var\big(f_k(x)\mid\theta\big)=\sum_{j=1}^{k} S_{\theta}\big(\sqrt{\lambda_j}\big)\,\phi_j^2(
x)$ converges to $\kappa_\theta(x,x)=\sigma^2_{\mathrm{GP}}$ as $k\to\infty$.
Integrating over any proper prior on $\theta$ 
and averaging over any reference predictor distribution $p(x)$ gives $\Var_{q_k}(\eta_k)\le \mathbb E[\sigma^2_{\mathrm{GP}}]$ for all $k$, so HSGPs satisfy Definition \ref{def:second-order-pc} with $C=\mathbb E[\sigma^2_{\mathrm{GP}}]$.
In contrast, for raw polynomials $f_k(x)=\sum_{j=0}^{k}\beta_j x^j$ with independent $\beta_j\sim\mathcal \normal(0,\overline{\tau}^2)$ and fixed $\overline{\tau}^2$,
\begin{align}
    \Var_{q_k}\big(\eta_k\big)=\overline{\tau}^2 \sum_{j=0}^{k}\mathbb E[x^{2j}].
\end{align}
If $p(x)$ places non–negligible mass near or beyond $|x|=1$, the partial sums $\sum_{j=0}^{k}\mathbb E[x^{2j}]$ grow with $k$.
For instance, they diverge when $\Pr(|x|\ge 1+\delta)>0$, and even for $x\sim\mathrm{Uniform}[-1,1]$.
\footnote{Note that if $|x|\le c<1$ almost surely, the series converges.}
This violates Definition \ref{def:second-order-pc} and helps to explain the decline in test performance observed in our experiments in Section \ref{subsec:ex-3-nonlinear}.

More generally, for any model $f_d(x)=\sum_{j=1}^{d} w_j(x)\,\beta_j$ with $\beta\sim\mathcal \normal(0,\Sigma_d)$, 
second-order predictive consistency holds if $\sup_{d} \mathbb{E}_{x} \big[w_d(x)^T \Sigma_d\,w_d(x)\big] < \infty $.
When $\Sigma_d=\mathrm{diag}(\sigma_j^2)$, this reduces to $\sup_{d}\sum_{j=1}^{d}\sigma_j^2\,\mathbb E\big[w_j(x)^2\big]<\infty$.
This includes the linear model with $w_j(x) = x_j$ and the bound from $R^2$, the HSGP with $w_j(x) = S_\theta(\sqrt{\lambda_j})^{1/2} \phi_j(x)$ and the bound from $\kappa_{\theta}(x,x)$, and basis-function expansions generally.

\section{Experiments}\label{sec:experiments}
We investigate predictive performance when model complexity increases, comparing independent priors to predictively consistent priors under different true DGPs. 
We focus on modelling scenarios that are typically considered prone to overfitting: 
\begin{enumerate}[nosep, label = \textbullet]
    \item Adding covariates in logistic regression (Section \ref{subsec:ex-1-logistic}) in a dense regime, that is, the true DGP includes all covariates,
    \item Performing forward stepwise selection in a sparse regime, that is, only some covariates have non-zero weights in the true DGP (Section \ref{subsec:ex-2-forward-search}), or
    \item Increasing degree or number of basis functions in nonlinear models where the true DGP is not directly represented by any of the considered models (Section \ref{subsec:ex-3-nonlinear}).
\end{enumerate} 
Additionally, we explore (joint) prior choices for regression models with treatment effect (Section \ref{subsec:ex-4-rct-studies}). 

\subsection{Experiment 1: Adding covariates in logistic regression}\label{subsec:ex-1-logistic}
The setup is similar to the linear regression experiment in Section \ref{subsec:examples-pred-consistent-priors-illustrative-example-part-3}: we fix the DGP to yield a certain true $R^2$ and investigate how test performance changes with the number of predictors and the prior choice.
Many notions of $R^2$ exist for non-normal models \citep[][]{piepho_coefficient_2019}. We use the Binomial adaptation of the Bayesian-$R^2$ as defined by \citet[][]{gelman_r-squared_2019}:

\begin{equation}\label{eq:logistic_bayes_r2}
    \begin{aligned}
        \pi_i &= \frac{1}{1 + \exp(-\eta_{d,i})}, \quad \pi = (\pi_1,\dots,\pi_n)^T, \\
        R^2 &= \frac{\Var(\pi)}{\Var(\pi) + \frac{1}{n} \sum_{i=1}^n \pi_i (1 - \pi_i)},
    \end{aligned}
\end{equation}
where $\eta_{d,i} = \sum_{j=1}^{d} x_{ij}\beta_j$ uses the first $d$ covariates and $x_{ij}$ is the $j$-th covariate of observation $i$.
The predictor term is replaced by the predicted probabilities $\pi_i$, and $\sigma^2$ by the average variance of event occurrence under the Binomial model, conditional on $\beta$. 
In the following, we refer to \eqref{eq:logistic_bayes_r2} as the Bayesian-$R^2$ for the Binomial model. 

Since the logistic link prevents an analytical mapping between a prior on Bayesian-$R^2$ to the variance of the regression coefficients, we place a beta prior on the pseudo-$R^2$ space instead. 
This is based on the observation that many GLMs allow convenient local normal approximations with pseudo-means and pseudo-variances,
building on previous work on pseudo-variance adjustments 
\citep{neal_bayesian_2012, piironen_sparsity_2017}.

To derive the pseudo-variance for the Binomial model, assume that the log likelihood function, $L_i$ for observation $i$ can be written as a member of the exponential family of distributions 
\begin{equation}
    L_i\left(y_i \vert \theta, \phi\right) = \frac{y_i \theta - B(\theta)}{A(\phi)} - C(y_i,\phi),
\end{equation}
where $\theta,\phi$ refer to the natural and dispersion parameter, respectively, for some functions $A(\cdot)$, $B(\cdot)$, $C(\cdot)$, specific to the GLM. 
Denote by $B'(\theta)$ the derivative of $B$ with respect to $\theta$. 
Approximating the Binomial model locally by a normal $z_i \sim \normal(\tilde{f}_i,\tilde{\sigma}^2)$, \citet{piironen_sparsity_2017} show that data-based moments can be used to express the second-order Taylor approximation around the posterior mode by
\begin{equation}
    \tilde{\sigma}^2 = [\mu(1-\mu)]^{-1}, \quad \tilde{f}_i  = \eta_d - \frac{y_i-\mu}{\mu(1-\mu)},
\end{equation}
where $\mathbb{E}[y] = \mu = B'(\theta)$ . 
In a prior specification context, $\mu$ can be set to a prior guess for the mean response (e.g., $\mu = 0.5$ for balanced classes, giving $\tilde{\sigma}^2 = 4$) or estimated as the mean of the response as suggested by \citet[][]{piironen_sparsity_2017}. Conditional on  $\frac{y_i-\mu}{\mu(1-\mu)}$, the variance of the pseudo-response $\tilde{f}_i$ reduces to $\Var(\eta_d)$, rendering a pseudo-$R^2$, which we denote as $\tilde{R}^2$:
\begin{equation} \label{eq:pseudo-r2-def}
    \tilde{R}^2 = \frac{\Var(\eta_{d,i})}{\Var(\eta_{d,i}) + \tilde{\sigma}^2}.
\end{equation}
Similar to the normal model above, one can set a beta prior on $\tilde{R}^2$ defined as the proportion of variance of $z_i$ explained by $\eta_d$.
Although the pseudo-$R^2$ remains only an approximation to the Bayesian-$R^2$, one may also view this definition as an $R^2$ measure on the latent space of the GLM with a variance adjustment given by the pseudo-variance of the GLM.\footnote{Relatedly, \citet{yanchenko_r2d2_2025} propose an $R^2$-based prior by finding an exact push-forward of Bayesian-$R^2\rightarrow \Var(\beta_j)$, conditional on hyper-parameters. A drawback of this approach is that it cannot be analytically found for many GLMs, including the Binomial model.} Under the standardised design, the expected variance of the latent signal is $\Var_{q_d}(\eta_d) = \mathbb{E}_{p_d}\left[ \sum_{j=1}^d \tau^2 \phi_j \tilde{\sigma}^2 \right] = \tilde{\sigma}^2 \mathbb{E}_{p_d}[\tau^2]$, and is predictively consistent (Definition \ref{def:second-order-pc}) under the same conditions as for the normal observation model. Note that while the observation space of the Binomial model is bounded, the latent signal $\eta_d \in \mathbb{R}$ is not. Therefore, second-order predictive consistency (Definition \ref{def:second-order-pc}) does not hold trivially. 

\paragraph{Setup} We simulate $n=100$ observations with $p=30$ candidate predictors drawn from a multivariate normal distribution with constant correlation structure ($\rho = 0.5$). We set the true coefficients $\beta^*$ to generate an underlying data-generating process with a target Bayesian-$R^2 \in \{0, 0.5, 0.8\}$. We then fit a sequence of nested logistic regression models, increasing the number of included covariates from $1$ to $30$. Models are evaluated on $2000$ hold-out samples, averaged over $300$ independent replications.

\paragraph{Prior choices} The pseudo-$R^2$ prior for coefficients $\beta$ can be set similarly as above with $\tilde{R}^2 \sim \betadist(\mu_{R^2},\varphi_{R^2})$, and $\beta_j \sim \normal(0, \tau^2\phi_j\tilde{\sigma}^2/\sigma_{x_j}^2)$, for $j = 1,\dotsc,p$, where $\tau^2 \sim \mathrm{BetaPrime(\mu_{R^2},\varphi_{R^2})}$. The observation model for inference remains the Binomial, the pseudo-$R^2$ definition is merely used to set the prior. 
We compare two alternatives:
\begin{enumerate}[nosep]
    \item Independent normal priors $\beta_j \sim \normal(0, 1), \; \; \forall j=1, \cdots, p$;
    \item Pseudo-$R^2$ prior with $\mu_{R^2} = \sfrac{1}{3}$, $\varphi_{R^2} = 3$ and Dirichlet concentration parameter set to one.
\end{enumerate}

\paragraph{Takeaways}
To illustrate that the Pseudo-$R^2$ is a predictively consistent prior for the Binomial's latent and outcome space, we plot in Figure~\ref{fig:prior-predictives-logistic-regression} the prior predictive probabilities $(1 / (1 + \exp(-\eta_{d,i})))$ along with the prior predictive Bayesian-$R^2$ when generating $\tilde{R}^2 \sim \betadist(\mu_{R^2},\varphi_{R^2})$ and $\beta_j \sim \normal(0, \tau^2\phi_j\tilde{\sigma}^2/\sigma_{x_j}^2)$ for increasing $p$. The shape of the prior predictive Bayesian-$R^2$ is close to the desired $\betadist(1/3,3)$ distribution, independent of $p$, and remarkably similar to the $R^2$ prior distribution when generated from the normal observation model. The effect of strictly bounding the variance of the linear predictor results in the distribution of the prior predictive probabilities to remain stable and invariant to $p$ as well.
\begin{figure}[tp]
    \centering
    \includegraphics[width=\textwidth]{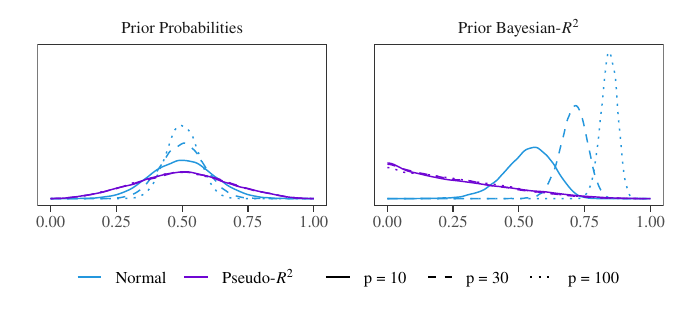}
    \vspace{-1.2cm}
    \caption{Experiment 1: Prior predictive distributions probabilities as well as prior predictive Bayesian-$R^2$ for the logistic regression under the standard normal prior and the pseudo-$R^2$ prior, generating $\tilde{R}^2 \sim \betadist(\mu_{R^2},\varphi_{R^2})$ and $\beta_j \sim \normal(0, \tau^2\phi_j\tilde{\sigma}^2/\sigma_{x_j}^2)$, where $\sigma_{x_j}=1$. 
    } \label{fig:prior-predictives-logistic-regression}
\end{figure}
\begin{figure}[tp]
    \centering
    \includegraphics[width=\textwidth]{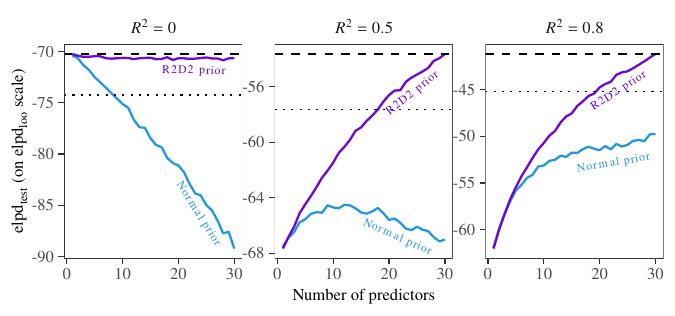}
    \vspace{-0.9cm}
    \caption{Experiment 1: Adding covariates in logistic regression. We compare out-of-sample predictive performance ($\mathrm{elpd}_{\text{test}}$) (averaged over $500$ repetitions) with normal priors or the pseudo-$R^2$ prior with true Bayesian-$R^2 \in \{0, 0.5, 0.8\}$ (columns) and $\rho = 0.5$.
    A difference of 4 on $\mathrm{elpd}_{\text{loo}}$ scale (dotted line) to the best-performing model (dashed line) indicates substantially different predictive performance.
    The pseudo-$R^2$ prior allows to increase model complexity up to the largest model without declining test performance.}
    \label{fig:elpd-path-logistic}
\end{figure}
Under independent standard normal priors, the variance of the linear predictor $x_i^T \beta$ diverges as $p$ increases. Passed through the inverse-logit link function, the prior predictive probabilities $\pi_i$ are pushed toward the boundaries of $0$ and $1$. Consequently, by the continuous mapping theorem \citep[Theorem 2.3]{van_der_vaart_asymptotic_1998}, the expected binomial residual variance vanishes in probability: $\frac{1}{n} \sum_{i=1}^n \pi_i (1 - \pi_i) \xrightarrow{p} 0$ as $p \to \infty$. Simultaneously, the variance of the fitted probabilities, $\Var(\pi)$, converges in distribution to the strictly positive variance of $n$ independent $\text{Bernoulli}(0.5)$ draws. 
Therefore, the Bayesian-$R^2$ for the Binomial model inevitably collapses to $1$.

Results in Figure \ref{fig:elpd-path-logistic} are very similar to Figure \ref{fig:illustrative-example-part-3-implied-prior-r2-elpd-test-rho-05-r2-05}: a predictively consistent prior allows safe use of all predictors without a loss in test performance. 
The best performing models are the ones with all covariates included if the DGP's $R^2$ is larger than 0.
As expected, the difference in performance between priors decreases with an increase in the DGP's $R^2$, since the normal's implied Bayesian-$R^2$ becomes closer to the truth. 

\subsection{Experiment 2: Performing forward selection}\label{subsec:ex-2-forward-search}
We extend the experiment in Section \ref{subsec:examples-pred-consistent-priors-illustrative-example-part-3} and use forward search to identify candidate models of increasing complexity, since it is more realistic that the ordering in which covariates are added matters.
Moreover, we now assume a sparse true DGP where only $15$ out of $50$ covariates have non-zero weights. 

Forward selection is a famous example and widely used cautionary tale for the dangers of overfitting \citep{smith_step_2018, ellis_stepwise_2024, hastie_elements_2009}. 
In the context of Bayesian modelling, \citet{piironen_comparison_2017} and \citet{mclatchie_efficient_2024} illustrate how optimising a utility estimate such as the log score obtained with PSIS-LOO-CV can be liable to finding overfitted models, especially when the data is scarce and the variance of the utility estimate is high, which causes selection-induced bias even if the selection criteria are unbiased estimators of the generalisation utility.
In their experiments, \citet{piironen_projective_2020} consider normal models with an increasing number of predictors and a reference model with a sparsity-assuming normal-Inverse-Gamma prior.
We illustrate that a predictively consistent prior reduces the gap in expected predictive performance between the reference model and the log scores obtained with PSIS-LOO-CV throughout the forward search, and report the projection predictive method's performance to compare with \citet{piironen_comparison_2017}. 

\paragraph{Setup}
We assume a normal DGP as in Section \ref{subsec:examples-pred-consistent-priors-illustrative-example-part-3}, but, similar to \citet[][Section 4.2]{piironen_comparison_2017}, only the first three groups of all $p=50$ covariates have non-zero weights $(\beta^{1:5}, \beta^{6:10}, \beta^{11:15}) = (\xi, 0.5\xi, 0.25\xi)$ while the remaining $(p-15)$ covariates have zero weight. 
Each covariate has zero mean, unit variance and within-group correlation $\rho \in \{0, 0.5, 0.9\}$ and is uncorrelated with covariates in different groups. 
In contrast to Section \ref{subsec:examples-pred-consistent-priors-illustrative-example-part-3}, this implies a block-diagonal structure for the correlation matrix $\Sigma$.
Since we want to keep the block structure as implemented by \citet{piironen_comparison_2017}, we cannot directly use the approach in Section \ref{subsec:examples-pred-consistent-priors-illustrative-example-part-3} to fix a true $R^2$ but instead set $\sigma^2 = \sfrac{1}{R^2} \Var(x^T \beta) (1-R^2)$ with $R^2 = 0.5$ to control the true $R^2$ of the DGP. 

For each sample size $n$, block correlation $\rho$ and $100$ repetitions of the experiment, we generate training data $\{x_i, y_i\}_{i = 1}^n$ and independent test data of size $n_{\text{test}} = 2000$. 
We perform forward search variable selection with the linear model $y_i \sim \normal(\alpha + x_i^T \beta, \sigma^2)$ with (1) independent normal priors or (2) R2D2 prior with selection criterion $\mathrm{elpd}_{\text{loo}}$ estimated with PSIS-LOO-CV, as well as (3) projection predictive selection using the \texttt{projpred} package \citep[][]{piironen_projpred_2026}. 
We use the model with all covariates with R2D2 prior as the reference model.

\paragraph{Prior choices}
Both priors are different from the DGP. 
As before, $\alpha \sim \normal(0, 2.5)$, $\sigma^2 \sim \expdist(1)$ and we compare:
\begin{enumerate}[nosep]
    \item Independent normal priors $\beta_j \sim \normal(0, 1), \; \; \forall j=1, \cdots, p$;
    \item R2D2 prior with $\mu_{R^2} = \sfrac{1}{3}$, $\varphi_{R^2} = 3$ and Dirichlet concentrations $a_{1:p} = (1, \ldots, 1)$. 
\end{enumerate}

\paragraph{Takeaways}

To compare the expected behaviour for the three different approaches, we plot $\mathrm{elpd}_{\text{test}}$ on $\mathrm{elpd}_{\text{loo}}$ scale, averaged over $100$ repetitions for each unique condition of the experiment in Figure \ref{fig:ex-2-forward-search-elpd-test-paths-all-rho-n-100-200-r2-05}. 
\begin{figure}[t]
    \centering
    \includegraphics[]{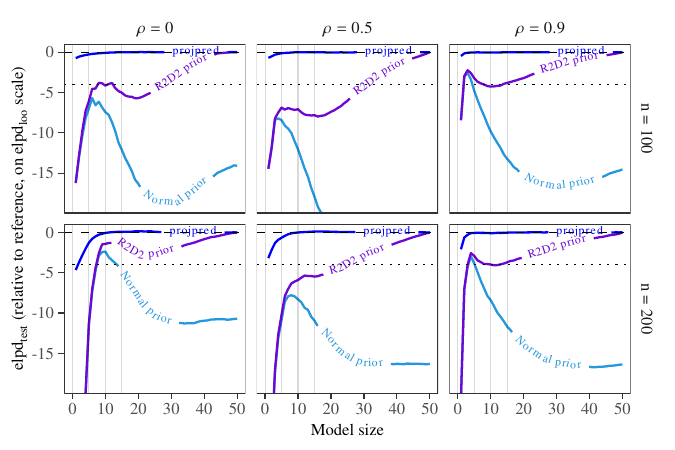}
    \caption{Experiment 2: Forward selection. Predictive performance on independent test data ($\mathrm{elpd}_{\text{test}}$), relative to the reference model rescaled to $\mathrm{elpd}_{\text{loo}}$ scale (with $n_{\text{test}}=2000$, averaged over $100$ repetitions). Data is generated with true $R^2=0.5$ and block correlation $\rho \in \{0, 0.5, 0.9\}$ (columns) and $n \in \{100, 200\}$ (rows). A dotted line marks a difference of 4 relative to the best-performing model and three vertical gray lines at model sizes $5$, $10$ and $15$ indicate the block structure.}
    \label{fig:ex-2-forward-search-elpd-test-paths-all-rho-n-100-200-r2-05}
\end{figure}
When using normal priors, model selection can sometimes help to avoid even worse out-of-sample performance, but, after including $\gtrsim 5$ predictors, we always reach better $\mathrm{elpd}_{\text{test}}$ with R2D2 prior. 
With R2D2 prior, we can safely use all covariates, without fear of deteriorating test performance. 
With independent normal priors, $\mathrm{elpd}_{\text{test}}$ drops and remains clearly below the results achieved with R2D2 prior across $n \in \{100, 200\}$ and $\rho=\{0, 0.5, 0.9\}$. 
If we use a selection method with much smaller model selection variance (like \texttt{projpred}), we can achieve excellent out-of-sample performance also with smaller models. 

Figure \ref{fig:app-ex-2-forward-search-elpd-test-paths-block-structure-with-reps} in Appendix \ref{app:ex-2-forward-search} also shows that forward search with normal priors exhibits larger variation across repetitions than forward search with R2D2 prior.
Variation for both forward search approaches decreases with higher training sample size $n$.
We repeat the experiment for a somewhat more artificial DGP with equally weakly relevant covariates (see Figure \ref{fig:app-ex-2-forward-search-average-elpd-test-paths-flat-structure} and \ref{fig:app-ex-2-forward-search-mean-elpd-test-elpd-loo-paths-data-structure-flat} in Appendix \ref{app:ex-2-forward-search}).
We find the same patterns of considerably larger variance and lower performance for large models with normal priors compared to R2D2 prior, but also that test performance can be similar for uncorrelated predictors ($\rho=0$) and larger training sample size ($n=200$). 
Note that one could improve 
forward search by using the approaches for early-stopping and correction of selection-induced bias by \citet{mclatchie_efficient_2024}, but we focus on comparing ``default'' forward search with PSIS-LOO-CV $\mathrm{elpd}$ for the two priors here.

\subsection{Experiment 3: Increasing the complexity of nonlinear models}\label{subsec:ex-3-nonlinear}
Instead of considering multiple predictors and the behaviour of models from a pre-defined family of models, 
this section considers a single–predictor nonlinear setting  with an increasing polynomial degree or number of basis functions where the exact functional relationship is uncertain, similar to \citet{rasmussen_occam_2000} and \citet[][Chapter 7]{mcelreath_statistical_2020}. 
Building on Section  \ref{subsec:methods-function-priors}, we illustrate how nonlinear modelling choices differ in their implied priors and how this relates to out‑of‑sample performance as model complexity grows.
\paragraph{Setup} 
Data are generated to resemble the ``drowning'' example (number of drownings from 1980-2013) by \citet{gelman_bayesian_2015} available in their accompanying GitHub repository\footnote{See \href{https://github.com/avehtari/BDA_R_demos/blob/master/data/drowning.txt}{https://github.com/avehtari/BDA\_R\_demos/blob/master/data/drowning.txt}}, transforming the predictor to $x\in[-3,3]$.
As such, we cannot assume that any of our candidate models includes the true model.
Note that the DGP is an arbitrary choice. 
Our main goal is to replicate a situation where there is a clear, nonlinear pattern in the data.

\paragraph{Model choices} We assume $y_i \sim \normal\left(f(x_i), \sigma^2\right)$ with $i=1, \cdots, n$. 
We fit models with increasing degree or number of basis functions $k \in \{0, \cdots, 19\}$, starting with the intercept-only model (i.e, $k=0$) with $f(x_i)= \alpha$. 
We investigate \begin{enumerate*}[label = (\arabic*)]
    \item raw, and 
    \item orthogonal polynomial regression with degree $k$,
    \item thin plate spline (TPS) regression \citep[][]{wood_thin_2003, wood_generalized_2017, wood_generalized_2025, hastie_elements_2009}  with rank $k$ and fixed degrees of freedom,
    \item TPS with rank $k$ and 
    non-fixed penalisation,
    \item HSGPs \citep[][]{solin_hilbert_2020, riutort-mayol_practical_2022} with $k$ basis functions as introduced in Section \ref{subsec:methods-function-priors}, 
\end{enumerate*}see a detailed description of model choices and priors in Appendix \ref{app:ex-3-nonlinear-modelling-approaches}.
\begin{figure}[t]
    \centering
    \includegraphics[]{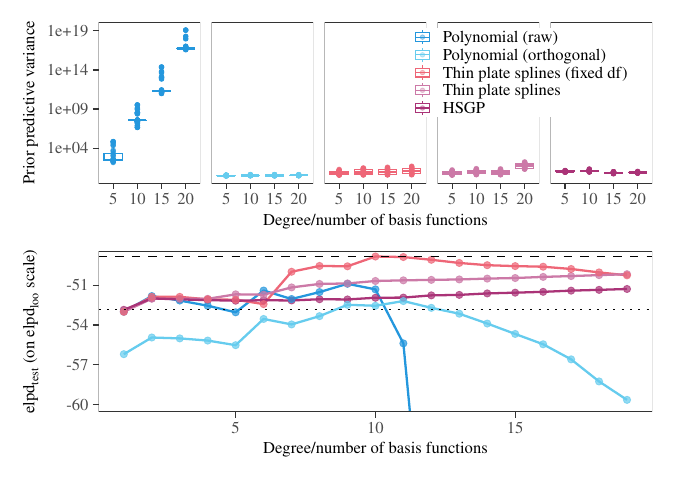}
    \vspace{-0.4cm}
    \caption{Experiment 3: Increasing the complexity of nonlinear models. 
    Top: Prior predictive variance (log10-transformed) for one repetition. 
    Bottom: $\mathrm{elpd}_{\text{test}}$ (with $n=50$, $n_{\text{test}}=2000$, averaged over $500$ repetitions), rescaled to $\mathrm{elpd}_{\text{loo}}$ scale. 
    For raw polynomials, prior predictive variances explode and $\mathrm{elpd}_{\text{test}}$ drops at higher degrees.
    Only thin plate splines and HSGP remain stable as the degree/number of basis functions increases.
    }
    \label{fig:ex-3-nonlinear-basis-functions-prior-predictive-variance-vs-elpd-test-n-50-dgp-BDA-drowning-data}
\end{figure}
\paragraph{Takeaways}
Figure \ref{fig:ex-3-nonlinear-basis-functions-prior-predictive-variance-vs-elpd-test-n-50-dgp-BDA-drowning-data} summarises (top) prior‑predictive variance for one repetition and (bottom) $\mathrm{elpd}_{\text{test}}$ averaged over $500$ repetitions with $n=50$, $n_{\text{test}}=2000$, rescaled to $\mathrm{elpd}_{\text{loo}}$ scale to assess differences.
As opposed to the other approaches, prior predictive variances for raw polynomials explode, consistent with $\sum_j \mathbb E[x^{2j}]$ increasing for $x\in[-3,3]$ (Section \ref{subsec:methods-function-priors}).
As $k$ grows, $\mathrm{elpd}_{\text{test}}$ remains stable only for HSGP and penalised TPS, while raw polynomials, and, at higher degrees, orthogonal polynomials, deteriorate.
For all approaches but HSGPs and penalised TPS, we cannot simply increase model complexity without a drop in test performance.

As illustrated also for $n\in\{20, 100\}$ in Figure \ref{fig:app-ex-3-nonlinear-obs-elpd-paths-n-20-50-100} in Appendix \ref{app:ex-3-nonlinear}, orthogonal polynomials can sometimes delay the drop in test performance and restrict the range of basis values (see Figure \ref{app:fig-ex-3-nonlinear-basis-function-values} in Appendix \ref{app:ex-3-nonlinear}).
In contrast to TPS and HSGPs, they still require specifying individual priors for model components and can suffer rank deficiencies when $k$ approaches the number of unique $x$ values. 
We did not attempt to further fix higher-degree polynomials, since we aim to investigate ``out‑of‑the‑box'' behaviour under independent coefficient priors versus function‑space priors, not to optimise each modelling approach.

\subsection{Experiment 4: Randomised control treatment effect studies}\label{subsec:ex-4-rct-studies}

Previous sections focused on predictive performance, demonstrating that the best predictive performance is obtained using all relevant data combined with predictively consistent priors. 
We now turn to causal inference: estimating the average treatment effect in a randomised controlled trial (RCT), a standard methodology for studying causal effects of interventions from economics to medicine \citep{deaton_understanding_2018}. 
We demonstrate that conditioning on high-dimensional covariate sets, which are predictive of the target and not the treatment probability, can improve inference for the treatment effect. 
However, predictive
consistency alone in the sense of Definition~\ref{def:second-order-pc} is not sufficient to realise these efficiency gains and may even increase bias compared to not using the covariates in the model at all.
We develop theory to support the simulation findings.

We are interested in estimation of the following model:
\begin{equation}
    y_i \sim \normal(\alpha z_i + x_i^T\beta, \sigma^2), \label{eq:rct-model}
\end{equation}
where the treatment assignment $z_i \sim \mathrm{Bernoulli}(0.5)$ for $i=1,\dotsc,n$ and covariates $X \in \mathbb{R}^{n\times p}$ with columns drawn independently from a standard normal distribution.
Here, since $z\perp X$, the inclusion of $X$ into the model are assumed not to have a bias effect and ideally have a variance reduction effect onto the posterior of $\alpha$.\footnote{In the parlance of \citet[][]{cinelli_crash_2024}, $X$ in this DGP design constitute neutral controls in that they neither increase nor decrease asymptotic bias.}
We consider two structural models: $\mathrm{M}_{\text{base}}$ containing only the treatment variable $z$, and $\mathrm{M}_{\text{full}}$ containing both the treatment $z$ and the covariates $X$, where, for simplicity, the correlation among columns of $X$ is assumed to be zero. 

\paragraph{Prior choices} 
For $\mathrm{M}_{\text{full}}$, we compare three prior specifications where a normal prior for coefficient vector, $\beta \sim \normal(0,\Sigma_{\beta})$ and a normal prior for the treatment effect coefficient, $\alpha \sim \normal(0,\tau^2_{\alpha})$ is applied:
\begin{enumerate}[nosep]
    \item \textbf{Normal:} $\Sigma_{\beta} = I_p$ and $\tau_{\alpha} = 2.5$.
    \item \textbf{R2 (joint R2D2):} A joint R2D2 prior over the $p+1$ coefficients $(\beta_1,\dotsc,\beta_p,\alpha)$, so $\Sigma_{\beta} = \tau^2\mathrm{diag}(\phi_1,\dotsc,\phi_p)$ and $\tau^2_{\alpha} = \tau^2\phi_{p+1}$, with $\mu_{R^2} = 1/3$, $\varphi_{R^2} = 3$ and symmetric concentration $a=1$.
    \item \textbf{R2 + Normal (split):} An R2D2 prior for $\beta$ with $\mu_{R^2} = 1/3$, $\varphi_{R^2} = 3$ and symmetric concentration $a=1$ and $\tau^2_{\alpha} = 2.5$ independently.
\end{enumerate}
For $\mathrm{M}_{\text{base}}$, we use $\alpha \sim \normal(0, 2.5)$.

\paragraph{Analytical predictions}
In line with the simulation in the previous section, we focus on the finite-sample regime where $n$ is held fixed and $p$ increases but extend the analysis to $p$ increasing towards $n$. 
Asymptotic results are deferred to the proportional regime $p=\lfloor\gamma n\rfloor$, $n\to\infty$, analysed in Appendix~\ref{app:treatment-proofs-prop}.

Conditional on hyperparameters and $\sigma^2$, the posterior of $\alpha$ marginal of $\beta$ is normal and can be found by the application of the Bayesian Frisch-Waugh-Lovell decomposition \citep{hahn_bayesian_2020, bloniarz_lasso_2016}. 
Via the Schur complement of the joint posterior precision (Proposition \ref{prop:marginal-var}, Appendix \ref{app:treatment-proofs})
\begin{align}
    \mathbb{E}[\alpha \mid \cdot] = s \cdot \hat{\alpha}_{\text{OLS}_\beta}, \quad
     \Var(\alpha \mid \cdot) = \left(d_z + \tau_\alpha^{-2}\right)^{-1}, \label{eq:bayes_fwl}
\end{align}
with $\hat{\alpha}_{\text{OLS}_\beta} = [z^T(I_n-H_\beta)z]^{-1}z^T(I_n-H_\beta)y$ being the residualised least-squares estimate, $H_\beta = X(X^TX + \sigma^2\Sigma_\beta^{-1})^{-1}X^T$ the $\beta$-prior-regularised projection matrix onto $\mathrm{col}(X)$, and $d_z = \sigma^{-2}z^T(I_n - H_\beta)z$ the residualised data precision contributed by the component of $z$ orthogonal to the prior-regularised covariate space.  Since $s = d_z/(d_z + \tau_\alpha^{-2}) \in (0,1)$ controls how much the prior shrinks $\hat{\alpha}_{\text{OLS}_\beta}$ towards zero, it can be interpreted as the shrinkage factor \citep{polson_shrink_2011, piironen_sparsity_2017}.
Inference quality under each prior is governed by how $\tau_\alpha^{-2}$ and $d_z$ behave as $p$ grows.
We simplify the analysis by further marginalising $d_z$ 
(Lemma~\ref{lem:exact-data-prec}, Appendix~\ref{app:treatment-proofs}):
\begin{equation*}
    \mathbb{E}[d_z \mid \cdot] = \frac{1}{4\sigma^2}\bigl[(n - \mathrm{tr}(H_\beta)) + \mathbf{1}_n^T(I_n - H_\beta)\mathbf{1}_n\bigr].
\end{equation*}
Table \ref{tab:rct-rates} summarises the behaviour of the posterior of $\alpha$ as $p$ increases towards $n$, for each prior and whether heavy shrinkage is applied, which we define via the condition,
\begin{equation}\label{eq:heavy-shrinkage-text}
    (p+1)/n \;\gg\; R^2/(1-R^2), \qquad R^2 := \tau^2/(\tau^2+\sigma^2).
\end{equation}
Under this condition, the R2D2 penalty on $\beta$ dominates the spectrum of $X^TX$, so $H_\beta \to 0$ (uniformly over the Dirichlet weights, provided \eqref{eq:heavy-shrinkage-text} holds up to a $\log p$ factor, see Proposition~\ref{prop:rct-joint-approx}) and $\mathbb{E}[d_z \mid \cdot]$ tightens to the parametric scale $n/(2\sigma^2)$. 
Each row is proven with a corresponding supporting proposition in Appendix~\ref{app:treatment-proofs-i}.

\begin{table}[t]
\centering
\small
\caption{Behaviour of the marginal posterior of $\alpha$ in the finite-sample regime (i), with $n$ fixed and $p\to n$. ``Heavy shrinkage'' refers to condition~\eqref{eq:heavy-shrinkage-text}. 
The symbol ``$\downarrow$'' marks stochastic (in-probability) decrease in $p$, and $\Theta_p(\cdot)$ denotes a stochastic scale at fixed $n$ rather than a limit in $p$. 
Quantities written in $n$ are finite-$n$ constants rather than rates. The full table including the proportional regime (ii), $p=\lfloor\gamma n\rfloor$ with $n\to\infty$, is Table~\ref{tab:rct-rates-full} in Appendix~\ref{app:treatment-proofs-prop}.}
\label{tab:rct-rates}
\begin{tabular}{@{}lllll@{}}
\toprule
Prior & $\tau_\alpha^{-2}$ & $\mathbb{E}[d_z \mid \cdot]$ & $s$ & $\Var(\alpha \mid \cdot)$\\
\midrule
\multicolumn{5}{@{}l}{\textit{General regime (i): $n$ fixed, $p \to n$}} \\
Normal & $\mathcal{O}(1)$ & $>0$, $<n/(2\sigma^2)$ & bounded in $(0,1)$ & $\mathcal{O}(1)$, $\le \tau_\alpha^2$\\
Split        & $\mathcal{O}(1)$ & $>0$, $<n/(2\sigma^2)$ & bounded in $(0,1)$ & $\mathcal{O}(1)$, $\le \tau_\alpha^2$\\
Joint R2D2   & $\Theta_p(p/\tau^2)$ & $>0$, $<n/(2\sigma^2)$ & $\downarrow$, min $>0$ & $\downarrow$, min $>0$\\
\midrule
\multicolumn{5}{@{}l}{\textit{Under heavy shrinkage \eqref{eq:heavy-shrinkage-text} (regime (i))}} \\
Split        & $\mathcal{O}(1)$ & $\to n/(2\sigma^2)$ & $1 - \mathcal{O}(\sigma^2/n)$ & $\Theta(\sigma^2/n)$\\
Joint R2D2   & $\Theta_p(p/\tau^2)$ & $\to n/(2\sigma^2)$ & $\Theta_p\!\left(\tfrac{nR^2}{(1-R^2)p}\right)$, $\downarrow$ & $\Theta_p(\tau^2/p)$, $\downarrow$\\
\bottomrule
\end{tabular}

\medskip
    Proofs for each row are in Appendix~\ref{app:treatment-proofs}. 
    The Normal prior is covered by Proposition \ref{prop:rct-normal}, the Split prior by Proposition~\ref{prop:rct-split}, and the Joint R2D2 prior by Proposition \ref{prop:rct-joint} (general regime) and Proposition~\ref{prop:rct-joint-approx} (heavy shrinkage). 
    Frequentist coverages are derived in Proposition~\ref{prop:coverage}.
\end{table}
The main issue revealed in Table \ref{tab:rct-rates} is that the joint R2D2 prior increasingly shrinks $\alpha$ as $p$ grows and can bias its posterior mean.
This comes from the behaviour of $\tau^{-2}_{\alpha}$.
Since, marginally, $\phi_\alpha \sim \mathrm{Beta}(\mu = 1/(p+1), \varphi = a(p+1))$ under symmetric $a$, the conditional prior precision $\tau_\alpha^{-2}$ increases with $p$ at stochastic scale $\Theta_p(p/\tau^2)$.\footnote{This in-probability scale holds for all $a>0$.} 
The Normal and split priors avoid this by fixing $\tau_\alpha^{-2} = \mathcal{O}(1)$, which leaves $s \in (0,1)$ bounded and can be shown to yield bounded posterior variance of $\alpha$. 

Under the heavy shrinkage condition, the split prior dominates the Normal prior on the variance of the posterior of $\alpha$: the R2D2 component on $\beta$ induces an effective per-coordinate penalty $\sigma^2(p+1)/\tau^2$ that grows with $p$, dominating the eigenvalues of $X^TX$ uniformly and tightening $\mathbb{E}[d_z\mid \cdot]$ to the parametric scale $n/(2\sigma^2)$. The Normal prior's fixed penalty $\sigma^2$ never drives $H_\beta$ to zero, so it leaves $H_\beta$ close to the projection $P_X$ and its $\mathbb{E}[d_z\mid \cdot]$ stays strictly below the ceiling $n/(2\sigma^2)$ without tightening to it. 
Under the same condition, the joint R2D2 prior's $\mathbb{E}[d_z\mid \cdot]$ also reaches $n/(2\sigma^2)$, but is overwhelmed by the simultaneously growing $\tau_\alpha^{-2} = \Theta(p/\tau^2)$. 

To see for which dimensionality-signal ratio the joint R2D2 prior strongly shrinks $\alpha$, consider the plug-in approximation to $s$ at the prior mean, $s(\mathbb{E}[\phi_\alpha])$.
Under the heavy-shrinkage condition (Proposition~\ref{prop:rct-joint-approx}) this becomes
\begin{equation}
    s \;\approx\; \frac{n\tau^2}{n\tau^2 + 2\sigma^2(p+1)} \;=\; \frac{n R^2}{n R^2 + 2(1-R^2)(p+1)},\label{eq:shrinkage_ratio}
\end{equation}
for example, when $(p+1)/n = R^2/[2(1-R^2)]$ then the prior and data contributions to the posterior precision of $\alpha$ are equal, giving $s \approx 1/2$. Jensen's inequality guarantees that $\mathbb{E}[s(\phi_\alpha)] < s(\mathbb{E}[\phi_\alpha])$ since $s$ is concave in $\phi_{\alpha}$. Consequently, the closed-form approximation in Equation \ref{eq:shrinkage_ratio} represents an upper bound on the expected shrinkage.

\paragraph{Setup} 
We set the true data-generating process such that the treatment explains a fraction $\delta_T$ of the total explained variance, that is, $\delta_T = \alpha^{*2} z^Tz / (\alpha^{*2} z^Tz + \beta^TX^TX\beta)$, with covariate coefficients drawn from $\beta_j \sim \normal(0,0.05)$. This yields the true treatment coefficient
\begin{equation}
    \alpha^* = \sqrt{\frac{\delta_T \, \beta^T X^T X \beta}{z^T z \, (1 - \delta_T)}}. \label{eq:true-alpha}
\end{equation}
The residual variance $\sigma^2$ is set to control the overall $R^2$.\footnote{Solve $R^2 =  (\alpha^2z^Tz + \beta^TX^TX\beta)/ (\alpha^2z^Tz + \beta^TX^TX\beta + n\sigma^2)$ for $\sigma$.}
We simulate with $R^2 \in \{0.2, 0.8\}$, $\delta_T \in \{0.2,0.8\}$, $n = 150$, and $p = 100$, placing us in the regime $n > p$ but with a high ratio $p/n \approx 0.67$. 
We mostly focus on $R^2=0.2$, $\delta_T = 0.2$ here, results for the other configurations are in Appendix \ref{app:treatment-study}.

\paragraph{Takeaways}
\begin{table}[t]
\centering
\small
\caption{Experiment 4: Posterior recovery metrics for the treatment effect ($\alpha$). Average 90\% credible interval length (Len.), coverage probability (Cov.), and root mean square error (RMSE) over $500$ repetitions. The base model contains only the treatment variable and all other models include $p=100$ covariates. Bold marks lowest RMSE, all coverage values within $\pm 0.03$ of the nominal 0.9, and smallest interval length conditional on correct calibration.}
\begin{tabular}{@{}l ccc c ccc@{}}
\toprule
\multirow{2}{*}{\textbf{Model}} & \multicolumn{3}{c}{$\boldsymbol{\delta_T = 0.2}$} & \phantom{a} & \multicolumn{3}{c}{$\boldsymbol{\delta_T = 0.8}$} \\
\cmidrule{2-4} \cmidrule{6-8}
& \textbf{Len.} & \textbf{Cov.} & \textbf{RMSE} && \textbf{Len.} & \textbf{Cov.} & \textbf{RMSE} \\
\midrule
\multicolumn{8}{@{}l}{\textbf{Low Signal ($R^2 = 0.2$)}} \\
\midrule
Base (Only Treatment) & 0.92 & \textbf{0.92} & \textbf{0.27} && 0.92 & \textbf{0.91} & \textbf{0.27} \\
Normal                & 0.97 & \textbf{0.87} & 0.30          && 0.97 & \textbf{0.87} & 0.31          \\
Joint R2D2            & 0.49 & 0.12 & 0.61          && 0.95 & 0.41          & 0.59          \\
Split (R2 + Normal)   & \textbf{0.87} & \textbf{0.90} & \textbf{0.27} && \textbf{0.87} & \textbf{0.89} & \textbf{0.27} \\
\midrule
\multicolumn{8}{@{}l}{\textbf{High Signal ($R^2 = 0.8$)}} \\
\midrule
Base (Only Treatment) & 0.57 & 0.94          & 0.15          && 0.57 & 0.95          & 0.15          \\
Normal                & 0.51 & \textbf{0.88} & 0.16          && 0.51 & \textbf{0.88} & 0.16          \\
Joint R2D2            & 0.51 & 0.39          & 0.34          && 0.48 & 0.67 & 0.23          \\
Split (R2 + Normal)   & \textbf{0.48} & \textbf{0.90} & \textbf{0.14} && \textbf{0.48} & \textbf{0.89} & \textbf{0.14} \\
\bottomrule
\label{tab:treatment_recovery_metrics}
\end{tabular}
\end{table}Table~\ref{tab:treatment_recovery_metrics} confirms the analytical predictions. 
The joint R2D2 prior shows severe downward bias, consistent with the small shrinkage factor from Equation~\eqref{eq:shrinkage_ratio} at the data-generating $R^2 = 0.2$ ($s \approx 0.16$).\footnote{At this design the heavy-shrinkage condition~\eqref{eq:heavy-shrinkage-text} holds only with a modest margin, so $s \approx 0.16$ is an approximate upper bound on the expected shrinkage rather than an exact value.} 
This leads to narrow but poorly calibrated intervals. 
By contrast, the split prior yields the sharpest recovery of $\alpha^*$, near-nominal frequentist coverage (which we derived analytically, see Proposition~\ref{prop:coverage}, Appendix~\ref{app:treatment-proofs}), and lowest RMSE, competitive with the base model. 
The base model's intervals remain close to calibrated but are wider since it does not benefit from covariate adjustment. 
The Normal model avoids the joint R2D2 prior's bias, since $\tau_\alpha^{-2}=\mathcal{O}(1)$ keeps $s$ bounded away from zero, but its weak fixed penalty leaves $H_\beta$ close to the full projection onto $\mathrm{col}(X)$, leading to larger posterior variance of $\alpha$. 
\begin{figure}[tp]
    \centering
    \includegraphics[width=\textwidth]{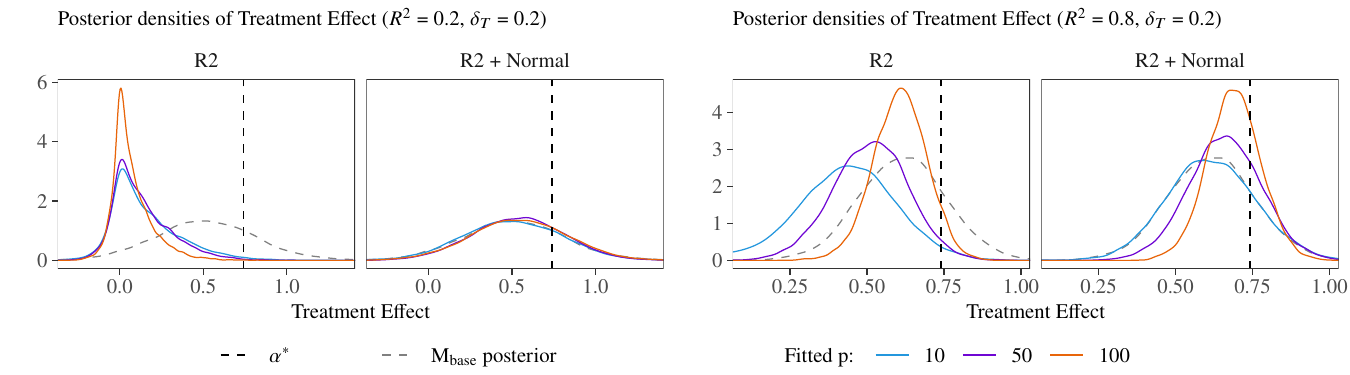}
    \vspace{-0.7cm}
    \caption{Experiment 4: Posterior of the treatment effect for increasing $p$. Grey lines show the treatment-only model $\mathrm{M}_{\text{base}}$; coloured lines show $\mathrm{M}_{\text{full}}$ for different $p$. When the signal is relatively small ($R^2 = 0.2$, left panel of figures), the joint R2D2  model produces increasing bias with $p$. When the signal is large enough ($R^2 = 0.8$, right panel of figures), the R2D2 joint model concentrates on $\alpha^*$ with $p$. In either setting, the split R2 model becomes sharper around $\alpha^*$ with $p$, which is more pronounced for $R^2=0.8$.} \label{fig:posterior-recovery-treatment-increasing-p}
\end{figure}
Figure~\ref{fig:posterior-recovery-treatment-increasing-p} isolates the effect of increasing $p$, where the DGP stays constant, but the treatment posterior is estimated with expanding subsets of the covariates. 
With strong shrinkage ($R^2 = 0.2, \delta_T = 0.2$), the joint R2D2 prior becomes increasingly biased, whereas the split prior gives sharper posteriors without further bias.
However, when the shrinkage is moderate due to high $R^2$ (right panel), and therefore the expected bias smaller, the joint R2D2 prior also concentrates around $\alpha^*$. 

\section{Discussion}\label{sec:discussion}
This work revisits the role of model selection in Bayesian workflows and argues that it is not a substitute for building good models in the first place.
We investigate when selecting a simpler model is necessary, when it can harm out-of-sample predictive performance, and how prior choices affect safe model expansion.
Our main proposal is to use \textit{predictively consistent} priors instead of enforcing parsimony at the level of individual model components through model selection.

We define predictively consistent and parsimonious priors in Section \ref{subsec:methods-formalisation-prior-desiderata} and illustrate how such priors can make model selection unnecessary in settings where it is often recommended to avoid overfitting. 
Our results lead to concrete recommendations for checking joint prior implications and building complex yet well-generalising models (Section \ref{intro:takeaways-for-modellers}). 
Across our examples and experiments in Section \ref{subsec:examples-pred-consistent-priors-illustrative-example-part-3} and \ref{sec:experiments}, 
large models with predictively consistent priors usually reach equal or better out-of-sample predictive performance than selected simpler models.
When simpler models perform better, gains are usually small.
In some cases, selection from a nested set of models even harms test performance relative to using the largest model with a suitable joint prior.
This also matters for iterative modelling workflows. 
A common concern is that expanding models step by step can lead to overfitting.
Our results suggest that predictively consistent priors can safeguard against deteriorating test performance and thereby support safe model expansion.

The relevant question is not whether a prior is specified jointly or component-wise, but whether its joint implications reflect the structure of the modelling problem. 
This is especially important for models with many covariates, where reasonable marginal priors can still imply an unreasonable prior predictive distribution. 
Independence is not a safe default, but imposing dependence can also be harmful. 
Section \ref{subsec:ex-4-rct-studies} further illustrates that prior structure can matter for inference, not only for prediction. 
In the considered setting, splitting the prior on the treatment effect and the other covariate effects can yield sharper posterior intervals without adding bias, whereas a joint prior over treatment and covariate effects can introduce bias.

Priors based on interpretable joint quantities are one practical way to make prior specification more manageable \citep[see, e.g.,][]{aguilar_intuitive_2023, aguilar_generalized_2025}.
We discuss R2D2 priors and function-space priors as two examples for constructing predictively consistent
priors.
Other priors may also satisfy predictive consistency, provided that they control the induced signal distribution as complexity increases. 
Thus, the main message is not tied to a particular prior family, but the question is whether the chosen prior keeps the prior predictive distribution sensible and stable with increasing model complexity.

Our work has several limitations.
First, we focus on additive models.
This allows a clean formalisation, but also limits the scope of our results.
Future work should extend the concept of predictively consistent priors to non-additive modelling scenarios. 
Second, our experiments focus on scenarios with $n \geq p$.
We expect the problematic patterns observed for independent normal priors to get much worse when $n<p$. 
However, a systematic investigation of other $n$-$p$ regimes remains open.
Third, we compare only selected prior configurations.
Results for other priors might fall between the considered comparison points, for example, when scaling normal priors depending on the number of covariates.
Such $p$-scaled normal priors would also restrict prior predictive variance but at the cost of shrinking all coefficients strongly towards zero.
More work is needed to compare a wider range of models and prior choices.

Finally, most of our experiments focus on predictive performance, assuming that the considered models are causally sensible. 
In more realistic cases of causal analysis, we should not blindly include all possible potential predictors, but need to consider the causal structure \citep[e.g., possible confounders, colliders, and backdoor effects, see][]{mcelreath_statistical_2020} to include only predictors that would minimise the bias in the estimation of the effect of interest.
Predictive consistency alone does not guarantee unbiased estimation of a causal effect.
Our example in Section \ref{subsec:ex-4-rct-studies} probes aspects of this in one controlled setting.
More work is needed to define desirable prior properties for causal analyses, where the focus is not only on predictive performance. 

\subsection*{Acknowledgements}

We thank Javier Enrique Aguilar and Yann McLatchie for helpful comments and discussions, and Sölvi Rögnvaldsson for his web application supporting R2D2 prior elicitation. 
We acknowledge the computational resources provided by the Aalto Science-IT project. 
This work was supported by the Research Council of Finland grant (313122) and the Research Council of Finland Flagship programme: Finnish Center for Artificial Intelligence (FCAI).


\begin{refcontext}[sorting=nyt]
\printbibliography
\end{refcontext}

\appendix

\section{Notation}\label{app:notation}
Table~\ref{tab:notation} summarises the notation used throughout the paper. We work directly with realised quantities and do not introduce separate symbols for random variables. Capital Latin letters denote matrices, and lower-case letters denote vectors and scalars, with dimensions clear from context.
\begin{table}[h]
    \centering
    {\small
    \begin{tabularx}{\textwidth}{lX}
        \toprule
        Symbol & Meaning \\
        \midrule
        $n$, \; $i=1,\dots,n$ & Number of observations and observation index \\
        $p$, \; $j=1,\dots,p$ & Number of covariates and covariate (coefficient) index \\
        $y_i$ & Scalar outcome of observation $i$ \\
        $y=(y_1,\dots,y_n)^T$ & Outcome vector \\
        $\tilde{y}$ & Test (out-of-sample) outcome \\
        $X\in\mathbb{R}^{n\times p}$ & Design matrix \\
        $x_i\in\mathbb{R}^p$ & Covariate vector of observation $i$ (the $i$-th row of $X$) \\
        $x_{ij}$ & Scalar covariate entry, the $j$-th covariate of observation $i$ \\
        $\theta$ & Model parameters (general) \\
        $\beta=(\beta_1,\dots,\beta_p)^T$, \; $\beta_j$ & Regression coefficient vector and its $j$-th entry, with $\theta_d\equiv\beta_{1:d}$ in the linear examples \\
        $\eta=f(x;\theta)$ & Systematic component (linear predictor), written $\eta_d$ at complexity $d$ \\
        $\mathcal{M}=\{\mathrm{M}_k\}_{k=1}^{K}$, \; $k$ & Finite collection of candidate models and model index \\
        $d$ & Model-complexity index, counting covariates, basis functions, or group levels, equal to $p$ in the regression examples \\
        $d_z$ & Residualised data precision for the treatment coefficient (Section~\ref{subsec:ex-4-rct-studies}) \\
        \midrule
        $p(\cdot)$ & Generic density, used for the prior, likelihood, posterior, posterior predictive, and marginal distributions, distinguished by their arguments and by model subscripts such as $p_{\mathrm{M}_k}$ and $p_t$ \\
        $\{p_d\}_{d\in\mathbb{N}}$ & Prior sequence, with $p_d$ the joint prior of the model at complexity $d$ \\
        $q_d$ & Induced (pushforward) distribution of the systematic component $\eta_d$ \\
        $\mathbb{E}[\cdot]$, $\Var(\cdot)$, $\mathrm{Cov}(\cdot)$ & Expectation, variance, and covariance. A subscript names the distribution, as in $\mathbb{E}_{p_d}$ under the prior $p_d$ and $\Var_{q_d}$ under $q_d$ \\
        \bottomrule
    \end{tabularx}}
    \caption{Summary of notation.}
    \label{tab:notation}
\end{table}

\clearpage

\section{Relationship between second order predictive consistency and predictive parsimony}\label{app:non-monotonic-parsimony}

This appendix proves Lemma~\ref{lem:pc-pp-relationship} and provides the supporting technical result (Proposition~\ref{prop:non-monotonic-parsimony}). For the derivations below, we assume that the prior predictive signal distributions are conditional on $V_d$
for any $d$.
\begin{proof}[Proof of Lemma~\ref{lem:pc-pp-relationship}]
Throughout, $q_d = \normal(0, V_d)$ with $V_d > 0$.

\emph{Part~(i): $\mathrm{S} \not\subseteq \mathrm{F}$.}
Choose $V_d = L + (-1)^d d^{-1/3}$ for any $L > 1$ (so that $V_1 = L - 1 > 0$ and hence $V_d > 0$ for all $d$).
Then $\sup_d V_d < \infty$ and $V_d \to L$, so this sequence is in $\mathrm{S}$.
Since $(-1)^{d+1} = -(-1)^d$, consecutive terms have opposite signs, so $|V_{d+1} - V_d| = (d+1)^{-1/3} + d^{-1/3} \approx 2d^{-1/3}$ for large $d$. Hence $(V_{d+1} - V_d)^2 \approx 4d^{-2/3}$ and $\sum_{d=1}^{\infty} d^{-2/3} = \infty$ ($p$-series with $p \leq 1$).
By Proposition~\ref{prop:non-monotonic-parsimony}(i), $\sum_{d=1}^{\infty} D_{\mathrm{KL}} = \infty$, so this sequence is not in $\mathrm{F}$.

\emph{Part~(ii): $\mathrm{F} \not\subseteq \mathrm{S}$.}
For the independent normal prior, $V_d = d\overline{\tau}^2 \to \infty$, hence, this prior produces a sequence which is not in $\mathrm{S}$.
Furthermore, $V_{d+1}/V_d = (d+1)/d = 1 + 1/d$, so $D_{\mathrm{KL}}(q_{d+1}\|q_d) = \frac{1}{2}[1/d - \log(1+1/d)]$.
Since $\log(1+x) = x - x^2/2 + O(x^3)$, we get $D_{\mathrm{KL}} = \frac{1}{4d^2} + O(d^{-3})$, and $\sum_{d=1}^{\infty} d^{-2} < \infty$. Hence, this sequence is in F.

\emph{Part~(iii): Characterisation of $\mathrm{S} \cap \mathrm{F}$.}
By definition, $\mathrm{S}$ requires $V_d$ bounded and convergent.
Given boundedness ($0 < V_{\min} \leq V_d \leq C$), Proposition~\ref{prop:non-monotonic-parsimony}(i) shows $\sum_{d=1}^{\infty} D_{\mathrm{KL}} < \infty \iff \sum_{d=1}^{\infty}(V_{d+1}-V_d)^2 < \infty$.
Proposition~\ref{prop:non-monotonic-parsimony}(ii) states the condition on $V_d$  sufficiently small to yield a sequence in the intersection of F and S. Note that any prior which produces a \emph{monotone sequence} (and is in S) fulfils this condition: if $V_d$ is monotone and converges to $L > 0$, then $\sum_{d=1}^{\infty}|V_{d+1}-V_d| = |L - V_1| < \infty$\footnote{Note that this covers both $V_d \uparrow L$ and $V_d\downarrow L$.} by telescoping (all terms have the same sign), so $\sum_{d=1}^{\infty}(V_{d+1}-V_d)^2 \leq \sup_d|V_{d+1}-V_d| \cdot \sum_{d=1}^{\infty}|V_{d+1}-V_d| < \infty$.
\end{proof}

The following proposition, used above, characterises KL summability in terms of squared increments.

\begin{proposition}\label{prop:non-monotonic-parsimony}
Let $q_d = \normal(0, V_d)$ with $0 < V_{\min} \leq V_d \leq C < \infty$ for all $d$, and let $\delta_{d+1} := V_{d+1} - V_d$.
\begin{enumerate}[(i)]
    \item $\sum_{d=1}^{\infty} D_{\mathrm{KL}}(q_{d+1} \| q_d) < \infty$ if and only if $\sum_{d=1}^{\infty} \delta_{d+1}^2 < \infty$.
    \item A sufficient condition for $D_{\mathrm{KL}}$ summability is $|\delta_{d+1}| = \mathcal{O}(d^{-g})$ for some $g > 1/2$. This exponent is sharp: for every $0 < g \leq 1/2$, there exist variance paths satisfying Definition~\ref{def:second-order-pc} with $|\delta_{d+1}| = \Theta(d^{-g})$ for which $\sum_{d=1}^{\infty} D_{\mathrm{KL}} = \infty$.
\end{enumerate}
\end{proposition}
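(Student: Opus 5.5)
The plan is to work directly from the closed form \eqref{eq:kl-gaussian}. Write $r_d := V_{d+1}/V_d$ and $h(r) := \tfrac12\,(r - 1 - \log r)$, so that $D_{\mathrm{KL}}(q_{d+1}\|q_d) = h(r_d)$. The function $h$ is smooth and convex on $(0,\infty)$, with $h(1)=h'(1)=0$ and $h''(1)=\tfrac12$.

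The bounds $V_{\min}\le V_d\le C$ confine $r_d$ to the compact interval $[V_{\min}/C,\; C/V_{\min}]\subset(0,\infty)$. On that interval, the Taylor form with remainder (or simply continuity of $h(r)/(r-1)^2$, extended by the value $\tfrac14$ at $r=1$, and positivity away from $r=1$) gives constants $0<c_1\le c_2<\infty$ such that
\[
c_1 (r-1)^2 \le h(r) \le c_2 (r-1)^2 .
\]
Next we use $r_d - 1 = \delta_{d+1}/V_d$ with $V_d\in[V_{\min},C]$, which yields
\[
\frac{c_1}{C^2}\,\delta_{d+1}^2 \;\le\; D_{\mathrm{KL}}(q_{d+1}\|q_d) \;\le\; \frac{c_2}{V_{\min}^2}\,\delta_{d+1}^2 .
\]
Summing over $d$ and applying the comparison test proves (i) in both directions. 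The main care point is justifying the uniform two-sided constants: positivity of $h(r)/(r-1)^2$ follows from strict convexity of $h$ with its minimum at $1$, and compactness gives the bounds.

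For the sufficiency claim in (ii), suppose $|\delta_{d+1}|=\mathcal O(d^{-g})$ with $g>1/2$. Then $\delta_{d+1}^2=\mathcal O(d^{-2g})$ with $2g>1$, which is summable as a $p$-series, and (i) gives KL summability.

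For sharpness at a given $0<g\le 1/2$, I would mimic the construction in the proof of Lemma \ref{lem:pc-pp-relationship}(i) and set
\[
V_d = L + (-1)^d\, \varepsilon\, d^{-g},
\]
with $L>0$ and $0<\varepsilon<L$, so that $V_d\ge L-\varepsilon>0$ for all $d$ and $V_d\le L+\varepsilon$. This sequence is bounded, converges to $L$, and therefore satisfies Definition \ref{def:second-order-pc} with $C=L+\varepsilon$. Because consecutive terms alternate in sign,
\[
|\delta_{d+1}| = \varepsilon\bigl((d+1)^{-g}+d^{-g}\bigr) = \Theta(d^{-g}).
\]
Hence $\delta_{d+1}^2=\Theta(d^{-2g})$ with $2g\le1$, so $\sum_d \delta_{d+1}^2=\infty$. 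The lower bound in (i) then gives $\sum_d D_{\mathrm{KL}}=\infty$. This second half is routine once (i) is established.
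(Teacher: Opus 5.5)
Your proof is correct and follows the paper's argument. The paper likewise gets the two-sided bound $D_{\mathrm{KL}}(q_{d+1}\|q_d)\asymp\delta_{d+1}^2$ from continuity and positivity of the normalised function $(r-1-\log r)/(r-1)^2$ on a compact interval (written in the shifted variable $\delta_{d+1}/V_d$), and then uses the $p$-series comparison. The only difference is the sharpness example: you oscillate around the limit with $V_d=L+(-1)^d\varepsilon d^{-g}$, as in the paper's proof of Lemma~\ref{lem:pc-pp-relationship}(i), while the paper takes the partial sums $V_1+\sum_{k<d}(-1)^k k^{-g}$ and invokes Leibniz's test; both constructions work.
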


\begin{proof}
\emph{Part (i).}
Define $r_d := \delta_{d+1}/V_d$, so that $V_{d+1}/V_d = 1 + r_d$ and
\begin{equation}
    D_{\mathrm{KL}}(q_{d+1} \| q_d) = \tfrac{1}{2}\bigl[r_d - \log(1 + r_d)\bigr]. \label{eq:kl-relative-increment}
\end{equation}
Consider the function $h(r) := [r - \log(1+r)]/r^2$ for $r \in (-1,\infty) \setminus \{0\}$, extended by continuity to $h(0) = 1/2$.
This function is continuous and strictly positive on $(-1,\infty)$.
Since $V_{d+1}/V_d \in [V_{\min}/C,\; C/V_{\min}]$, the values $r_d$ lie in a compact subset of $(-1,\infty)$. By the extreme value theorem, the continuous function $h$ attains a minimum $c_1 > 0$ and a maximum $c_2 < \infty$ on this set.
Therefore $c_1\, r_d^2 \leq D_{\mathrm{KL}}(q_{d+1}\|q_d) \leq c_2\, r_d^2$ for all $d$, and substituting $r_d = \delta_{d+1}/V_d$ with $V_{\min} \leq V_d \leq C$ gives
\begin{equation}
    \frac{c_1}{C^2}\,\delta_{d+1}^2 \;\leq\; D_{\mathrm{KL}}(q_{d+1}\|q_d) \;\leq\; \frac{c_2}{V_{\min}^2}\,\delta_{d+1}^2,
\end{equation}
establishing the equivalence $\sum_{d=1}^{\infty} D_{\mathrm{KL}} < \infty \iff \sum_{d=1}^{\infty} \delta_{d+1}^2 < \infty$.

\emph{Part (ii): Sufficiency.}
If $|\delta_{d+1}| \leq A\, d^{-g}$ for some $A > 0$ and $g > 1/2$, then $\delta_{d+1}^2 \leq A^2 d^{-2g}$ and $\sum_{d=1}^{\infty} d^{-2g} < \infty$ since $2g > 1$.

\emph{Part (ii): Sharpness.}
Fix $0 < g \leq 1/2$ and choose $V_1$ large enough that $V_d := V_1 + \sum_{k=1}^{d-1}(-1)^k k^{-g} \geq V_{\min} > 0$ for all $d$ (possible since the alternating series converges by Leibniz's test, so the partial sums are bounded).
Then $V_d \to L := V_1 + \sum_{k=1}^{\infty}(-1)^k k^{-g} < \infty$ and $\sup_d V_d < \infty$, so Definition~\ref{def:second-order-pc} is satisfied.
However, $\delta_{d+1} = (-1)^d d^{-g}$, giving $\sum_{d=1}^{\infty} \delta_{d+1}^2 = \sum_{d=1}^{\infty} d^{-2g} = \infty$ since $2g \leq 1$.
By part~(i), $\sum_{d=1}^{\infty} D_{\mathrm{KL}} = \infty$.
\end{proof}

\section{Connections between $\Var(\eta)$, Bayesian $R^2$, and $\mathrm{elpd}$}

We assume a data-generating process $y \sim \normal(x^T \beta^*, \sigma^{2*})$ and the model
\begin{align}
    y \mid \beta, \sigma^2 \sim \normal(x^T \beta, \sigma^2).  
\end{align} 
Our prior assumptions for $\beta$ influence $\Sigma_{\beta} = \Var(\beta)$. 
This affects the scale of the linear predictor $\eta(x) = x^T\beta$, since the marginal variance of the linear predictor is $\Var\left(\eta(x)\right) = x^T \Var(\beta) x = x^T \Sigma_{\beta} x$. 
For data $D = \{x_i, y_i\}_{i=1}^n$, and new observation $\{\tilde{x}, \tilde{y}\}$, we can approximate the posterior predictive distribution $p(\tilde{y} \mid \tilde{x}, D)$ with $\normal(\mu(\tilde{x}), v(\tilde{x}))$, where 
\begin{align}
   \mu(\tilde{x}) &=  \mathbb{E}\left[\eta(\tilde{x}) \mid D\right] = \tilde{x}^T \mathbb{E}\left[\beta \mid D\right] 
   \\
   \label{eq:app-connections-predictive-variance}
   v(\tilde{x}) &= \mathbb{E}\left[\sigma^2 \mid D\right] + \Var(\eta(\tilde{x}) \mid D) = \mathbb{E}\left[\sigma^2 \mid D\right] 
   + \tilde{x}^T \Var(\beta \mid D) \tilde{x}. 
\end{align}
This means that the predictive variance $v(\tilde{x})$ at $\tilde{x}$ is a sum of the  posterior mean of the observation noise variance and the posterior variance of the linear predictor. 
For a normal model with normal prior $\beta \sim \normal(0, \Sigma_{\beta})$, the posterior mean is $\mathbb{E}[\beta \mid D] = (X^TX + \sigma^2 \Sigma_{\beta}^{-1})^{-1} X^T y$ (with design matrix $X$ and outcome vector $y$) and it becomes apparent that $\Sigma_{\beta}$ influences both predictive mean and variance. 

Using $\mu(\tilde{x})$ and $v(\tilde{x})$, the pointwise log predictive density for new observation $\{\tilde{x}, \tilde{y}\}$ can be written as  
\begin{align}
    \mathrm{lpd}(\tilde{x}, \tilde{y}) = \log p(\tilde{y} \mid \tilde{x}, D) \approx - \frac{1}{2} \left[\log (2 \pi v(\tilde{x})) + \frac{(\tilde{y} - \mu(\tilde{x}))^2}{v(\tilde{x})}\right].
    \label{eq:app-connections-lpd-exact}
\end{align} 
Taking the expectation with respect to the true DGP 
\begin{align}
    \mathbb{E}[\mathrm{lpd}(\tilde{x}, \tilde{y})] &= \mathbb{E}\left[ - \frac{1}{2} \left[\log (2 \pi v(\tilde{x})) + \frac{(\tilde{y} - \mu(\tilde{x}))^2}{v(\tilde{x})}\right]\right] \\
    &= - \frac{1}{2} \log (2 \pi v(\tilde{x})) - \frac{1}{2} \frac{\mathbb{E}\left[(\tilde{y} - \mu(\tilde{x}))^2\right]}{v(\tilde{x})} \\
    &= - \frac{1}{2} \log (2 \pi) - \frac{1}{2}\log  (v(\tilde{x})) - \frac{1}{2} \frac{\sigma^2_* + (\tilde{x}^T\beta_* - \mu(\tilde{x})) ^2}{v(\tilde{x})},  
\end{align}
where we explicitly separate the true noise variance $\sigma^2_*$ from the model parameter $\sigma^2$, whose posterior mean affects the predictive variance $v(\tilde{x})$ (see \eqref{eq:app-connections-predictive-variance} above). 
We see that $\mathbb{E}[\mathrm{lpd}(\tilde{x}, \tilde{y})]$ is affected by $(\tilde{x}^T\beta_* - \mu(\tilde{x}))^2$ and $v(\tilde{x})$ through $- \frac{1}{2}\log  (v(\tilde{x})) $ and $ - \frac{1}{2} \frac{\sigma^2_* + (\tilde{x}^T\beta_* - \mu(\tilde{x})) ^2}{v(\tilde{x})}$. 
When $\Var(\eta(x))$ increases, $v(\tilde{x})$ increases. 
Then, $\mathbb{E}[\mathrm{lpd}(\tilde{x}, \tilde{y})]$ is penalised more with $-\frac{1}{2}\log  (v(\tilde{x}))$.

The approximation with a normal distribution, and, therefore, also \eqref{eq:app-connections-lpd-exact}, are only exact for a conjugate model with normal priors and known $\sigma^2$.
Otherwise, we can use posterior draws $\beta^{(s)}$ and $\sigma^{2 (s)}$
to get 
\begin{align}
    \mu(x) &\approx \widehat{\mu}(x) = \frac{1}{S} \sum_{s=1}^S \eta^{(s)}(x) \\ 
    v(x) &\approx \widehat{v}(x) = \frac{1}{S} \sum_{s=1}^S \sigma^{2 (s)} + \frac{1}{S} \sum_{s=1}^S \left(\eta^{(s)}(x) - \widehat{\mu}(x)\right)^2,  
\end{align}
where $\eta^{(s)}(x) = x^T \beta^{(s)}$. 
Given posterior draws, we could again approximate $\mathrm{lpd}$ as in \eqref{eq:app-connections-lpd-exact} by replacing $\mu(x)$ with the predictive mean estimate $\widehat{\mu}(x)$ and $v(x)$ with the predictive variance estimate $\widehat{v}(x)$, but it is more accurate to use exact Monte Carlo 
\begin{align}
    \mathrm{lpd}(x, y) &\approx 
    \log \frac{1}{S} \sum_{s=1}^S \left[ \frac{1}{2 \pi \sigma^{2 (s)}} \exp\left( - \frac{\left(y - \eta^{(s)}(x)\right)^2}{2 \sigma^{2 (s)}}\right) \right].  
\end{align}
Again, the prior choices for $p(\beta)$ influence $\Var\left(\eta^{(s)}\right)$, that is, the dispersion of $\left\{\eta^{(s)}\right\}_{s=1}^S$, through $\Var(\beta \mid D) = \Sigma_{\beta}$. 
This affects the predictive variance via $\frac{1}{S} \sum_{s=1}^S \left(\eta^{(s)}(x) - \widehat{\mu}(x)\right)^2$ and, therefore, Bayesian $R^2$ as defined by \citet{gelman_r-squared_2019}. 
Moreover, it also influences the spread 
in the term $\left(y - \eta^{(s)}(x)\right)^2$ in $\mathrm{lpd}$ which, in turn, could affect computation or approximation of $\mathrm{elpd}_{\mathrm{loo}}$ and $\mathrm{elpd}_{\mathrm{test}}$.

\begin{align}
    p(\beta) \to
    \Var\left(\eta^{(s)}\right) \left[\leftrightarrow R^2\right] \to \mathrm{lpd} \to \mathrm{elpd}
\end{align}

To tie this back to results presented before, see Section \ref{subsec:methods-formalisation-prior-desiderata}, \ref{subsubsec:methods-r2d2} and \ref{subsec:methods-function-priors} where we provide details on how different explicit or implied prior choices can affect $\Var(\eta)$ under increasing model complexity. 

\citet{gelman_r-squared_2019} also define a residual-based Bayesian $R^2$ as 
\begin{equation}
    R^2 = \frac{\Var_{\mu}}{\Var_{\mu} + \Var_{\text{res}}}.
\end{equation} 
In particular, following their notation, when using draws from the residual distribution, we get 
\begin{equation}
    \Var_{\text{res}} = V_{i=1}^n \hat{e}_i^{(s)} = \frac{1}{n-1} \sum_{i=1}^n \left(\hat{e}_i^{(s)}\right)^2 = \frac{1}{n-1} \sum_{i=1}^n \left(y_i - \hat{y}_i^{(s)}\right)^2
\end{equation}
for the residual-based $R^2$.  
At the same time, posterior mean squared error (MSE) values can be computed by evaluating, 
\begin{equation}
    \mathrm{MSE} = \frac{1}{n} \sum_{i=1}^n \left(y_i - \hat{y}_{i}^{(s)}\right)^2
\end{equation}
for each posterior prediction $\hat{y}_i^{(s)}$ and each $y_i$ with $i=1, \cdots n$. 
This directly corresponds to $\Var_{\text{res}}$ for the residual based $R^2$ by \citet{gelman_r-squared_2019} that is discussed also in the corresponding online appendix \href{https://avehtari.github.io/bayes_R2/bayes_R2.html}{https://avehtari.github.io/bayes\_R2/bayes\_R2.html}. 

\section{Illustrative example: Part 1}\label{app:illustrative-example-part-1}

\subsection{Illustrative example: Part 1. Results with normal priors versus R2D2 prior}\label{app:illustrative-example-part-1-normal-r2d2}

Figure \ref{fig:app-illustrative-example-dgp-normal-elpd-test-models-some-selection-aggregation-strategies} shows results for $\alpha \sim \normal(0, 2.5)$, $\sigma_{\epsilon}^2 \sim \expdist(1)$ with different prior choices for $\beta$ in the full model ($\mathrm{M}_1$): 
\begin{enumerate}[nosep]
    \item $\beta \sim \normal(0, 1)$, and 
    \item an R2D2 prior with mean $\mu_{R^2}=\sfrac{1}{3}$ and precision $\varphi_{R^2} = 3$.
\end{enumerate}
For a single covariate, the Dirichlet prior for the variance decomposition parameter collapses to a constant, but the Beta prior on $R^2$ still induces a prior for $\beta$ via $R^2 = \beta^2 / (1 + \beta^2)$. 
We choose the R2D2 prior here despite the low dimensionality of the problem because it retains boundedness while putting most prior mass at small $R^2$ values near zero in our chosen configuration.
The specific shape across the range of $R^2$ values can be adjusted through the configuration of the R2D2 prior.
\begin{figure}[htp!]
    \centering
    \includegraphics[width=0.7\linewidth]{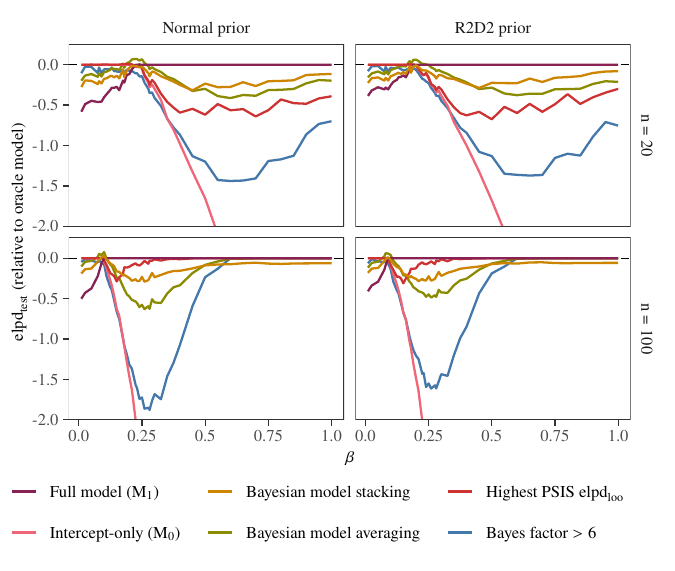}
    \vspace{-0.3cm}
    \caption{Illustrative example: Part 1. We compare predictive performance with $\elpdPlain$ on independent test data ($\mathrm{elpd}_{\text{test}}$) for the full model with one covariate ($\mathrm{M}_1$), the intercept-only model ($\mathrm{M}_0$), and four model selection and averaging strategies relative to the oracle model \eqref{eq:oracle-def} across different training data sizes (rows) and prior choices (columns). 
    The x-axis shows true effect sizes for $\beta$.
    Results are averages across $500$ repetitions and rescaled to the same scale as $\mathrm{elpd}_{\text{loo}}$. 
    Compared to results with normal prior, $\mathrm{elpd}_{\text{test}}$ for $\mathrm{M}_1$ is closer to the oracle model when using the R2D2 prior.
    For effect sizes $>0.25$, $\mathrm{M}_1$ is the oracle model for both prior choices.}
    \label{fig:app-illustrative-example-dgp-normal-elpd-test-models-some-selection-aggregation-strategies}
\end{figure}
We observe that the R2D2 prior can slightly improve the full model's worst-case predictive performance on independent test data in the (small) section of $\beta$ values where the intercept-only model $\mathrm{M}_0$ and other strategies have higher predictive performance.
This improvement for $\mathrm{M}_1$ is bigger when $n$ is smaller. 

\clearpage

\subsection{Illustrative example: Part 1. Comparison to Student-t DGP}\label{app:methods-illustrative-example-part-1-student-t-dgp-normal-vs-r2d2-prior}
To compare the previously presented results in Section \ref{subsec:methods-illustrative-example-part-1}, where $\mathrm{M}_1$ contains the true model, to a scenario where the DGP is similar but not as $\mathrm{M}_1$, we assume 
\begin{align}
    y_i \sim \text{Student-t}(\nu, \alpha + \beta x_i, \sigma), \text{ with } x_i \sim \normal(0,1), \nu = 20, \sigma=1. 
    \label{eq:illustrative-example-dgp-stud-t}
\end{align}
Figure \ref{fig:app-illustrative-example-dgp-stud-t-elpd-test-models-some-selection-aggregation-strategies} shows similar patterns as discussed for the normal DGP in Figure \ref{fig:intro-single-predictor-only-models-prior-normal-dgp-normal} and \ref{fig:app-illustrative-example-dgp-normal-elpd-test-models-some-selection-aggregation-strategies}. 
\begin{figure}[htp!]
    \centering
    \includegraphics[width=0.7\linewidth]{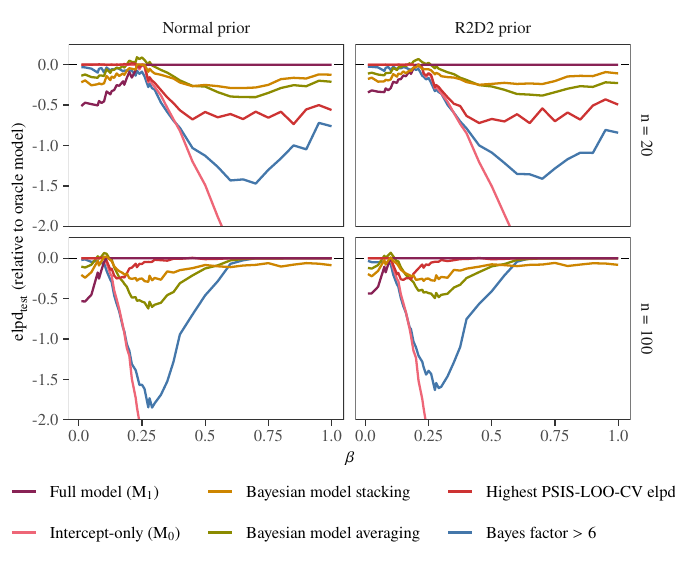}
    \caption{Illustrative example: Part 1. 
    We compare predictive performance on independent test data ($\mathrm{elpd}_{\text{test}}$) for the full model with one covariate ($\mathrm{M}_1$), the intercept-only model ($\mathrm{M}_0$), and four model selection and averaging strategies relative to the oracle model \eqref{eq:oracle-def} across different training data sizes (rows) and prior choices (columns). 
    We employ a Student-t DGP with degrees of freedom $\nu = 20$.
    The x-axis shows the chosen effect sizes for $\beta$.
    Results are averages across $500$ repetitions and rescaled to the same scale as $\mathrm{elpd}_{\text{loo}}$.}
    \label{fig:app-illustrative-example-dgp-stud-t-elpd-test-models-some-selection-aggregation-strategies}
\end{figure}

\subsection{Illustrative example: Part 1. Using model probabilities, pseudo-BMA, LOO-BB and stacking weights for model selection}\label{app:illustrative-example-other-selection-strategies}
The difference in predictive performance between two models $\mathrm{M}_0$ and $\mathrm{M}_1$ can be estimated from the pointwise differences in predictive log densities. 
Given reliably estimated differences $\Delta \widehat{\mathrm{elpd}}_{\text{loo}}(\mathrm{M}_0, \mathrm{M}_1 \mid y)$ between models $\mathrm{M}_0$ and $\mathrm{M}_1$ and well-calibrated standard error for the estimator of the differences $\widehat{\mathrm{se}}_{\text{loo}}(\mathrm{M}_0, \mathrm{M}_1 \mid y)$, we can use a normal approximation to compute the probability that one model is better than another \citep{vehtari_bayesian_2002, sivula_uncertainty_2025} and use this for model selection as illustrated in Figure \ref{fig:app-illustrative-example-selection-with-model-prob}. 
\begin{figure}[htp!]
    \centering
    \includegraphics[width=0.7\linewidth]{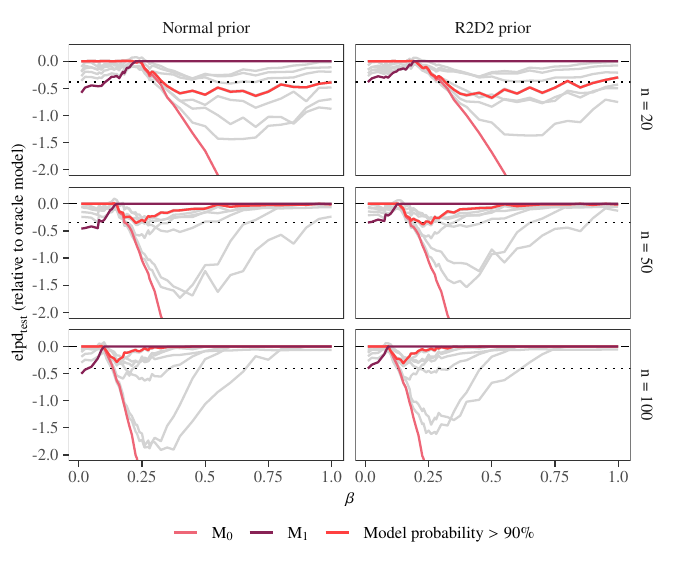}
    \vspace{-0.4cm}
    \caption{Illustrative example: Part 1. Out-of-sample performance for the intercept-only model $\mathrm{M}_0$, the full model $\mathrm{M}_1$ and selection with model probability $>90\%$. The other methods considered in Section 3 are plotted as gray lines in the background. The dotted lines indicate $\mathrm{M}_1$'s worst-case performance with R2D2 prior.}
    \label{fig:app-illustrative-example-selection-with-model-prob}
\end{figure}
\citet{sivula_uncertainty_2025} provide theoretical and experimental results for conditions when the standard error estimate (and corresponding normal approximation) is reasonably well-calibrated. 
This paper is not focusing on properties of Bayesian model stacking or other model weights, but, based on our experiments, weights for stacking, pseudo-BMA and LOO-BB are often non-zero for both models when $n$ is small (see Figure \ref{fig:app-illustrative-example-selection-with-stacking-weights}, \ref{fig:app-illustrative-example-selection-with-pbma-weights}, \ref{fig:app-illustrative-example-selection-with-loo-bb-weights}). 
\begin{figure}[htp!]
    \centering
    \includegraphics[width=0.7\linewidth]{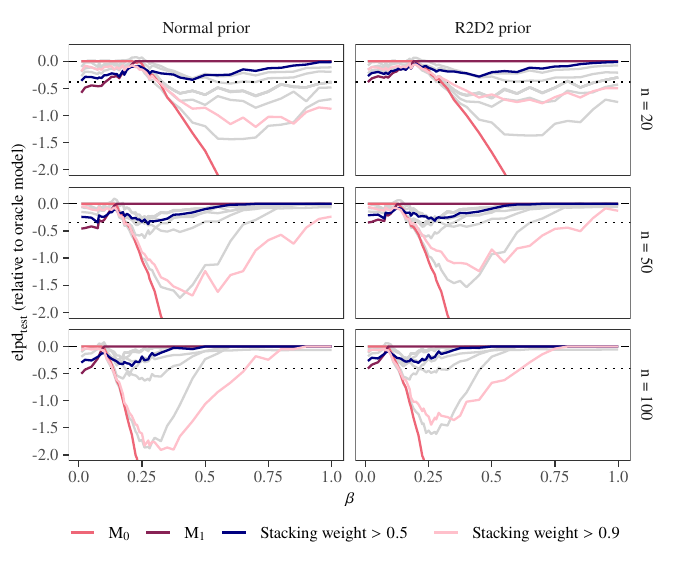}
    \vspace{-0.4cm}
    \caption{Illustrative example: Part 1. Out-of-sample performance for the intercept-only model $\mathrm{M}_0$, the full model $\mathrm{M}_1$ and selection using stacking weights $>0.5$ or $0.9$. The other methods in Section 3 are plotted as gray lines in the background. The dotted lines indicate $\mathrm{M}_1$'s worst-case performance with R2D2 prior.}
    \label{fig:app-illustrative-example-selection-with-stacking-weights}
\end{figure}
Further research is needed on using stacking, pseudo-BMA, or LOO-BB weights for selection.
For example, stacking was not designed for selection but for aggregating the results of several models \citep[][]{yao_using_2018}. 
Non-zero weights might not harm the performance of the aggregated result, but, of course, they can harm selection when the selection threshold is strict and high (see, for example, results for selecting with stacking weights $>0.9$ in Figure \ref{fig:app-illustrative-example-selection-with-stacking-weights}).  
\begin{figure}[htp!]
    \centering
    \includegraphics[width=0.7\linewidth]{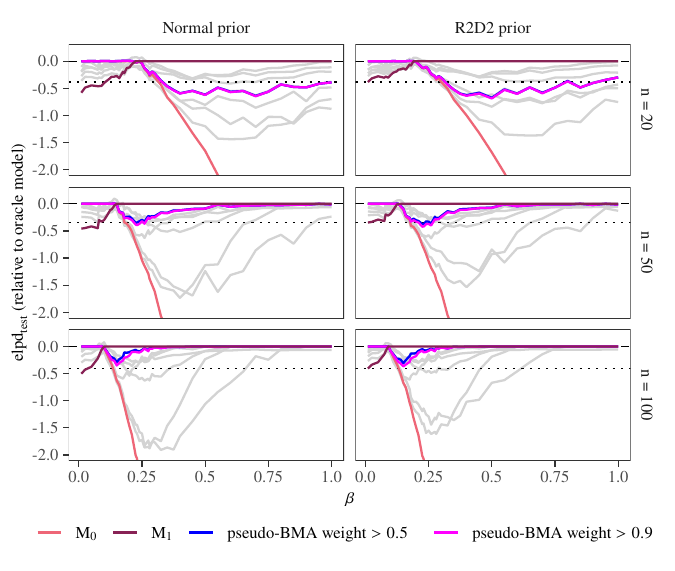}
    \vspace{-0.4cm}
    \caption{Illustrative example: Part 1. Out-of-sample performance for the intercept-only model $\mathrm{M}_0$, the full model $\mathrm{M}_1$ and  selection using pseudo-BMA weights $>0.5$ or $0.9$. The other methods considered in Section 3 are plotted as gray lines in the background. The dotted lines indicate $\mathrm{M}_1$'s worst-case performance with R2D2 prior.}
    \label{fig:app-illustrative-example-selection-with-pbma-weights}
\end{figure}

\begin{figure}[htp!]
    \centering
    \includegraphics[width=0.7\linewidth]{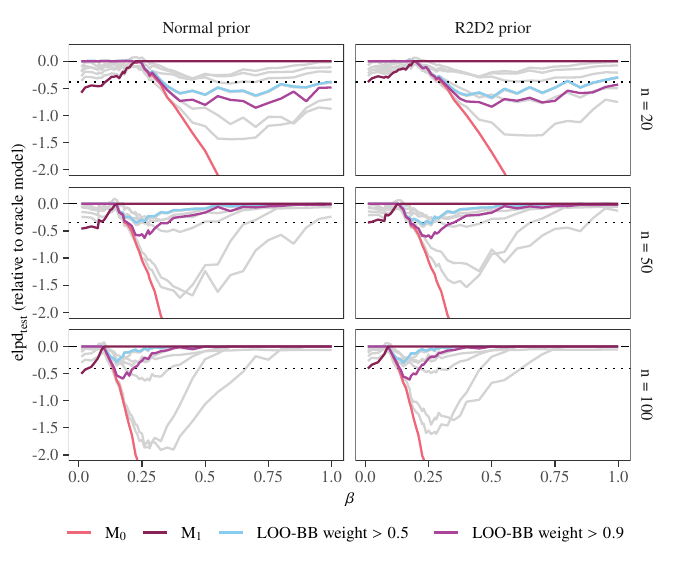}
    \vspace{-0.4cm}
    \caption{Illustrative example: Part 1. Out-of-sample performance for the intercept-only models $\mathrm{M}_0$, the full model $\mathrm{M}_1$ and  selection using LOO-BB weights $>0.5$ or $0.9$. The other methods considered in Section 3 are plotted as gray lines in the background. The dotted lines indicate $\mathrm{M}_1$'s worst-case performance with R2D2 prior.}
    \label{fig:app-illustrative-example-selection-with-loo-bb-weights}
\end{figure}

\clearpage

\section{Examples of predictively consistent priors}\label{app:examples-pred-consistent-priors}

\subsection{Other global-local priors}\label{app:section-examples-pred-consistent-priors-other-global-local}
For any global-local prior with local weights satisfying $\sum_j \phi_j = 1$, Lemma~\ref{lem:gl-characterisation} reduces second order predictive consistency to requiring $\mathbb{E}_{p_d}[\tau^2] < \infty$, independently of model size.
The \emph{Dirichlet--Laplace} (DL) prior \citep{bhattacharya_dirichlet-laplace_2015} places a $\Dirichlet(a, \ldots, a)$ distribution on the local scales $(\phi_1, \ldots, \phi_d)$, so $\sum_j \phi_j = 1$ by construction, and assigns a $\mathrm{Gamma}\left(na, \tfrac{1}{2}\right)$ prior to the global scale $\tau$, ensuring $\mathbb{E}[\tau^2] < \infty$ for fixed $n$.
\footnote{\citet{bhattacharya_dirichlet-laplace_2015} recommend $a = n^{-(1+\iota)}$ for any $\iota > 0$, which yields minimax-optimal posterior contraction rates and ensures second order predictive consistency when $n\rightarrow \infty$. For any $a > 0$, $\tau^2$ has finite second moments for fixed $n$.}
\footnote{\citet{zhang_variable_2018} further derived calibration of $a$ to a target $R^2$, while retaining the Laplace marginal prior structure.}
\emph{Penalised complexity} (PC) priors \citep{simpson_penalising_2017} define a \emph{base model} and penalise departures from it. The PC prior places an exponential distribution on the Kullback--Leibler distance from the base, which, for the base being the null model, results in $\tau \sim \mathrm{Exponential}(\lambda)$ (i.e.\ a type-2 exponential prior on the standard deviation), so $\mathbb{E}_{p_d}[\tau^2] = 2/\lambda^2 < \infty$, and predictive consistency follows.
When local weights are \emph{not} constrained to sum to one (e.g., independent half-Cauchy scales as in the horseshoe prior), predictive consistency depends on the joint behaviour of $\tau^2$ and $\sum_j \phi_j$ and must be verified case by case.

\clearpage

\subsection{Illustrative example: Part 3. 
Bayesian $R^2$ based on prior and realised Bayesian $R^2$}

\begin{figure}[htp!]
    \centering
    \includegraphics[]{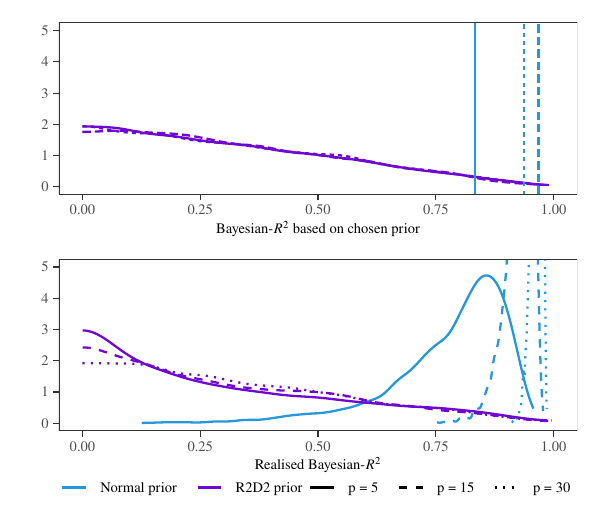}
    \caption{Illustrative example: Part 3. Implied prior Bayesian-$R^2$ for increasing number of covariates $p \in \{5, 15, 30\}$. 
    We assume $\sigma^2=1$ and standardised and orthogonal predictors, that is, $\frac{1}{n} X^TX = I$. 
    The first row shows Bayesian-$R^2$ based on (1) the Beta prior with mean $\mu_{R^2} = \sfrac{1}{3}$ and scale $\varphi_{R^2} = 3$ in the R2D2 specification and (2) constants based on $R^2 = \sfrac{p \sigma_{\beta}}{(p \sigma_{\beta} + \sigma^2)}$ for the independent normal priors $\beta \sim \normal(0, \sigma_{\beta})$ with $\sigma_{\beta} = 1$. 
    The second row presents realised prior Bayesian-$R^2$, making use of the fact that $ \Var\left(X^T \beta \mid \beta\right)=||\beta||^2$ and $||\beta||^2 \sim \chi^2(p)$ for the normal priors, since $\frac{1}{n} X^TX = I_p$.}
    \label{fig:ex-1-linear-regression-prior-r2}
\end{figure}

\subsection{Illustrative example: Part 3. Generating data with fixed signal-to-noise ratio}\label{app:illustrative-example-part-3-data-generation}

Section \ref{subsec:examples-pred-consistent-priors-illustrative-example-part-3} uses simulated data from a linear model where Bayesian-$R^2$ (Equation \eqref{eq:def-bayesian-r2-gelman}) and the residual variance $\sigma^2$ are fixed. 
Predictors are $x_i \in \mathbb{R}^p$ with $x_i \sim \normal(0,\Sigma)$, where each diagonal element of $\Sigma$ equals 1 and off-diagonal elements equal $\rho$. 
We seek either a fixed coefficient value $b$ (equal across coefficients) or a coefficient variance $\tau^2$ such that the realised $R^2$ (\ref{eq:def-bayesian-r2-gelman}) matches the target, given $\sigma^2$.
We consider two cases:
\begin{enumerate}[nosep]
  \item Fixed coefficients with $\beta_j=b$ for $j=1,\dots,p$ that is, $\beta$ is fixed and treated as a constant;
  \item Random coefficients with $\beta_j \sim \normal(0,\tau^2)$, independent across $j$.
\end{enumerate}
\paragraph{Fixed $\beta_j=b$.}
We assume known $R^2$, $\sigma^2$, and $\rho$ and use $\Var(\beta^T x)=\beta^T \Var(x) \beta=\beta^T \Sigma \beta$ to write 
\begin{align*}
\Var(\beta^T x)
&= \sum_{j=1}^p \sum_{k=1}^p \Sigma_{jk}\beta_j\beta_k
 = \sum_{j=1}^p b^2 + 2\sum_{j=1}^p \sum_{k<j} \rho\,b^2 = p\,b^2 \left(1+\rho(p-1)\right),
\end{align*}
so
\begin{align*}
R^2
= \frac{\Var(\beta^T x)}{\Var(\beta^T x)+\sigma^2}
= \frac{p\,b^2 \left(1+\rho(p-1)\right)}{p\,b^2 \left(1+\rho(p-1)\right)+\sigma^2}.
\end{align*}
Solving for $b$ gives
\begin{align}\label{eq:beta-bar}
b
= \pm \sqrt{\frac{R^2\,\sigma^2}{(1-R^2) p (1+p\rho-\rho)}}.
\end{align}
\paragraph{Random $\beta_j \sim \normal(0,\tau^2)$.}
For the second case, we assume that the value of $\beta \in \mathbb{R}^p$ is random, with $\mathbb{E}[\beta_j] = 0$, $\Var(\beta_j) = \tau^2$ and $\mathrm{Cov}(\beta_j, \beta_k) = 0$ for $j \neq k$.
We assume $\beta_j$ and $x_k$ are independent for any $j, k$.
Then, we want to find the value of $\tau^2$ that fixes the $R^2$ of the data generating process given $\sigma^2$. 
By the law of total variance,
\begin{align*}
\Var(\beta^T x)
= \mathbb{E}_{\beta}\left[\Var(\beta^T x \mid \beta)\right] + \Var_{\beta}\left(\mathbb{E}[\beta^T x \mid \beta]\right).
\end{align*}
The second term is zero since $\mathbb{E}[x]=0$ and $\mathbb{E}[\beta^T x \mid \beta]=0$. 
For the first term, $\Var(\beta^T x \mid \beta) = \beta^T \Sigma \beta$ and 
\begin{align*}
    \mathbb{E}_{\beta}[\beta^T \Sigma \beta]
    = \text{tr}\left(\Sigma\,\mathbb{E}\left[\beta\beta^T\right]\right)
    = \text{tr}\left(\Sigma \tau^2 I_p\right)
    = \tau^2 \text{tr}(\Sigma)
    = \tau^2 p,
\end{align*}
because $\text{tr}(\Sigma)=\sum_{j=1}^p \Sigma_{jj}=p$ regardless of $\rho$. Hence,
\begin{align}\label{eq:tau}
    R^2 = \frac{p\,\tau^2}{\sigma^2 + p\,\tau^2}
    \rightarrow
    \tau^2 = \frac{\sigma^2 R^2}{p(1-R^2)}.
\end{align}
Equivalently, component-wise, 
$\Var\left(\beta_j x_j\right)=\mathbb{E}\left[\beta_j^2\right]\mathbb{E}\left[x_j^2 \right]=\tau^2$ and, for $j\neq k$,
$\mathrm{Cov}\left(\beta_j x_j,\beta_k x_k\right)=\mathbb{E}\left[\beta_j\beta_k\right]\mathbb{E}\left[x_j x_k\right]=0$, so
\begin{align*}
    \Var\left(\beta^T x\right)=\Var\left(\sum_{j=1}^p \beta_j x_j\right) = \sum_{j=1}^p \Var\left(\beta_j x_j\right) + 2 \sum_{j=1}^p \sum_{k < j}^p \mathrm{Cov}\left(\beta_j x_j, \beta_k x_k\right) = p\tau^2.
\end{align*}

\clearpage

\subsection{Illustrative example: Part 3. Prior and posterior implied $R^2$}\label{app:illustrative-example-part-3-prior-posterior-r2}
\begin{figure}[htp!]
    \centering
    \includegraphics[width=\textwidth]{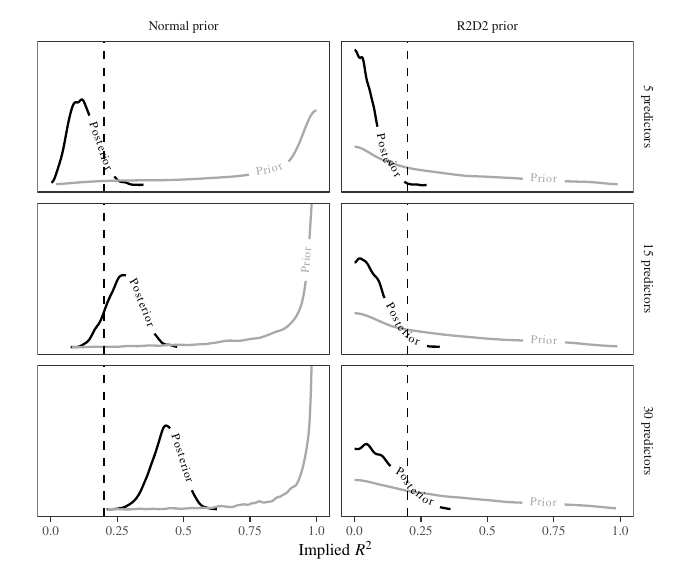}
    \vspace{-0.5cm}
    \caption{Illustrative example: Part 3. Implied prior and posterior $R^2$ for models with independent normal prior or R2D2 prior (columns) and $p\in \{5, 15, 30\}$ predictors (rows). The true $R^2=0.2$, indicated by a vertical dashed line. With increasing number of predictors, independent normal priors concentrate a-priori at $R^2$ values close to one which also pulls the posterior $R^2$ values away from the true value. Results are more stable and overall closer to the true $R^2$ for the R2D2 prior.}
    \label{fig:app-ex-1-adding-covariates-prior-posterior-r2}
\end{figure}

\clearpage
 
\subsection{Illustrative example: Part 3. Out-of-sample predictive performance for different true $R^2$ and uncorrelated predictors}
\begin{figure}[htp!]
    \begin{subfigure}{\textwidth}
        \centering
        \includegraphics[]{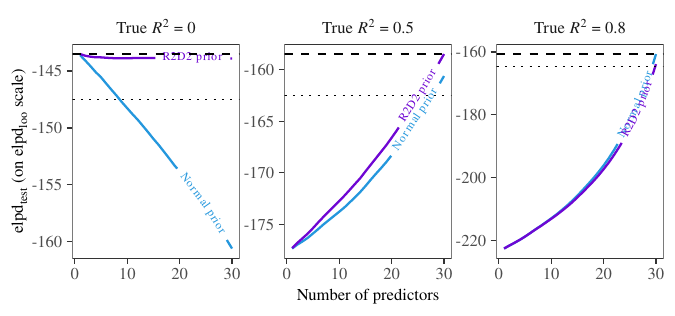}
        \caption{$p=30$}
        \label{fig:app-illustrative-example-rho-0-p-30-subfigure-1}
    \end{subfigure}
    \hfill
    \begin{subfigure}{\textwidth}
        \centering 
        \includegraphics[]{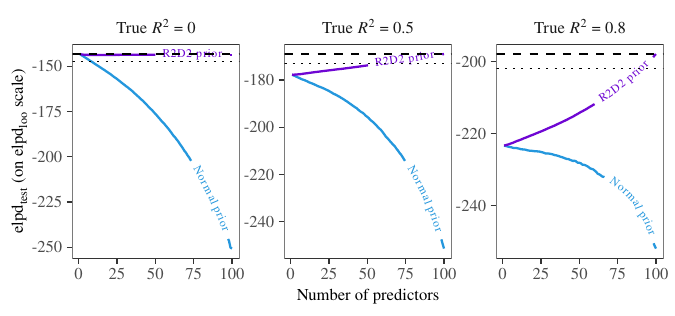}
        \caption{$p=100$}
        \label{fig:app-illustrative-example-rho-0-p-100-subfigure-2}
    \end{subfigure}
    \caption{Illustrative example: Part 3. 
    We compare out-of-sample predictive performance ($\mathrm{elpd}_{\text{test}}$) with normal priors or R2D2 prior with true $R^2 \in \{0, 0.5, 0.8\}$ (columns). 
    The subfigures show results for a DGP with $p \in \{30, 100\}.$
    Results are averaged over $500$ repetitions and each based on $100$ observations with uncorrelated predictors ($\rho = 0$) and independent test data with $n_{\text{test}} = 2000$. 
    Results are on $\mathrm{elpd}_{\text{loo}}$ scale and a dotted line indicates a difference of 4 to the highest performing model across number of predictors. 
    The R2D2 prior allows to increase model complexity up to the largest model without declining test performance. 
    Meanwhile, predictive performance on independent test data mostly drops when using independent normal priors, except when $p=30$ and $R^2\in\{0.5, 0.8\}$.}
    \label{fig:app-illustrative-example-adding-covariates-elpd-paths-r2-0-05-08-rho-0-p-30-100}
\end{figure}

\clearpage

\subsection{Illustrative example: Part 3. Out-of-sample predictive performance for different true $R^2$ and  correlation $\rho=0.5$}
\begin{figure}[htp!]
    \begin{subfigure}{\textwidth}
        \centering
        \includegraphics[]{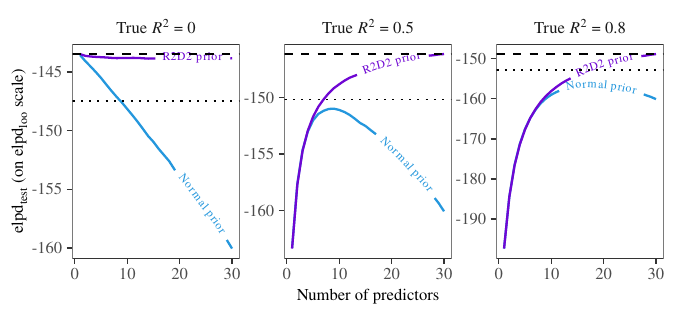}
        \caption{$p=30$}
        \label{fig:app-illustrative-example-rho-05-p-30-subfigure-1}
    \end{subfigure}
    \hfill
    \begin{subfigure}{\textwidth}
        \centering 
        \includegraphics[]{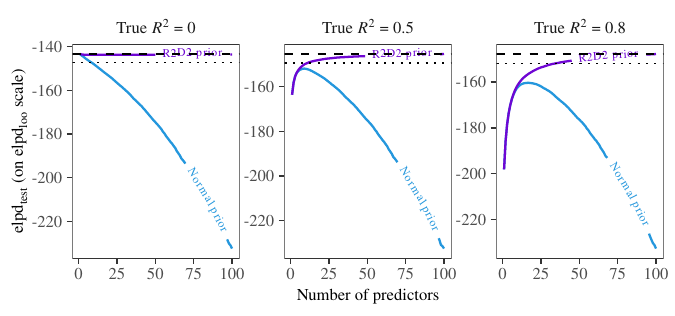}
        \caption{$p=100$}
        \label{fig:app-illustrative-example-rho-05-p-100-subfigure-2}
    \end{subfigure}
    \caption{Illustrative example: Part 3. We compare out-of-sample predictive performance ($\mathrm{elpd}_{\text{test}}$) with normal priors or R2D2 prior with true $R^2 \in \{0, 0.5, 0.8\}$ (columns). 
    The subfigures show results for a DGP with $p \in \{30, 100\}.$
    Results are averaged over $500$ repetitions and each based on $100$ observations with correlated predictors with $\rho = 0.5$ and independent test data with $n_{\text{test}} = 2000$. 
    Results are on $\mathrm{elpd}_{\text{loo}}$ scale and a dotted line indicates a difference of 4 to the highest performing model across number of predictors.
    The R2D2 prior allows to increase model complexity up to the largest model without declining test performance while it drops with increasing number of predictors when using independent normal priors.}
    \label{fig:app-illustrative-example-adding-covariates-elpd-paths-r2-0-05-08-rho-05-p-30-100}
\end{figure}

\clearpage 

\subsection{Illustrative example: Part 3. Out-of-sample predictive performance for different true $R^2$ and correlation $\rho=0.9$}
\begin{figure}[htp!]
    \begin{subfigure}{\textwidth}
        \centering
        \includegraphics[]{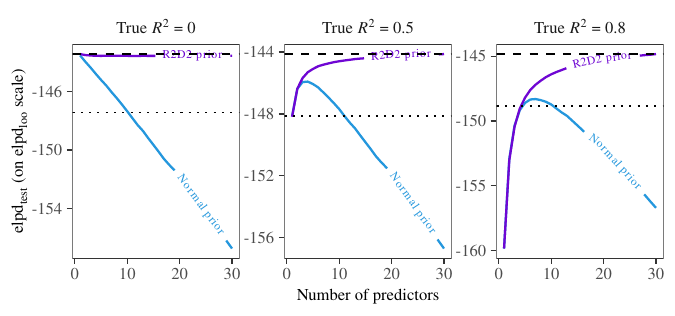}
        \caption{$p=30$}
        \label{fig:app-illustrative-example-rho-09-p-30-subfigure-1}
    \end{subfigure}
    \hfill
    \begin{subfigure}{\textwidth}
        \centering 
        \includegraphics[]{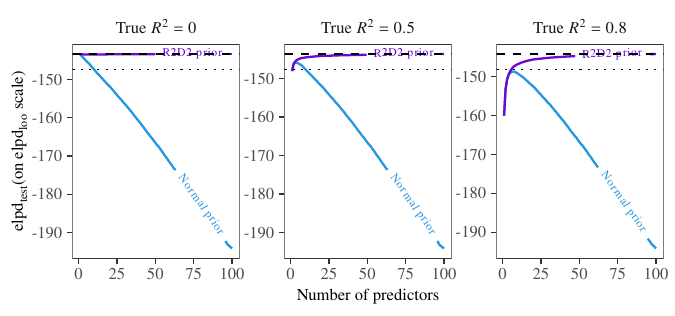}
        \caption{$p=100$}
        \label{fig:app-illustrative-example-rho-09-p-100-subfigure-2}
    \end{subfigure}
    \caption{Illustrative example: Part 3.
    We compare out-of-sample predictive performance ($\mathrm{elpd}_{\text{test}}$) with normal priors or an R2D2 prior with true $R^2 \in \{0, 0.5, 0.8\}$ (columns). 
    The subfigures show results for a DGP with $p \in \{30, 100\}.$
    Results are averaged over $500$ repetitions and each based on $100$ observations with correlated predictors with $\rho = 0.9$ and independent test data with $n_{\text{test}} = 2000$. 
    Results are on $\mathrm{elpd}_{\text{loo}}$ scale and a dotted line indicates a difference of 4 to the highest performing model across  number of predictors.
    The R2D2 prior allows to increase model complexity up to the largest model without declining test performance while it declines with increasing number of predictors when using independent normal priors.}
    \label{fig:app-illustrative-example-adding-covariates-elpd-paths-r2-0-05-08-rho-09-p-30-100}
\end{figure}

\clearpage

\subsection{Illustrative example: Part 3. Selected model sizes with different prior specification}
\begin{figure}[htp!] 
    \centering
    \includegraphics[]{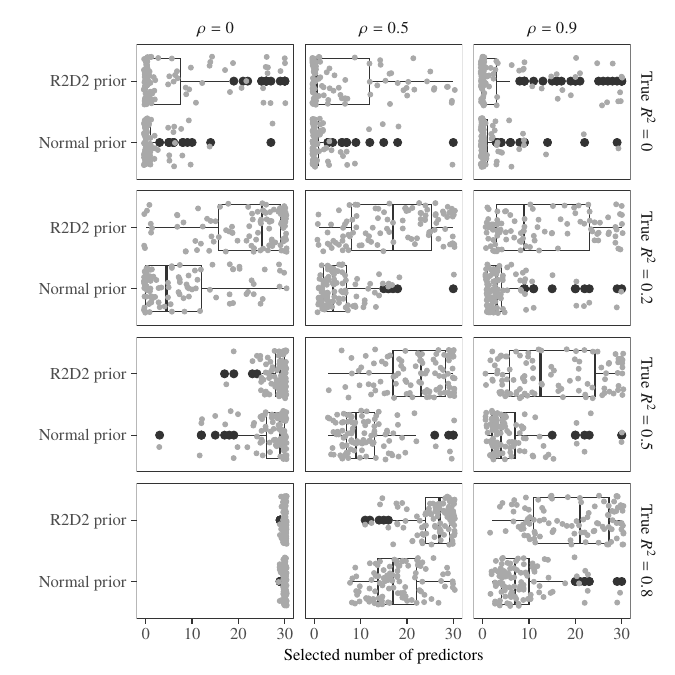}
    \vspace{-0.3cm}
    \caption{Illustrative example: Part 3. Adding covariates in linear regression. We compare selected model sizes across $100$ repetitions when using $\max\left(\mathrm{elpd}_{\text{loo}} \right)$ as a selection rule.
    With independent normal priors, model sizes tend to be smaller than with R2D2 prior and more concentrated across repetitions.
    With R2D2 prior, the results are more varied across the range of model sizes. We can also select smaller models, especially if true $R^2$ is small and correlation $\rho$ is high, but we tend to select larger models, especially when $R^2 \neq 0$ and $\rho \in \{0, 0.5\}$.}
    \label{fig:app-multiple-covariates-selection-max-elpd-loo-fixed-beta}
\end{figure}

\section{Experiments}
\subsection{Experiment 2: Forward variable selection}\label{app:ex-2-forward-search}
\begin{figure}[tp!]
    \begin{subfigure}{\textwidth}
        \centering
        \includegraphics[]{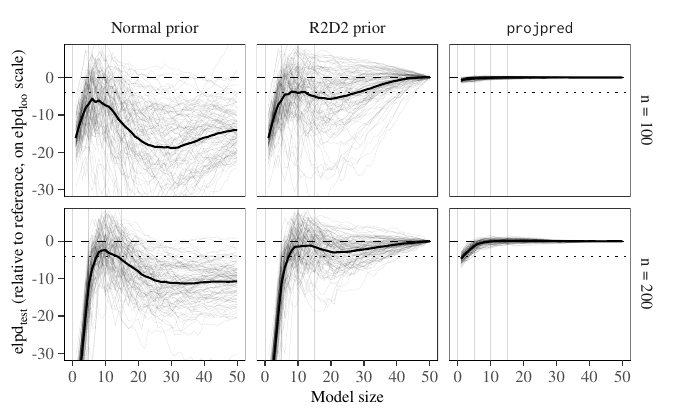}
        \vspace{-0.3cm}
        \caption{Block correlation $\rho = 0$}
        \label{fig:forward-search-rho-0-subfigure-1}
    \end{subfigure}
    \hfill
    \begin{subfigure}{\textwidth}
        \centering 
        \includegraphics[]{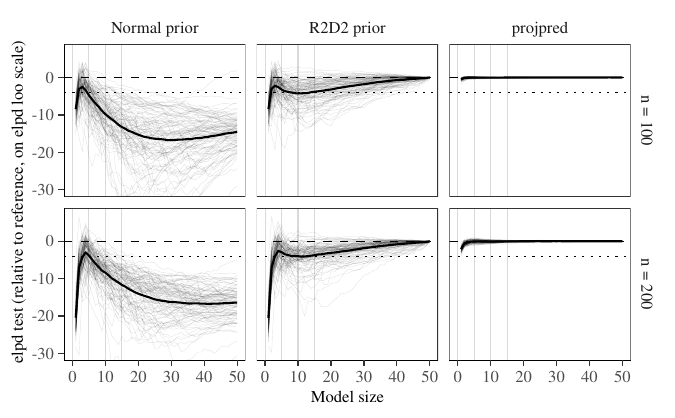}
        \vspace{-0.3cm}
        \caption{Block correlation $\rho = 0.9$}
        \label{fig:forward-search-rho-09-subfigure-2}
    \end{subfigure}
    \vspace{-0.7cm}
    \caption{Experiment 2: Forward selection. 
    $\mathrm{elpd}_{\text{test}}$ (average: thicker line, repetitions: thinner lines) with $n_{\text{test}}=2000$ 
    for forward search with (1) independent normal priors, (2) R2D2 prior and (3) projection predictive selection via \texttt{projpred} (columns), excluding the intercept-only model.
    Training data is generated using $p=50$, true $R^2=0.5$, $n \in \{100, 200\}$ (rows) 
    and correlation $\rho \in \{0,0.9\}$ (subfigures (a) and (b)).
    We observe larger variation across repetitions when using independent normal priors.
    Models with R2D2 prior reach reference model performance for $p=50$, while $\mathrm{elpd}_{\text{test}}$ remains clearly below with normal priors.}
    \label{fig:app-ex-2-forward-search-elpd-test-paths-block-structure-with-reps}
\end{figure}

\clearpage

\subsubsection{Out-of-sample performance with data generated with equally weakly relevant predictors}

\begin{figure}[htp!]
    \centering
    \includegraphics[]{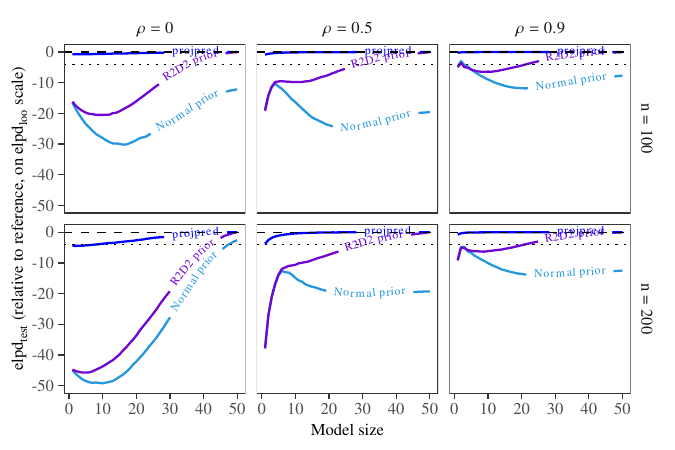}
    \caption{Experiment 2: Forward selection. Predictive performance on independent test data ($\mathrm{elpd}_{\text{test}}$) with $n_{\text{test}}=2000$ on $\mathrm{elpd}_{\text{loo}}$ scale, relative to the reference model and averaged over $100$ repetitions. 
    We compare forward search with (1) independent normal priors, (2) R2D2 prior and (3) projection predictive inference via \texttt{projpred}.  
    We show results for $n \in \{100, 200\}$ (rows) and $p=50$ equally weakly relevant covariates with correlation $\rho \in \{0, 0.5, 0.9\}$ (columns), excluding the intercept-only model. 
    True $R^2=0.5$ and a dotted line points out a difference of 4 relative to the best-performing model.
    Forward search with R2D2 prior consistently reaches \texttt{projpred}-level performance, while models with independent normal priors do not in most cases, unless $\rho=0$ and $n=200$ and model size is large or $\rho=0.9$ and $n=100$ and model size is small.}
    \label{fig:app-ex-2-forward-search-average-elpd-test-paths-flat-structure}
\end{figure}

\begin{figure}[htp!]
    \begin{subfigure}{\textwidth}
        \centering
        \includegraphics[]{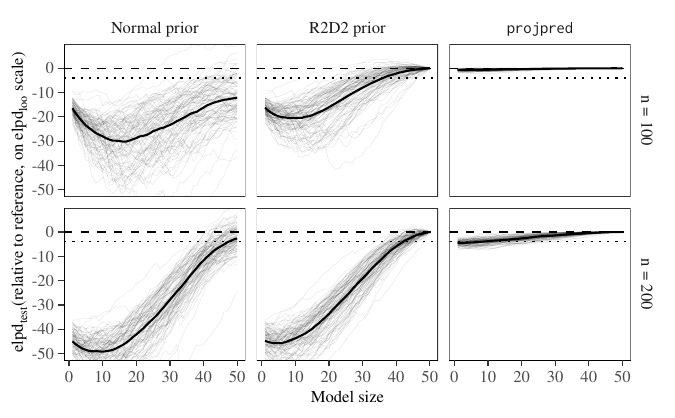}
        \vspace{-0.3cm}
        \caption{Correlation $\rho = 0$}
        \label{fig:app-ex-4-subfigure-1}
    \end{subfigure}
    \hfill
    \begin{subfigure}{\textwidth}
        \centering 
        \includegraphics[]{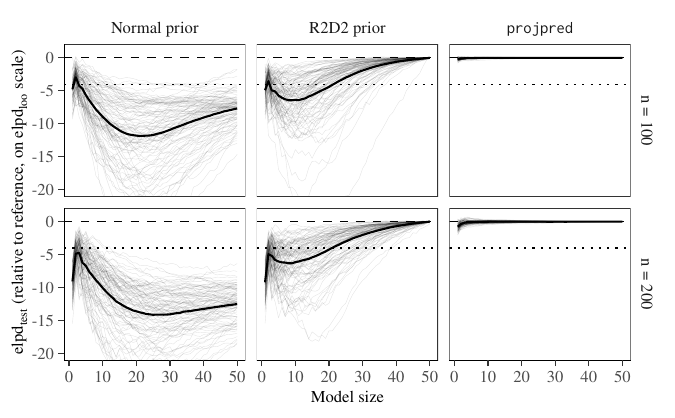}
        \vspace{-0.3cm}
        \caption{Correlation $\rho = 0.9$}
        \label{fig:app-ex-4-subfigure-2}
    \end{subfigure}
    \vspace{-0.7cm}
    \caption{Experiment 2: Forward selection. $\mathrm{elpd}_{\text{test}}$ (average: thicker line, repetitions: thinner lines) with $n_{\text{test}}=2000$ on $\mathrm{elpd}_{\text{loo}}$ scale, relative to the reference model. 
    We compare forward search with (1) independent normal priors, (2) R2D2 prior and (3) projection predictive inference via \texttt{projpred} (columns).
    We show results with $n \in \{100, 200\}$ (rows) and $p=50$ equally weakly relevant covariates with correlation $\rho \in \{0,0.9\}$ and true $R^2=0.5$, excluding the intercept-only model. 
    Note the different range of values on the $y$-axis for different correlation. 
    Variation across repetitions is larger for $\rho=0$ and overall largest for forward search with independent normal priors.}
    \label{fig:app-ex-2-forward-search-mean-elpd-test-elpd-loo-paths-data-structure-flat}
\end{figure}

\clearpage 

\subsection{Experiment 3: Increasing the complexity of nonlinear models}\label{app:ex-3-nonlinear}
\subsubsection{Considered nonlinear modelling approaches}\label{app:ex-3-nonlinear-modelling-approaches}
We assume $y_i \sim \normal\left(f(x_i), \sigma^2\right)$ with $i=1, \cdots, n$ and a-priori $\beta_j \sim \normal(0, 1)$ with $j=0, \cdots 19$. 
We compare five modelling approaches: 
\begin{enumerate}[\textbullet]
    \item Polynomial regression with degree $k$, that is, $f(x_i) = \sum_{j = 0}^k \beta_j x_i^{j}$. We consider 
    \begin{enumerate}[label = (\arabic*), nosep]
        \item raw polynomial regression, 
        \item orthogonal polynomials, 
        where we first obtain orthonormalised columns $b_j(x)$ from the raw components $\{1, x,x^2, \cdots, x^k\}$ such that $\sum_{i=1}^n b_j(x_i) b_{j'}(x_i) = 0 \ \text{for} \ j \neq j'$ and then $f(x_i) = \sum_{j=0}^{k} \beta_j b_j(x_i)$ \citep{kennedy_statistical_1980}.
    \end{enumerate}
    \item Thin plate spline (TPS) regression \citep[][]{wood_thin_2003, wood_generalized_2017, wood_generalized_2025, hastie_elements_2009} with rank $k$, that is, $f(x_i) = \sum_{j=1}^q \beta_j \nu_j(x_i) + \sum_{j=1}^r u_j \omega_j(x_i)$ where $\nu_j$ are the non-penalised polynomial components (here: $\nu_1(x_i) = 1$ and $\nu_2(x_i) = x_i$) and $\omega_j$ form the penalised basis. 
    We use \texttt{mgcv::s()} \citep[][]{wood_thin_2003, wood_generalized_2017} in \texttt{brms::brm()} \citep[][]{burkner_brms_2017}. 
    Smoothness is controlled by penalising $u$ which corresponds to a prior $u \mid \sigma_s \sim \normal(0, \sigma_s^2 S^{-1})$ with $r\times r$ penalty matrix $S$.
    We consider 
    \begin{enumerate}[nosep]
        \item[(3)] TPS with $\sigma_s=\infty$ (i.e., no penalisation, fixed degrees of freedom), and
        \item[(4)] TPS with prior for $\sigma_s \sim \text{Student-t}^+(3, 0, 2.5)$ (i.e., non-fixed penalisation).
    \end{enumerate}
    \item Hilbert space approximated Gaussian process regression
    \citep[HSGP,][]{solin_hilbert_2020, riutort-mayol_practical_2022}  as introduced before in Section \ref{subsec:methods-function-priors}, that is, 
    \begin{equation}
        \begin{aligned}
            f(x_i) \approx \sum_{j=1}^k S_{\theta} \left( \sqrt{\lambda_j} \right)^{\frac{1}{2}} \phi_j(x_i) \beta_j 
            \quad \text{     with } & \beta_j \sim \normal(0, 1),  \\
        \end{aligned}
        \label{eq:hilbert-approx-2}
    \end{equation}
    and eigenvectors $\phi_j$ and eigenvalues $\lambda_j$ of the covariance operator of the kernel. 
    We assume $\ell \sim \text{InvGamma}(2,1)$, $\sigma^2 \sim \text{InvGamma}(2, 1)$
    and use the HSGP implementation in \texttt{brms} \citep[][]{burkner_brms_2017}. 
\end{enumerate}

\clearpage

\subsubsection{Observed data versus posterior predictions}
\begin{figure}[htp!]
    \centering
    \includegraphics[width=\linewidth]{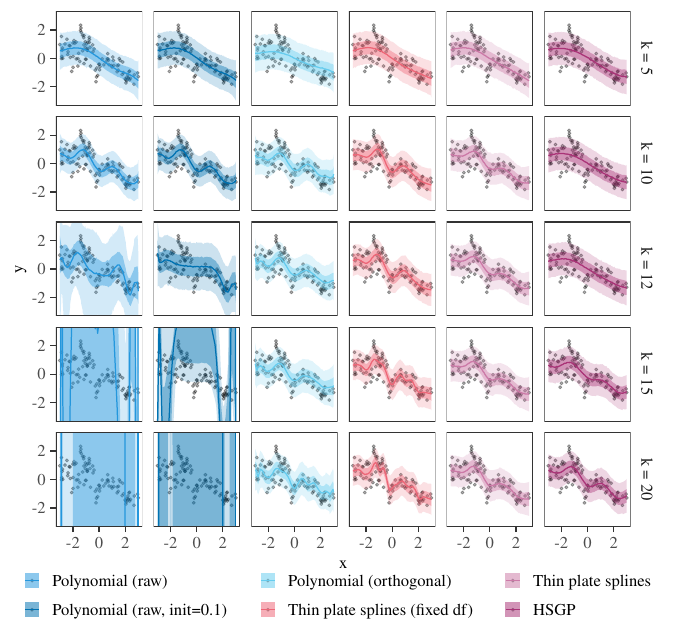}
    \caption{Experiment 3: Increasing the complexity of nonlinear models. 
    Observed data ($n=100$) and posterior results for six different modelling approaches (columns) with increasing degree/number of basis functions $k \in\{5, 10, 12, 15, 20\}$ (rows). 
    We compare a raw polynomial (1) without and (2) with improved initial value, (3) an orthogonal polynomial, as well as thin plate splines (4) with fixed degrees of freedom and (5) with non-fixed penalisation, and (6) an HSGP. 
    The data is generated similar to the drowning data set used, for example, by \citet[][]{gelman_bayesian_2015}. }
    \label{fig:app-ex-3-nonlinear-obs-posterior-results}
\end{figure}

\clearpage

\subsubsection{Basis values for different modelling approaches}

\begin{figure}[htp!]
    \centering
    \includegraphics[]{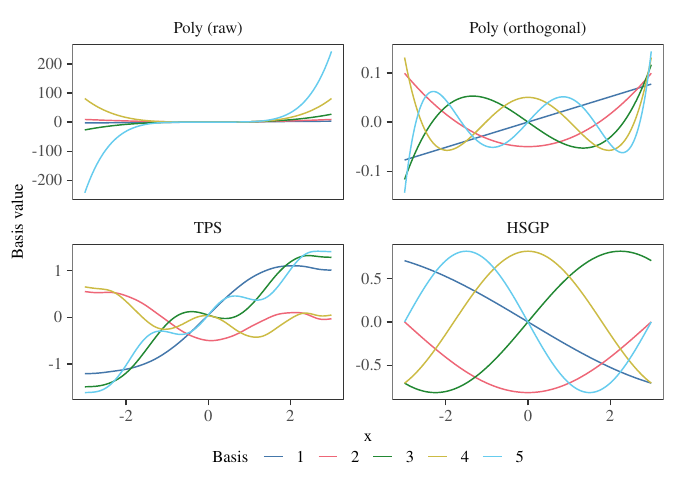}
    \vspace{-0.3cm}
    \caption{Experiment 3: Increasing the complexity of nonlinear models. We compare basis values of the first five bases for a raw polynomial (Poly (raw)), an orthogonal polynomial (Poly (orthogonal)), as well as thin plate splines (TPS) and HSGPs. The range of the basis values for the raw polynomial is much larger and increases much faster with more bases than the basis values for the other modelling approaches (compare y-axis scales).}
    \label{app:fig-ex-3-nonlinear-basis-function-values}
\end{figure} 

\clearpage

\subsubsection{Test performance for $n\in\{20, 50, 100\}$}

\begin{figure}[htp!]
    \centering
    \includegraphics[]{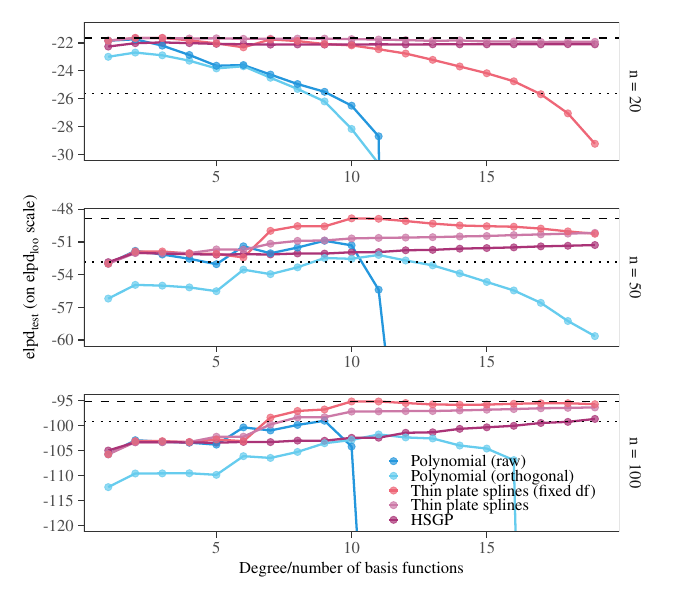}
    \caption{Experiment 3: Increasing the complexity of nonlinear models. For raw and orthogonal polynomials, $\mathrm{elpd}_{\text{test}}$ drops at higher degrees.
    For $n=20$, also non-penalised TPS (with fixed degrees of freedom) deteriorate for higher degrees. 
    Only penalised TPS and HSGP remain stable as $k$ increases.}
    \label{fig:app-ex-3-nonlinear-obs-elpd-paths-n-20-50-100}
\end{figure}

\clearpage

\subsubsection{Computational issues for the different modelling approaches}
\begin{figure}[htp!]
    \centering
    \includegraphics[width=\linewidth]{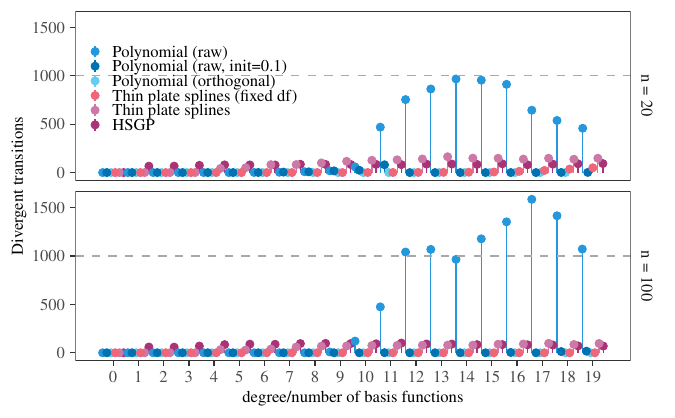}
    \caption{Experiment 3: Increasing the complexity of nonlinear models. Number of divergent transitions for each of the modelling approaches averaged across repetitions for each condition. We run four chains with $1000$ iterations. 
    The raw polynomials have large numbers of divergent transitions from degree $11$ and higher. 
    Better initialisation or orthogonalisation seems to resolve issues with divergent transitions, but, as illustrated in Figure \ref{fig:app-ex-3-nonlinear-obs-elpd-paths-n-20-50-100}, the models still exhibit sharp drops in predictive performance on independent test data after some level of model complexity.}
    \label{fig:app-ex-3-nonlinear-div-transitions}
\end{figure}

\clearpage

\subsection{Experiment 4: Treatment Effect}\label{app:treatment-study}

\subsubsection{Proofs for the Treatment Effect Experiment}\label{app:treatment-proofs}

We provide formal derivations for the results stated in Section~\ref{subsec:ex-4-rct-studies}. Consider the model in Equation~\eqref{eq:rct-model} in vector form: $y = \alpha z + X\beta + \varepsilon$ with $\varepsilon \sim \normal(0, \sigma^2 I_n)$, and independent priors $\alpha \sim \normal(0, \tau_\alpha^2)$ and $\beta \sim \normal(0, \Sigma_\beta)$.

The following assumptions are maintained throughout all propositions in this section unless stated otherwise.

\paragraph{Maintained assumptions.}
\begin{enumerate}[label=(A\arabic*),nosep]
    \item \label{ass:model} \textbf{Well-specified model.} The data-generating process is $y = \alpha^* z + X\beta^* + \varepsilon$ with $\varepsilon \sim \normal(0, \sigma^2 I_n)$, and the model is the normal linear model in Equation~\eqref{eq:rct-model}. In particular, the likelihood is correctly specified.
    \item \label{ass:treatment} \textbf{Randomised treatment.} Treatment assignments $z_i \iidsim \mathrm{Bernoulli}(0.5)$ independently of $X$ and $\varepsilon$.
    \item \label{ass:covariates} \textbf{Covariate regularity.} The columns of $X \in \mathbb{R}^{n \times p}$ are drawn independently from a sub-Gaussian distribution with zero mean and unit variance (the standard normal is the leading case). In particular, columns of $X$ are independent of $z$.
    \item \label{ass:priors} \textbf{Gaussian conditional priors.} Conditional on all hyperparameters, $\alpha \sim \normal(0, \tau_\alpha^2)$ and $\beta \sim \normal(0, \Sigma_\beta)$ independently, with $\Sigma_\beta$ positive definite. All derivations condition on $\sigma^2$.
    \item \label{ass:rank} \textbf{Rank condition.} $\mathrm{rank}(X) = \min(n,p)$ almost surely, which holds with probability one under Assumption~\ref{ass:covariates} \citep{vershynin_high-dimensional_2018}.
\end{enumerate}

\begin{proposition}[Marginal posterior variance of $\alpha$]\label{prop:marginal-var}
The conditional posterior $(\alpha, \beta) \mid y, X, z, \sigma^2$ is multivariate normal. The marginal posterior variance of $\alpha$ is
\begin{equation}
    \Var(\alpha \mid y, X, z, \sigma^2) = \left[ \frac{1}{\sigma^2} z^T (I_n - H_\beta) z + \tau_\alpha^{-2} \right]^{-1},
\end{equation}
where $H_\beta = X(X^TX + \sigma^2\Sigma_\beta^{-1})^{-1}X^T$, and the posterior mean is
\begin{equation}
    \mathbb{E}[\alpha \mid y, X, z, \sigma^2] = \Var(\alpha \mid y, X, z, \sigma^2) \cdot \frac{1}{\sigma^2} z^T(I_n - H_\beta) y. \label{eq:posterior-mean-alpha}
\end{equation}
\end{proposition}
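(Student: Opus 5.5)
The plan is to write the joint posterior precision of $(\alpha,\beta)$ explicitly and then read off the $\alpha$-marginal via the Schur complement. Conditional on $\sigma^2$ and all hyperparameters, Assumption~\ref{ass:priors} gives a Gaussian prior, and Assumption~\ref{ass:model} gives a Gaussian likelihood that is linear in $(\alpha,\beta)$. Completing the square in the log joint density therefore shows that $(\alpha,\beta)\mid y,X,z,\sigma^2$ is multivariate normal, with precision
\begin{equation*}
Q=\begin{pmatrix} \sigma^{-2}z^Tz+\tau_\alpha^{-2} & \sigma^{-2}z^TX\\ \sigma^{-2}X^Tz & \sigma^{-2}X^TX+\Sigma_\beta^{-1}\end{pmatrix}
\end{equation*}
and mean $Q^{-1}b$, where $b=\sigma^{-2}(z^Ty,\;X^Ty)^T$. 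The lower-right block $Q_{\beta\beta}=\sigma^{-2}(X^TX+\sigma^2\Sigma_\beta^{-1})$ is positive definite because $\Sigma_\beta$ is positive definite, so it is invertible with no appeal to the rank condition.

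Next, use the standard block-inversion identity: the $(\alpha,\alpha)$ entry of $Q^{-1}$ equals the inverse of the Schur complement $Q_{\alpha\alpha}-Q_{\alpha\beta}Q_{\beta\beta}^{-1}Q_{\beta\alpha}$. Substituting the blocks gives
\begin{equation*}
\sigma^{-2}z^Tz+\tau_\alpha^{-2}-\sigma^{-4}z^TX\,\sigma^2(X^TX+\sigma^2\Sigma_\beta^{-1})^{-1}X^Tz=\sigma^{-2}z^T(I_n-H_\beta)z+\tau_\alpha^{-2}.
\end{equation*}
This establishes the variance formula, and it is positive since $\tau_\alpha^{-2}>0$.

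For the mean, the first row of the block inverse is $\bigl(S^{-1},\;-S^{-1}Q_{\alpha\beta}Q_{\beta\beta}^{-1}\bigr)$, where $S$ denotes the Schur complement. Hence
\begin{equation*}
\mathbb{E}[\alpha\mid\cdot]=S^{-1}\bigl(\sigma^{-2}z^Ty-\sigma^{-2}z^TX\,\sigma^{2}(X^TX+\sigma^2\Sigma_\beta^{-1})^{-1}\sigma^{-2}X^Ty\bigr)=S^{-1}\sigma^{-2}z^T(I_n-H_\beta)y,
\end{equation*}
which is \eqref{eq:posterior-mean-alpha}.

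The only delicate point is the bookkeeping of the $\sigma^2$ factors in $Q_{\beta\beta}^{-1}$, which must cancel to produce exactly $H_\beta=X(X^TX+\sigma^2\Sigma_\beta^{-1})^{-1}X^T$. As a check, one can alternatively integrate $\beta$ out first: $y\mid\alpha\sim\normal(\alpha z,\sigma^2I+X\Sigma_\beta X^T)$. The Woodbury identity gives $(\sigma^2I+X\Sigma_\beta X^T)^{-1}=\sigma^{-2}(I-H_\beta)$, and this yields the same conjugate normal update for $\alpha$.
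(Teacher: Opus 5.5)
Your proposal is correct and follows the paper's proof essentially step for step: it forms the joint posterior precision of $(\alpha,\beta)$, takes the Schur complement of the $\beta\beta$ block to obtain the $\alpha$ variance, and reads the mean off the first row of the block inverse. Your marginalise-first cross-check via the Woodbury identity $(\sigma^2 I_n + X\Sigma_\beta X^T)^{-1} = \sigma^{-2}(I_n - H_\beta)$ is a valid extra confirmation that the paper does not include.
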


\textit{Proof.}
Write $y \sim \normal(W\vartheta, \sigma^2 I_n)$ with $W = [z \; X] \in \mathbb{R}^{n \times (p+1)}$ and $\vartheta = (\alpha, \beta^T)^T$. The prior is $\vartheta \sim \normal(0, \Sigma_\vartheta)$ with $\Sigma_\vartheta = \mathrm{diag}(\tau_\alpha^2, \Sigma_\beta)$. The posterior precision matrix is
\begin{equation}
    \Lambda = \sigma^{-2} W^T W + \Sigma_\vartheta^{-1}
    = \begin{pmatrix} \sigma^{-2} z^Tz + \tau_\alpha^{-2} & \sigma^{-2} z^T X \\ \sigma^{-2} X^T z & \sigma^{-2} X^T X + \Sigma_\beta^{-1} \end{pmatrix}.
\end{equation}
Denote the blocks as $\Lambda_{11} = \sigma^{-2} z^Tz + \tau_\alpha^{-2}$, $\Lambda_{12} = \sigma^{-2} z^T X$, $\Lambda_{22} = \sigma^{-2} X^T X + \Sigma_\beta^{-1}$. By the Schur complement formula for block matrices, the $(1,1)$ entry of $\Lambda^{-1}$ is
\begin{equation}
    \Var(\alpha \mid \cdot) = \left(\Lambda_{11} - \Lambda_{12}\Lambda_{22}^{-1}\Lambda_{21}\right)^{-1}.
\end{equation}
Computing the off-diagonal product:
\begin{align}
    \Lambda_{12}\Lambda_{22}^{-1}\Lambda_{21}
    &= \sigma^{-4} z^T X \left(\sigma^{-2} X^TX + \Sigma_\beta^{-1}\right)^{-1} X^T z \nonumber\\
    &= \sigma^{-4} z^T X \cdot \sigma^2 \left(X^TX + \sigma^2 \Sigma_\beta^{-1}\right)^{-1} X^T z \nonumber\\
    &= \sigma^{-2} z^T \underbrace{X \left(X^TX + \sigma^2 \Sigma_\beta^{-1}\right)^{-1} X^T}_{= H_\beta} z.
\end{align}
Substituting into the Schur complement:
\begin{equation}
    \Lambda_{11} - \Lambda_{12}\Lambda_{22}^{-1}\Lambda_{21} = \sigma^{-2} z^Tz - \sigma^{-2} z^T H_\beta z + \tau_\alpha^{-2} = \sigma^{-2} z^T(I_n - H_\beta)z + \tau_\alpha^{-2}.
\end{equation}
For the posterior mean, the standard formula for partitioned normal posteriors gives:
\begin{align}
    \mathbb{E}[\alpha \mid \cdot]
    &= \Var(\alpha \mid \cdot) \left[\sigma^{-2} z^T y - \Lambda_{12}\Lambda_{22}^{-1} \cdot \sigma^{-2} X^T y\right] \nonumber\\
    &= \Var(\alpha \mid \cdot) \cdot \sigma^{-2} \left[z^T y - z^T H_\beta y \right] \nonumber\\
    &= \Var(\alpha \mid \cdot) \cdot \sigma^{-2} z^T(I_n - H_\beta) y. \tag*{$\square$}
\end{align}

Throughout the remainder of this section we use the following notation: $H_\beta = X(X^TX + \sigma^2 \Sigma_\beta^{-1})^{-1}X^T$, $d_z = \sigma^{-2} z^T(I_n - H_\beta)z$, and $s = d_z/(d_z + \tau_\alpha^{-2})$. We write $\nu_1,\dots,\nu_{\min(n,p)}$ for the non-zero eigenvalues of $X^TX$ with associated left singular vectors $u_1,\dots,u_{\min(n,p)}$, and $P_X$ for the projection onto $\mathrm{col}(X)$. The statements that follow consider two regimes:
\begin{enumerate}[label=(\roman*),nosep]
    \item \textbf{Finite-sample}: $n$ fixed, $p$ increasing towards $n$ --- analysed non-asymptotically via exact identities, monotonicity in $p$, and explicit finite-$n$ bounds;
    \item \textbf{Proportional}: $p = \lfloor \gamma n \rfloor$ with $\gamma \in (0,1)$, $n \to \infty$ --- where the random-matrix $\Theta/o_p$ rates are well defined.
\end{enumerate}

\begin{lemma}[Basis-free identity for the residualised data precision]\label{lem:exact-data-prec}
Under Assumptions~\ref{ass:treatment} and~\ref{ass:priors},
\begin{equation}\label{eq:exact-data-prec}
    \mathbb{E}[d_z \mid X, \Sigma_\beta] = \frac{1}{4\sigma^2}\bigl[\bigl(n - \mathrm{tr}(H_\beta)\bigr) + \mathbf{1}_n^T(I_n - H_\beta)\mathbf{1}_n\bigr],
\end{equation}
with $\mathrm{tr}(H_\beta) \in [0, p)$ strictly whenever $\Sigma_\beta^{-1} \succ 0$.
\end{lemma}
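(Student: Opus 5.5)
Since $d_z = \sigma^{-2} z^T(I_n - H_\beta)z$ is a quadratic form in $z$, and $H_\beta$ depends only on $(X,\Sigma_\beta,\sigma^2)$, the plan is to take the expectation over $z$ alone. Under Assumption~\ref{ass:treatment}, $z$ is independent of $X$ (and, by construction of the priors, of $\Sigma_\beta$), so conditioning on $(X,\Sigma_\beta)$ leaves $z_i \iidsim \mathrm{Bernoulli}(0.5)$. Writing $A := I_n - H_\beta$, which is symmetric, we then use the standard identity $\mathbb{E}[z^TAz] = \mathrm{tr}(A\,\mathrm{Cov}(z)) + \mathbb{E}[z]^T A\,\mathbb{E}[z]$.

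Next we plug in the Bernoulli moments: $\mathbb{E}[z] = \tfrac12 \mathbf{1}_n$ and $\mathrm{Cov}(z) = \tfrac14 I_n$. This gives
\begin{equation*}
\mathbb{E}[z^TAz \mid X,\Sigma_\beta] = \tfrac14 \mathrm{tr}(I_n - H_\beta) + \tfrac14 \mathbf{1}_n^T(I_n-H_\beta)\mathbf{1}_n .
\end{equation*}
Multiplying by $\sigma^{-2}$ and using $\mathrm{tr}(I_n)=n$ yields \eqref{eq:exact-data-prec}. No eigenbasis of $X$ is needed, which justifies the phrase ``basis-free''.

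For the trace claim, diagonalise using the SVD $X = U D V^T$. When $\Sigma_\beta = cI$ this immediately gives $\mathrm{tr}(H_\beta) = \sum_k \nu_k/(\nu_k + \sigma^2/c)$. For general positive definite $\Sigma_\beta$, the plan is to substitute $\tilde X = X\Sigma_\beta^{1/2}$, which gives $H_\beta = \tilde X(\tilde X^T\tilde X + \sigma^2 I_p)^{-1}\tilde X^T$. The nonzero eigenvalues of $H_\beta$ are then $\tilde\nu_k/(\tilde\nu_k+\sigma^2)\in(0,1)$, and there are $\mathrm{rank}(\tilde X)=\min(n,p)\le p$ of them (Assumption~\ref{ass:rank}). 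Hence $0\le \mathrm{tr}(H_\beta) < \min(n,p) \le p$, with strict inequality because $\sigma^2\Sigma_\beta^{-1}\succ0$ makes each ratio strictly below one. Nonnegativity holds because $H_\beta \succeq 0$.

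The main obstacle is only the bookkeeping in this last step: rewriting $H_\beta$ correctly in whitened form, and noting that the strict upper bound needs $\sigma^2>0$ together with $\Sigma_\beta^{-1}\succ 0$, which Assumption~\ref{ass:priors} guarantees. The expectation computation itself is routine.
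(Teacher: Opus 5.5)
Your proof is correct and matches the paper's. The identity follows from the same quadratic-form expectation $\mathbb{E}[z^TAz]=\mathrm{tr}(A\,\mathrm{Cov}(z))+\mathbb{E}[z]^TA\,\mathbb{E}[z]$ with the Bernoulli$(0.5)$ moments. For the trace bound, the paper only asserts $0\preceq H_\beta\prec P_X$, whereas your whitening $\tilde X=X\Sigma_\beta^{1/2}$ derives the nonzero eigenvalues $\tilde\nu_k/(\tilde\nu_k+\sigma^2)\in(0,1)$ explicitly. This also avoids the fact that the strict Loewner inequality holds only on $\mathrm{col}(X)$. You do not need Assumption~\ref{ass:rank} there, since the number of nonzero eigenvalues is $\mathrm{rank}(X)\le\min(n,p)\le p$ in any case.
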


\textit{Proof.} By Assumption~\ref{ass:treatment}, $z_i \sim \mathrm{Bernoulli}(0.5)$ independently, so $\mathbb{E}[z] = \frac{1}{2}\mathbf{1}_n$ and $\mathrm{Cov}(z) = \frac{1}{4}I_n$. Applying $\mathbb{E}[z^T A z] = \mathbb{E}[z]^T A \mathbb{E}[z] + \mathrm{tr}(A \cdot \mathrm{Cov}(z))$ with $A = I_n - H_\beta$ yields \eqref{eq:exact-data-prec}. For any $\Sigma_\beta \succ 0$, $H_\beta = X(X^TX + \sigma^2 \Sigma_\beta^{-1})^{-1}X^T$ satisfies $0 \preceq H_\beta \prec P_X$ (the strict upper bound follows from $\sigma^2 \Sigma_\beta^{-1} \succ 0$), so all non-zero eigenvalues of $H_\beta$ lie in $[0,1)$ and $\mathrm{tr}(H_\beta) < p$ strictly. The argument is basis-free and does not require $\Sigma_\beta^{-1}$ and $X^TX$ to commute. \hfill $\square$

\subsubsection*{Finite-sample regime: \texorpdfstring{$n$ fixed, $p \to n$}{n fixed, p to n}}\label{app:treatment-proofs-i}

The propositions in this subsection characterise $\mathbb{E}[d_z \mid X, \phi]$, the prior precision $\tau_\alpha^{-2}$, the shrinkage factor $s = d_z/(d_z + \tau_\alpha^{-2})$, the posterior precision $d_z + \tau_\alpha^{-2}$, and the posterior variance $(d_z + \tau_\alpha^{-2})^{-1}$ in regime (i).

\begin{proposition}[Normal prior, regime (i)]\label{prop:rct-normal}
Suppose $\Sigma_\beta = I_p$ and $\tau_\alpha^{-2} = \mathcal{O}(1)$. The eigendecomposition of $H_\beta$ gives
\begin{equation}\label{eq:normal-d_z-decomp}
    \mathbb{E}[d_z \mid X] = \frac{1}{4\sigma^2}\sum_{j=1}^p \frac{\sigma^2}{\nu_j+\sigma^2}\bigl[1 + (u_j^T \mathbf{1}_n)^2\bigr] + \frac{n-p}{4\sigma^2} + \frac{\|(I_n-P_X)\mathbf{1}_n\|^2}{4\sigma^2}.
\end{equation}
As $p \to n$, the second and third terms vanish but the regularisation sum (first term) is strictly positive: the ridge penalty $\sigma^2 I_p$ keeps $H_\beta \prec I_n$ even at $p = n$. Hence
\begin{enumerate}[label=(\alph*),nosep]
    \item $\mathbb{E}[d_z \mid X]$ is strictly positive and bounded above by $n/(2\sigma^2)$;
    \item the posterior precision $d_z + \tau_\alpha^{-2}$ lies in $(\tau_\alpha^{-2}, \infty)$ bounded;
    \item $s = d_z/(d_z + \tau_\alpha^{-2})$ lies in $(0,1)$ bounded away from both endpoints;
    \item the posterior variance $(d_z + \tau_\alpha^{-2})^{-1}$ lies in $(0, \tau_\alpha^2)$ without attaining $\tau_\alpha^2$.
\end{enumerate}
\end{proposition}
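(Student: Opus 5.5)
The plan is to start from Lemma~\ref{lem:exact-data-prec} with $\Sigma_\beta = I_p$, so that $H_\beta = X(X^TX+\sigma^2 I_p)^{-1}X^T$. Using the thin SVD $X = \sum_{j=1}^{\min(n,p)} \sqrt{\nu_j}\, u_j v_j^T$ (rank condition \ref{ass:rank}), I will write $H_\beta = \sum_j \frac{\nu_j}{\nu_j+\sigma^2} u_j u_j^T$. Then
\[
I_n - H_\beta = \sum_{j} \frac{\sigma^2}{\nu_j+\sigma^2}\, u_j u_j^T + (I_n - P_X).
\]
Taking the trace gives
\[
n - \mathrm{tr}(H_\beta) = \sum_j \frac{\sigma^2}{\nu_j+\sigma^2} + (n-p).
\]
The quadratic form in $\mathbf{1}_n$ gives
\[
\mathbf{1}_n^T(I_n-H_\beta)\mathbf{1}_n = \sum_j \frac{\sigma^2}{\nu_j+\sigma^2}(u_j^T\mathbf{1}_n)^2 + \|(I_n-P_X)\mathbf{1}_n\|^2 .
\]
Substituting both into \eqref{eq:exact-data-prec} yields \eqref{eq:normal-d_z-decomp} directly, with $p\le n$ so that $\mathrm{tr}(I_n-P_X)=n-p$.

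For the limiting behaviour as $p\to n$, note that $n-p\to 0$. At $p=n$, $X$ has full rank, so $P_X=I_n$ and the third term vanishes. The first term stays strictly positive because each weight $\sigma^2/(\nu_j+\sigma^2)>0$, since $\nu_j<\infty$ almost surely.

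Claim (a) follows from two bounds. The lower bound is immediate from positivity. For the upper bound, $0\preceq I_n-H_\beta\preceq I_n$ gives
\[
\mathbb{E}[d_z\mid X]\le \frac{1}{4\sigma^2}\,(n+\|\mathbf{1}_n\|^2)=\frac{n}{2\sigma^2}.
\]
Strictness holds because $H_\beta\neq 0$ when $p\ge1$, as $X\neq 0$ almost surely.

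Claims (b)–(d) concern the realised $d_z=\sigma^{-2}z^T(I_n-H_\beta)z$ rather than its expectation. Here $d_z\ge 0$, with $d_z>0$ unless $z$ lies in $\ker(I_n-H_\beta)$. The eigenvalues of $I_n-H_\beta$ are all at least $\sigma^2/(\nu_{\max}+\sigma^2)>0$, so $I_n-H_\beta\succ 0$ and hence $d_z>0$ whenever $z\neq 0$, an event of probability $1-2^{-n}$. The same spectral bound gives $d_z\le \sigma^{-2}\|z\|^2\le n/\sigma^2$.

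With these bounds, the posterior precision $d_z+\tau_\alpha^{-2}$ lies strictly above $\tau_\alpha^{-2}$ and is bounded, which is (b). The shrinkage factor satisfies
\[
s \in \left[\frac{d_{\min}}{d_{\min}+\tau_\alpha^{-2}},\ \frac{n/\sigma^2}{n/\sigma^2+\tau_\alpha^{-2}}\right]\subset(0,1),
\]
which is (c). The posterior variance is $(d_z+\tau_\alpha^{-2})^{-1}<\tau_\alpha^2$, which is (d).

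The main obstacle is making ``bounded away from both endpoints'' uniform in $p$ at fixed $n$. The lower bound on $d_z$ depends on $\nu_{\max}(X^TX)$, which grows with $p$. I will handle this by working conditionally on $X$: at fixed $n$ there are only finitely many $p\le n$, so taking the minimum over $p$ of the conditional lower bounds gives a positive constant almost surely. If a deterministic-in-probability statement is wanted, I would instead bound $\nu_{\max}\le \|X\|_F^2$ and control it with sub-Gaussian tails under \ref{ass:covariates}. I will also treat the event $z=0$ as a null case, noting its probability $2^{-n}$, or state the bounds on $\mathbb{E}[d_z\mid X]$ instead, which is how Table~\ref{tab:rct-rates} reports them.
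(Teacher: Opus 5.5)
Your proposal is correct and follows the paper's proof: the same co-diagonalisation of $H_\beta$ with $X^TX$ substituted into Lemma~\ref{lem:exact-data-prec}, positivity of the regularisation sum for the lower bound, and $0 \preceq I_n - H_\beta \preceq I_n$ for the upper bound $n/(2\sigma^2)$. For (b)--(d), the paper only says ``direct substitution'' of the bounds on $\mathbb{E}[d_z \mid X]$, whereas you correctly observe that these parts concern the realised $d_z$. You then bound the realised $d_z$ via $\lambda_{\min}(I_n - H_\beta) \ge \sigma^2/(\nu_{\max}+\sigma^2)$ and $d_z \le n/\sigma^2$, which makes those parts rigorous and exposes the probability-$2^{-n}$ event $z = 0$ on which the strict inequalities fail.
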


\textit{Proof.} For $\Sigma_\beta = I_p$, $\Sigma_\beta^{-1} = I_p$ commutes with $X^TX$, so $H_\beta$ is co-diagonalisable with $X^TX$ and its non-zero eigenvalues equal $\nu_j/(\nu_j+\sigma^2)$. Hence $\mathrm{tr}(H_\beta) = \sum_j \nu_j/(\nu_j+\sigma^2)$ and $\mathbf{1}_n^T(I_n - H_\beta)\mathbf{1}_n = \sum_j [\sigma^2/(\nu_j+\sigma^2)](u_j^T \mathbf{1}_n)^2 + \|(I_n - P_X)\mathbf{1}_n\|^2$. Substituting into Lemma~\ref{lem:exact-data-prec} gives \eqref{eq:normal-d_z-decomp}. Each summand $\sigma^2/(\nu_j+\sigma^2)$ lies in $(0,1]$ and is strictly positive whenever $\nu_j < \infty$, so the regularisation sum is $> 0$ for every $p \leq n$. As $p \to n$, $(n-p) \to 0$ and $\|(I_n - P_X)\mathbf{1}_n\|^2 \to 0$, so $\mathbb{E}[d_z \mid X] > 0$ throughout. The upper bound $\mathbb{E}[d_z \mid X] \le n/(2\sigma^2)$ follows from $\mathrm{tr}(H_\beta) \ge 0$ (so $n-\mathrm{tr}(H_\beta)\le n$) and $\mathbf{1}_n^T(I_n-H_\beta)\mathbf{1}_n \le \mathbf{1}_n^T\mathbf{1}_n = n$ (as $H_\beta \succeq 0$). The conclusions on posterior precision, $s$, and posterior variance follow by direct substitution into $s = d_z/(d_z + \tau_\alpha^{-2})$ and $(d_z + \tau_\alpha^{-2})^{-1}$, using $\tau_\alpha^{-2} = \mathcal{O}(1)$. \hfill $\square$

\begin{proposition}[Joint R2D2 prior, regime (i)]\label{prop:rct-joint}
Suppose $(\alpha,\beta)$ share an R2D2 prior with weights $\phi = (\phi_\alpha,\phi_1,\dots,\phi_p) \sim \mathrm{Dirichlet}(a,\dots,a)$ of dimension $p+1$ for some $a > 0$, so $\tau_\alpha^2 = \tau^2 \phi_\alpha$ and $\beta_j \sim \normal(0,\tau^2 \phi_j)$. By Dirichlet aggregation, $\phi_\alpha \sim \mathrm{Beta}(a, pa)$. Then:
\begin{enumerate}[label=(\alph*),nosep]
    \item the conditional prior precision $\tau_\alpha^{-2} = 1/(\tau^2 \phi_\alpha)$ grows linearly in $p$: for $a > 1$, the exact Beta inverse moment gives $\mathbb{E}[\tau_\alpha^{-2}] = (a(p+1)-1)/[(a-1)\tau^2] = \Theta(p/\tau^2)$; for any $a > 0$, the in-probability statement $\tau_\alpha^{-2} = \Theta_p(p/\tau^2)$ holds (for $a \le 1$ the moment $\mathbb{E}[\tau_\alpha^{-2}]$ diverges due to mass near $\phi_\alpha = 0$, but the in-probability statement is unaffected);
    \item conditional on $\phi$, the prior on $\beta$ provides ridge-type regularisation with effective per-coordinate penalty $\sigma^2/(\tau^2 \phi_j)$, so the eigenvalues of $H_\beta = X(X^TX + \sigma^2 \mathrm{diag}(1/(\tau^2\phi_j)))^{-1}X^T$ are strictly below $1$, and $\mathbb{E}[d_z \mid X, \phi]$ is strictly positive and bounded as $p \to n$;
    \item the shrinkage factor $s = d_z/(d_z + \tau_\alpha^{-2})$ is stochastically decreasing in $p$ at stochastic scale $s = \Theta_p(1/p)$ (no limit is taken at fixed $n$): quantitatively, $1 - s \geq c\,p/(c\,p + d_{z,\max})$, where $c = \Theta_p(1/\tau^2)$ and $d_{z,\max} = \sup_{p \le n} \mathbb{E}[d_z \mid X, \phi] < \infty$, so $s$ attains its strictly positive minimum at $p = n$;
    \item the posterior precision $d_z + \tau_\alpha^{-2} = \Theta_p(p)$ is dominated by the prior contribution, so the posterior variance is $(d_z + \tau_\alpha^{-2})^{-1} = \Theta_p(1/p)$. The posterior mean $s\,\hat\alpha_{\text{OLS}_\beta}$ is centred at a fraction $s$ of $\hat\alpha_{\text{OLS}_\beta}$ that decreases stochastically in $p$. In the proportional regime this fraction stays bounded away from $1$ in probability (Proposition~\ref{prop:rct-joint-prop}), leaving a non-vanishing multiplicative bias rather than a collapse to zero.
\end{enumerate}
\end{proposition}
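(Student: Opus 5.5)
The plan is to treat the four parts in order, conditioning throughout on $\tau^2$ and $\sigma^2$. Proposition~\ref{prop:marginal-var} and Lemma~\ref{lem:exact-data-prec} supply the algebra linking $d_z$, $s$ and the posterior precision.

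\emph{Part (a).} Dirichlet aggregation applied to the $(p+1)$-vector gives $\phi_\alpha \sim \mathrm{Beta}(a, pa)$. For $a>1$, the standard inverse moment $\mathbb{E}[\phi_\alpha^{-1}] = (a+pa-1)/(a-1)$ yields the stated expectation, which is linear in $p$. For general $a>0$, we use the representation $\phi_\alpha = G_0/(G_0 + \sum_{j=1}^p G_j)$ with i.i.d.\ $G_j\sim\mathrm{Gamma}(a,1)$. This gives $p\,\phi_\alpha = G_0/(p^{-1}\sum_j G_j + G_0/p)$. The denominator concentrates at $a$, so $p\phi_\alpha \approx G_0/a$, a nondegenerate positive random variable that is tight and bounded away from zero in probability. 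Hence $\tau_\alpha^{-2} = p/(\tau^2\, p\phi_\alpha) = \Theta_p(p/\tau^2)$.

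\emph{Part (b).} Conditional on $\phi$, $\Sigma_\beta = \tau^2\mathrm{diag}(\phi_1,\dots,\phi_p)$ is positive definite, since $\phi_j>0$ almost surely. Lemma~\ref{lem:exact-data-prec} then gives $0\preceq H_\beta \prec P_X \preceq I_n$. Every eigenvalue of $I_n-H_\beta$ therefore lies in $(0,1]$, so the matrix is positive definite. It follows that $\mathbb{E}[d_z\mid X,\phi] \ge \frac{1}{4\sigma^2}(n-\mathrm{tr}(H_\beta))>0$. The upper bound $n/(2\sigma^2)$ follows exactly as in Proposition~\ref{prop:rct-normal}, using $\mathrm{tr}(H_\beta)\ge0$ and $\mathbf 1^T(I-H_\beta)\mathbf 1\le n$. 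Unlike the Normal case, $H_\beta$ does not commute with $X^TX$, so the argument must rely only on these Loewner-order facts and not on an eigen-decomposition.

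\emph{Parts (c) and (d).} Write $1-s = \tau_\alpha^{-2}/(d_z+\tau_\alpha^{-2})$. This expression is increasing in $\tau_\alpha^{-2}$ and decreasing in $d_z$. We substitute $\tau_\alpha^{-2} = c\,p$ with $c = 1/(\tau^2 p\phi_\alpha) = \Theta_p(1/\tau^2)$ from (a), and bound $d_z$ (in expectation) by $d_{z,\max}$ from (b). This gives $1-s \ge cp/(cp + d_{z,\max})$, hence $s \le d_{z,\max}/(cp+d_{z,\max}) = \Theta_p(1/p)$. The positivity of $s$ at $p=n$ comes from the lower bound on $d_z$ in (b). The posterior precision $d_z+\tau_\alpha^{-2}$ is sandwiched between $cp$ and $cp+d_{z,\max}$, so it is $\Theta_p(p)$ and the variance is $\Theta_p(1/p)$. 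The posterior mean formula $s\,\hat\alpha_{\mathrm{OLS}_\beta}$ follows by rewriting \eqref{eq:posterior-mean-alpha}. For the bias remark, we only point to Proposition~\ref{prop:rct-joint-prop}.

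\textbf{Main obstacle.} The hardest step is making ``stochastically decreasing in $p$'' rigorous, because the Dirichlet dimension changes with $p$. The plan is to couple across $p$ through the Gamma representation: adding one coordinate adds one $G_{p+1}$ to the denominator, which decreases $\phi_\alpha$ pathwise. This increases $\tau_\alpha^{-2}$ monotonically. It also shrinks the other $\phi_j$, which shrinks $H_\beta$ in Loewner order and so increases $d_z$. These two effects pull $s$ in opposite directions. We therefore settle for a stochastic-order statement on the bounding quantity $cp/(cp+d_{z,\max})$ rather than on $s$ itself, and interpret $\Theta_p$ at fixed $n$ as a scale statement rather than a limit.
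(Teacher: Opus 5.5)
Your treatment of (a) and (b), and the rate statements in (d), are sound and in places tidier than the paper's. The Gamma representation actually proves the limit $(p+1)\phi_\alpha \to \mathrm{Gamma}(a,a)$, which the paper only asserts after a coefficient-of-variation computation. For (b), the Loewner facts $0\preceq H_\beta\preceq P_X$ with $\lambda_{\max}(H_\beta)<1$ from Lemma~\ref{lem:exact-data-prec} suffice, so you can skip the paper's Weyl sandwich.

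The genuine gap is in (c). The statement asserts that $s$ itself is stochastically decreasing in $p$, with its minimum at $p=n$. Your plan gives this up and proves monotonicity only of the bound $cp/(cp+d_{z,\max})$, and a monotone bound says nothing about monotonicity of $s$. The reason you give for retreating is also mistaken. Adding a coordinate does not shrink $H_\beta$ in Loewner order: besides rescaling the old weights down, it adds the new term $\tau^2\phi_{p+1}x_{p+1}x_{p+1}^T$ to $X\Sigma_\beta X^T$. So neither $H_\beta$ nor $d_z$ need be monotone in $p$ on its own, but the ratio $\tau_\alpha^{-2}/d_z$ is.

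The missing step is a Woodbury rewrite followed by a cancellation inside your own Gamma coupling. Since $\sigma^{-2}(I_n-H_\beta)=(\sigma^2 I_n+X\Sigma_\beta X^T)^{-1}$, you have $d_z=z^T(\sigma^2 I_n+X\Sigma_\beta X^T)^{-1}z$. Set $\phi_\alpha=G_0/S_p$, $\phi_j=G_j/S_p$, $S_p=\sum_{j=0}^{p}G_j$ and $M_p=\sum_{j=1}^{p}G_j x_j x_j^T$, with covariates added in nested order. Then $X\Sigma_\beta X^T=\tau^2 M_p/S_p$ and $\tau_\alpha^{-2}=S_p/(\tau^2 G_0)$, so the common normaliser cancels:
\[
\frac{\tau_\alpha^{-2}}{d_z} \;=\; \frac{1}{\tau^2 G_0\; z^T\bigl(\sigma^2 S_p I_n + \tau^2 M_p\bigr)^{-1} z}.
\]
Because $S_{p+1}\ge S_p$ and $M_{p+1}\succeq M_p$, the matrix being inverted is Loewner-increasing in $p$. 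Hence the ratio is pathwise nondecreasing, and $s=(1+\tau_\alpha^{-2}/d_z)^{-1}$ is pathwise nonincreasing. It is strictly decreasing for $z\neq 0$, since the increment contains $\sigma^2 G_{p+1} I_n\succ 0$.

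Each $p$-marginal of the coupling has the correct law, so this yields stochastic monotonicity of $s$ and the minimum at $p=n$. Positivity of that minimum follows from $I_n-H_\beta\succ 0$, as in your (b). The paper's own proof only argues that $\tau_\alpha^{-2}$ increases while $d_z$ stays bounded, which does not by itself give monotonicity of $s$. This cancellation is what makes the claim rigorous.

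A smaller point affects both (c) and (d). Your $d_{z,\max}$ bounds $\mathbb{E}[d_z\mid X,\phi]$, not $d_z$. So neither $1-s\ge cp/(cp+d_{z,\max})$ nor your sandwich on the posterior precision holds draw by draw; the paper's ``holds deterministically'' has the same slip. Using the deterministic bound $d_z\le z^Tz/\sigma^2\le n/\sigma^2$, which follows from $I_n-H_\beta\preceq I_n$, gives pathwise versions with the same $\Theta_p$ conclusions.
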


\textit{Proof.} For (a), $\mathbb{E}[\phi_\alpha] = 1/(p+1)$ and, for $a > 1$, the standard Beta inverse moment $\mathbb{E}[1/\phi_\alpha] = (a + pa - 1)/(a - 1)$ gives $\mathbb{E}[\tau_\alpha^{-2}] = (a(p+1)-1)/[(a-1)\tau^2]$. For any $a > 0$, $\Var(\phi_\alpha) = p/[(p+1)^2(pa+a+1)]$, so the squared coefficient of variation is $\Var(\phi_\alpha)/\mathbb{E}[\phi_\alpha]^2 = p/(pa+a+1) \to 1/a$, a positive constant. The weight $\phi_\alpha$ therefore does \emph{not} concentrate at its mean for fixed $a$. Instead $(p+1)\phi_\alpha \to \mathrm{Gamma}(a,a)$ in distribution, a non-degenerate positive limit, so $(p+1)\phi_\alpha = \Theta_p(1)$ and $\tau_\alpha^{-2} = 1/(\tau^2\phi_\alpha) = \Theta_p(p/\tau^2)$. The order statement holds for every $a > 0$, reflecting that the limit is bounded away from $0$ and $\infty$ in probability rather than concentration of $\phi_\alpha$ at $1/(p+1)$. For (b), conditional on any draw of $\phi$ with $\phi_j > 0$, $\Sigma_\beta^{-1} = \mathrm{diag}(1/(\tau^2\phi_j)) \succ 0$. Since $\Sigma_\beta^{-1}$ is heterogeneous, it does not commute with the dense $X^TX$, so $H_\beta$ does not share the eigenbasis of $X^TX$ at finite $p$ and the per-coordinate formula $\nu_j/(\nu_j + \sigma^2/(\tau^2\phi_j))$ is not literally an eigenvalue identity. We use a Weyl-type sandwich (Weyl's monotonicity inequality for self-adjoint matrices): setting $c_{\min} = \min_j 1/(\tau^2\phi_j)$ and $c_{\max} = \max_j 1/(\tau^2\phi_j)$, the operator inequality $c_{\min} I_p \preceq \Sigma_\beta^{-1} \preceq c_{\max} I_p$ implies
\begin{equation}\label{eq:weyl-sandwich}
    \frac{\nu_j}{\nu_j + \sigma^2 c_{\max}} \;\leq\; \lambda_j(H_\beta) \;\leq\; \frac{\nu_j}{\nu_j + \sigma^2 c_{\min}} \quad (j = 1,\dots,p),
\end{equation}
where $\lambda_j(H_\beta)$ are the non-zero eigenvalues of $H_\beta$ in $\mathrm{col}(X)$ (the remaining $n-p$ eigenvalues are $0$). Since $\sigma^2 c_{\min} > 0$, every $\lambda_j(H_\beta) < 1$ strictly, so by Lemma~\ref{lem:exact-data-prec} $\mathbb{E}[d_z \mid X, \phi]$ is strictly positive and bounded as $p \to n$. For (c) and (d), since $p \le n$ is bounded and $\tau_\alpha^{-2}$ is stochastically increasing in $p$ (from (a)) while $d_z$ is bounded (from (b)), $s = d_z/(d_z + \tau_\alpha^{-2})$ is stochastically decreasing in $p$. The lower bound $1 - s = \tau_\alpha^{-2}/(\tau_\alpha^{-2} + d_z) \ge \tau_\alpha^{-2}/(\tau_\alpha^{-2} + d_{z,\max})$ holds deterministically; substituting $\tau_\alpha^{-2} = \Theta_p(p/\tau^2)$ gives the explicit form $1 - s \ge cp/(cp + d_{z,\max})$ with $c = \Theta(1/\tau^2)$. The posterior precision and variance rates follow by direct substitution. \hfill $\square$

\begin{proposition}[Heavy-shrinkage tightening for the joint R2D2 prior]\label{prop:rct-joint-approx}
Under the setting of Proposition~\ref{prop:rct-joint}, suppose the heavy-shrinkage condition
\begin{equation}\label{eq:heavy-shrinkage}
    \frac{p+1}{n} \;\gg\; \frac{R^2}{1 - R^2}
\end{equation}
holds, with $R^2 := \tau^2/(\tau^2+\sigma^2)$ the implied prior $R^2$. Then, provided \eqref{eq:heavy-shrinkage} holds up to a $\log p$ factor (made precise in the proof), $H_\beta \to 0$ in operator norm in probability over $\phi$, and Lemma~\ref{lem:exact-data-prec} reduces to
\begin{equation}\label{eq:d_z-heavy-shrinkage}
    \mathbb{E}[d_z \mid X, \phi] \;=\; \frac{n}{2\sigma^2} + o_p(n/\sigma^2).
\end{equation}
Combined with $\tau_\alpha^{-2} = \Theta_p(p/\tau^2)$ from Proposition~\ref{prop:rct-joint} and evaluation at the prior mean $\mathbb{E}[\phi_\alpha] = 1/(p+1)$, the shrinkage factor satisfies
\begin{equation}\label{eq:shrinkage-ratio-appx}
    s \;\approx\; \frac{n\tau^2}{n\tau^2 + 2\sigma^2(p+1)} \;=\; \frac{n R^2}{n R^2 + 2(1-R^2)(p+1)}.
\end{equation}
The approximation is valid for any $(n,p)$ for which \eqref{eq:heavy-shrinkage} holds; in particular it applies in regime (i) and, by Proposition~\ref{prop:rct-joint-prop} in Appendix~\ref{app:treatment-proofs-prop}, also in regime (ii) at any $\gamma \in (0,1)$ for which $R^2$ is small enough that \eqref{eq:heavy-shrinkage} is satisfied.
\end{proposition}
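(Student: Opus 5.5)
The plan is to bound the operator norm of $H_\beta$ directly through the Weyl-type sandwich \eqref{eq:weyl-sandwich} already used in the proof of Proposition~\ref{prop:rct-joint}. Since $\Sigma_\beta^{-1} \succeq c_{\min} I_p$ with $c_{\min} = 1/(\tau^2 \max_j \phi_j)$, we have
\[
\|H_\beta\|_{\mathrm{op}} \le \frac{\nu_{\max}}{\nu_{\max} + \sigma^2 c_{\min}} \le \frac{\nu_{\max}\,\tau^2 \max_j\phi_j}{\sigma^2}.
\]
The proof therefore reduces to controlling two random quantities.

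The first is $\nu_{\max}$, the top eigenvalue of $X^TX$. Under Assumption~\ref{ass:covariates}, standard sub-Gaussian matrix bounds give $\nu_{\max} \le (\sqrt n + C\sqrt p)^2 = \mathcal{O}_p(n)$ for $p \le n$; in regime (i) this is simply a finite random constant of order $n$.

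The second is $\max_j \phi_j$ for the Dirichlet$(a,\dots,a)$ vector of dimension $p+1$. Write $\phi_j = G_j/\sum_k G_k$ with $G_j$ i.i.d.\ Gamma$(a,1)$. Then $\sum_k G_k = (p+1)a(1+o_p(1))$, while Gamma tails give $\max_j G_j = \mathcal{O}_p(\log p)$. Together these yield $\max_j\phi_j = \mathcal{O}_p(\log p/p)$.

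Combining the two bounds gives
\[
\|H_\beta\|_{\mathrm{op}} = \mathcal{O}_p\!\left(\frac{n\log p}{p}\cdot\frac{\tau^2}{\sigma^2}\right) = \mathcal{O}_p\!\left(\frac{n\log p}{p}\cdot\frac{R^2}{1-R^2}\right),
\]
using $\tau^2/\sigma^2 = R^2/(1-R^2)$. This is $o_p(1)$ exactly when $(p+1)/n \gg \log p \cdot R^2/(1-R^2)$. That is the precise ``up to a $\log p$ factor'' version of \eqref{eq:heavy-shrinkage}, and it is what I will state in the proof.

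With $\|H_\beta\|_{\mathrm{op}} = o_p(1)$, I plug into Lemma~\ref{lem:exact-data-prec}. The lemma needs $\mathrm{tr}(H_\beta) \le p\|H_\beta\|_{\mathrm{op}} \le n\|H_\beta\|_{\mathrm{op}} = o_p(n)$, which holds in regime (i) since $p \le n$. It also needs $\mathbf 1^T H_\beta \mathbf 1 \le n\|H_\beta\|_{\mathrm{op}} = o_p(n)$. The lemma then gives
\[
\mathbb{E}[d_z\mid X,\phi] = \frac{1}{4\sigma^2}\bigl(n+n-o_p(n)\bigr) = \frac{n}{2\sigma^2}+o_p(n/\sigma^2),
\]
which is \eqref{eq:d_z-heavy-shrinkage}.

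For \eqref{eq:shrinkage-ratio-appx}, substitute $d_z \approx n/(2\sigma^2)$ and evaluate $\tau_\alpha^{-2} = 1/(\tau^2\phi_\alpha)$ at $\phi_\alpha = \mathbb{E}[\phi_\alpha] = 1/(p+1)$, which gives $\tau_\alpha^{-2} = (p+1)/\tau^2$. Then
\[
s = \frac{n/(2\sigma^2)}{n/(2\sigma^2)+(p+1)/\tau^2} = \frac{n\tau^2}{n\tau^2 + 2\sigma^2(p+1)}.
\]
Dividing through by $\tau^2+\sigma^2$ and using $R^2 = \tau^2/(\tau^2+\sigma^2)$ gives the $R^2$ form. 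I will flag explicitly that this step is a plug-in approximation rather than an equality, since $\phi_\alpha$ does not concentrate (its coefficient of variation tends to $1/\sqrt a$, by Proposition~\ref{prop:rct-joint}). The sense of ``$\approx$'' is the plug-in one, and Jensen's inequality makes it an upper bound on $\mathbb{E}[s]$. The extension to regime (ii) I will defer to Proposition~\ref{prop:rct-joint-prop}, noting only that the operator-norm argument above never used $n$ fixed.

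The main obstacle is the uniform control of $\max_j\phi_j$: the Weyl sandwich only sees the \emph{smallest} penalty, so one large Dirichlet weight can spoil the bound, and this is what forces the $\log p$ factor. For small $a$ the maximum of the Gamma variables can be heavier than $\log p$. I would first try a union bound on Gamma$(a,1)$ tails, $P(G_j>t)\lesssim t^{a-1}e^{-t}$, for fixed $a$, and allow the constant to depend on $a$.
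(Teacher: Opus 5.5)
Your proposal is correct and follows the paper's proof essentially step by step. The shared steps are the Weyl sandwich with $c_{\min}=1/(\tau^2\max_j\phi_j)$, the bound $\nu_{\max}=\mathcal{O}_p(n)$, $\max_j\phi_j$ of order $(\log p)/p$ yielding the strengthened condition $(p+1)/n \gg (\log p)\,R^2/(1-R^2)$, substitution of $\mathrm{tr}(H_\beta),\,\mathbf{1}_n^T H_\beta \mathbf{1}_n = o_p(n)$ into Lemma~\ref{lem:exact-data-prec}, and the plug-in at $\mathbb{E}[\phi_\alpha]=1/(p+1)$ with Jensen for $s$. Your Gamma-representation argument, which needs only the upper bound $\mathcal{O}_p(\log p/p)$ rather than the paper's $\Theta_p$, is a more explicit justification of the step the paper attributes to order-statistics concentration; your small-$a$ worry is unnecessary for fixed $a$, since $\max_j G_j$ is still of order $\log p$ and $a$ enters only through the $1/a$ factor from the normaliser $\sum_k G_k\approx (p+1)a$.
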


\textit{Proof.} We bound the eigenvalues of $H_\beta$ through the Weyl sandwich \eqref{eq:weyl-sandwich} rather than through concentration of the Dirichlet weights, which does not hold for fixed $a$ (Proposition~\ref{prop:rct-joint}(a)). With $c_{\min} = \min_j 1/(\tau^2\phi_j) = 1/(\tau^2\max_j\phi_j)$, \eqref{eq:weyl-sandwich} gives $\lambda_j(H_\beta) \le \nu_j/(\nu_j + \sigma^2 c_{\min})$ for every $j$. For a symmetric Dirichlet of dimension $p+1$, the largest weight satisfies $\max_j\phi_j = \Theta_p((\log p)/p)$ by standard order-statistics concentration of the Beta marginals, so $c_{\min} = \Theta_p(p/(\tau^2\log p))$ and the effective penalty is $\sigma^2 c_{\min} = \Theta_p\!\left(\frac{1-R^2}{R^2}\cdot\frac{p}{\log p}\right)$. Under Assumption~\ref{ass:covariates}, the largest eigenvalue of $X^TX$ satisfies $\nu_{\max} \le C n$ with high probability for some absolute constant $C$ depending only on the sub-Gaussian norm \citep[][Thm.~4.6.1]{vershynin_high-dimensional_2018}. Hence, when \eqref{eq:heavy-shrinkage} holds up to a $\log p$ factor, that is $(p+1)/n \gg (R^2/(1-R^2))\log p$, the penalty $\sigma^2 c_{\min}$ dominates $\nu_{\max}$ with probability tending to one, so $\lambda_{\max}(H_\beta) = o_p(1)$ and $H_\beta \to 0$ in operator norm. It follows that $\mathrm{tr}(H_\beta) \le p\,\lambda_{\max}(H_\beta) = o_p(p) = o_p(n)$ and $\mathbf{1}_n^T H_\beta \mathbf{1}_n \le n\,\lambda_{\max}(H_\beta) = o_p(n)$. Substituting into Lemma~\ref{lem:exact-data-prec} yields \eqref{eq:d_z-heavy-shrinkage}, since $\mathbf{1}_n^T \mathbf{1}_n = n$. For the shrinkage factor, $\tau_\alpha^{-2} = \Theta_p(p/\tau^2)$ does not concentrate (Proposition~\ref{prop:rct-joint}(a)), so \eqref{eq:shrinkage-ratio-appx} is obtained by evaluating $s = d_z/(d_z + \tau_\alpha^{-2})$ at the prior mean $\mathbb{E}[\phi_\alpha] = 1/(p+1)$, that is with $\tau_\alpha^{-2}$ replaced by $(p+1)/\tau^2$. This plug-in value is an upper bound on the expected shrinkage, since $s$ is concave in $\phi_\alpha$ and Jensen's inequality gives $\mathbb{E}[s(\phi_\alpha)] \le s(\mathbb{E}[\phi_\alpha])$. \hfill $\square$

\begin{proposition}[Split prior (R2D2 on $\beta$, independent Normal on $\alpha$), regime (i)]\label{prop:rct-split}
Suppose $\beta_j \sim \normal(0,\tau^2\phi_j)$ with $\phi \sim \mathrm{Dirichlet}(a,\dots,a)$ of dimension $p$, and $\alpha \sim \normal(0,\tau_\alpha^2)$ independently with $\tau_\alpha^{-2} = \mathcal{O}(1)$. Then $H_\beta$ has the same form as in Proposition~\ref{prop:rct-joint} (the $\beta$-weights here are Dirichlet on the $p$-simplex versus marginals of the $(p{+}1)$-simplex there, which leaves the eigenvalue orders unchanged), so the Weyl sandwich \eqref{eq:weyl-sandwich} keeps $\mathbb{E}[d_z \mid X, \phi]$ strictly positive and bounded as $p \to n$. Combined with $\tau_\alpha^{-2} = \mathcal{O}(1)$ fixed:
\begin{enumerate}[label=(\alph*),nosep]
    \item the posterior precision $d_z + \tau_\alpha^{-2}$ lies in $(\tau_\alpha^{-2}, \infty)$ bounded, $s$ is bounded in $(0,1)$, and the posterior variance is bounded above by $\tau_\alpha^2$ without attaining it;
    \item under the heavy-shrinkage condition \eqref{eq:heavy-shrinkage}, the eigenvalue analysis of Proposition~\ref{prop:rct-joint-approx} applies verbatim and tightens to $\mathbb{E}[d_z \mid X, \phi] = n/(2\sigma^2) + o_p(n/\sigma^2)$, so the posterior precision is $\Theta(n/\sigma^2)$ and the posterior variance attains the parametric rate $\Theta(\sigma^2/n)$.
\end{enumerate}
The posterior mean is centred on $\hat\alpha_{\text{OLS}_\beta}$, which is unbiased for $\alpha^*$ up to the $O_p(\sqrt{p/n})$ confounding term of Proposition~\ref{prop:coverage} (and exactly unbiased in the flat-prior limit); since $\tau_\alpha^{-2} = \mathcal{O}(1)$ keeps $s$ bounded away from $0$, $s < 1$ controls essentially only variance, not bias.
\end{proposition}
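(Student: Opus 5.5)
The argument is a reduction to Proposition~\ref{prop:marginal-var}, Lemma~\ref{lem:exact-data-prec}, and the eigenvalue machinery already built for the joint R2D2 case. The treatment coefficient enters only through the scalar $\tau_\alpha^{-2}=\mathcal{O}(1)$, which the split prior fixes independently of $p$. Hence all $p$-dependence sits in $H_\beta$, and through it in $d_z$.

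\emph{Step 1.} Condition on $\phi$ (and on $\tau^2,\sigma^2$). Then $\Sigma_\beta^{-1}=\mathrm{diag}(1/(\tau^2\phi_j))\succ 0$ almost surely. I apply the Weyl sandwich \eqref{eq:weyl-sandwich} with $c_{\min},c_{\max}$ computed from the $p$-dimensional Dirichlet weights. Since $\sigma^2 c_{\min}>0$, every nonzero eigenvalue of $H_\beta$ lies strictly below $1$. Lemma~\ref{lem:exact-data-prec} then gives $0<\mathbb{E}[d_z\mid X,\phi]\le n/(2\sigma^2)$. The upper bound follows exactly as in Proposition~\ref{prop:rct-normal}, from $\mathrm{tr}(H_\beta)\ge0$ and $\mathbf{1}_n^TH_\beta\mathbf{1}_n\ge 0$.

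\emph{Step 2.} Part (a) is algebra on $\Var(\alpha\mid\cdot)=(d_z+\tau_\alpha^{-2})^{-1}$ and $s=d_z/(d_z+\tau_\alpha^{-2})$. Because $d_z>0$ is bounded and $\tau_\alpha^{-2}$ is a fixed positive constant, the posterior precision lies in $(\tau_\alpha^{-2},\tau_\alpha^{-2}+n/(2\sigma^2)]$. Consequently $s\in(0,1)$ is bounded away from $0$, and the variance is strictly less than $\tau_\alpha^2$. This contrasts with Proposition~\ref{prop:rct-joint}, where the prior precision grows like $\Theta_p(p/\tau^2)$.

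\emph{Step 3.} For (b), I reuse the proof of Proposition~\ref{prop:rct-joint-approx}. The order-statistic bound $\max_j\phi_j=\Theta_p(\log p/p)$ holds equally for a symmetric Dirichlet on the $p$-simplex, since the marginals are $\mathrm{Beta}(a,(p-1)a)$ and differ from the $(p{+}1)$-case only in constants. Hence $\sigma^2c_{\min}=\Theta_p\bigl(\tfrac{1-R^2}{R^2}\tfrac{p}{\log p}\bigr)$. Combined with $\nu_{\max}\le Cn$ w.h.p. and \eqref{eq:heavy-shrinkage} up to $\log p$, this gives $\lambda_{\max}(H_\beta)=o_p(1)$. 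Therefore $\mathbb{E}[d_z\mid X,\phi]=n/(2\sigma^2)+o_p(n/\sigma^2)$. Adding the $\mathcal{O}(1)$ term $\tau_\alpha^{-2}$ leaves the precision at $\Theta(n/\sigma^2)$, so the variance is $\Theta(\sigma^2/n)$. Also $1-s=\tau_\alpha^{-2}/(d_z+\tau_\alpha^{-2})=\mathcal{O}(\sigma^2/n)$.

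\emph{Step 4: posterior mean remark.} By \eqref{eq:posterior-mean-alpha}, $\mathbb{E}[\alpha\mid\cdot]=s\,\hat\alpha_{\text{OLS}_\beta}$. Substituting $y=\alpha^*z+X\beta^*+\varepsilon$ gives $\hat\alpha_{\text{OLS}_\beta}=\alpha^*+[z^T(I-H_\beta)z]^{-1}z^T(I-H_\beta)(X\beta^*+\varepsilon)$. The $\varepsilon$ part has mean zero. The $X\beta^*$ part is the residual confounding term, which I defer to Proposition~\ref{prop:coverage} as the statement does. In the flat-prior limit $H_\beta\to P_X$, this term vanishes exactly.

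\textbf{Main obstacle.} The delicate step is the transfer of the $\log p$ order-statistic bound and the uniform domination $\sigma^2c_{\min}\gg\nu_{\max}$ to the $p$-simplex. It must hold uniformly in probability, not merely at the mean $\phi_j=1/p$. I would handle it with a union bound on the Beta tails of the $p$ marginals.
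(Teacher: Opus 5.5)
Your proposal is correct and follows the paper's own proof. Part (a) comes from the Weyl sandwich plus Lemma~\ref{lem:exact-data-prec} with $\tau_\alpha^{-2}$ held fixed, and part (b) reuses the max-Dirichlet-weight domination argument of Proposition~\ref{prop:rct-joint-approx}; your transfer of $\max_j\phi_j=\Theta_p(\log p/p)$ to the $p$-simplex via a union bound on $\mathrm{Beta}(a,(p-1)a)$ tails supplies a detail the paper only asserts. One small fix in Step 2: $n/(2\sigma^2)$ bounds $\mathbb{E}[d_z\mid X,\phi]$, not the realised $d_z$, so the precision interval should be $(\tau_\alpha^{-2},\tau_\alpha^{-2}+n/\sigma^2]$ from $d_z\le\sigma^{-2}z^Tz\le n/\sigma^2$ (and $d_z>0$ needs $z\neq 0$), which leaves all conclusions of (a) unchanged.
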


\textit{Proof.} $H_\beta$ has the same form as in Proposition~\ref{prop:rct-joint} (same eigenvalue orders), so the Weyl sandwich \eqref{eq:weyl-sandwich} gives $\mathbb{E}[d_z \mid X, \phi]$ strictly positive and bounded; the independent prior on $\alpha$ keeps $\tau_\alpha^{-2}$ fixed in $p$. Substitution into $s = d_z/(d_z + \mathcal{O}(1))$ yields (a). Part (b) follows from Proposition~\ref{prop:rct-joint-approx}: the heavy-shrinkage substitution $\Sigma_\beta^{-1} \approx (p+1)/\tau^2\cdot I_p$ and the eigenvalue-domination argument apply identically because they only use the prior on $\beta$, not on $\alpha$. \hfill $\square$

\subsubsection*{Proportional asymptotic regime: \texorpdfstring{$p = \lfloor\gamma n\rfloor$, $n \to \infty$}{p = floor(gamma n), n to infinity}}\label{app:treatment-proofs-prop}

The propositions in this subsection treat the regime $p = \lfloor\gamma n\rfloor$ with $\gamma \in (0,1)$ fixed and $n \to \infty$, complementing the finite-sample analysis of Appendix~\ref{app:treatment-proofs-i}. The qualitative pathologies of regime (i) become more pronounced: only the joint R2D2 prior produces non-vanishing shrinkage bias.

\begin{proposition}[Normal prior, regime (ii)]\label{prop:rct-normal-prop}
Under the setting of Proposition~\ref{prop:rct-normal} and Assumption~\ref{ass:covariates}, the empirical spectral distribution of $n^{-1}X^TX$ converges to the Marchenko--Pastur law on $[(1-\sqrt\gamma)^2,(1+\sqrt\gamma)^2]$ \citep{marchenko_distribution_1967, bai_spectral_2010}, so $\nu_j = \Theta(n)$ uniformly in $j$ and the regularisation sum in \eqref{eq:normal-d_z-decomp} is $O(p/n) = O(\gamma)$. Two terms of \eqref{eq:normal-d_z-decomp} are of order $n$: the residual-degrees-of-freedom term $(n-p)/(4\sigma^2)$ and the treatment-mean term $\|(I_n-P_X)\mathbf{1}_n\|^2/(4\sigma^2)$. Because the columns of $X$ are mean-zero (no intercept lies in $\mathrm{col}(X)$), the projection of $\mathbf{1}_n$ onto the random $p$-dimensional column space captures only a $p/n$ fraction of its squared norm: $\|P_X\mathbf{1}_n\|^2 = p + o_p(n)$, hence $\|(I_n-P_X)\mathbf{1}_n\|^2 = (1-\gamma)n + o_p(n)$. Both contribute, giving
\begin{equation*}
    \mathbb{E}[d_z \mid X] = \frac{(1-\gamma)n}{2\sigma^2} + O_p(1) = \Theta(n/\sigma^2).
\end{equation*}
Combined with $\tau_\alpha^{-2} = \mathcal{O}(1)$, $s \to 1$ and the posterior is consistent for $\alpha^*$ for every $\gamma \in (0,1)$. The leading $(1-\gamma)n$ term vanishes as $\gamma \to 1$, so the posterior variance degrades from the parametric $\Theta(1/n)$ rate towards $\Theta(1)$ and the Normal prior loses its sample-size advantage over $\mathrm{M}_{\text{base}}$ as $\gamma \to 1$.
\end{proposition}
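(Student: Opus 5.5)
The plan is to start from the exact decomposition \eqref{eq:normal-d_z-decomp} of Proposition~\ref{prop:rct-normal}, which holds for every fixed $n$ and $p$, and to control each of its three terms separately in the proportional regime $p=\lfloor\gamma n\rfloor$.

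\textbf{Step 1 (eigenvalue order).} The first ingredient is spectral. Under Assumption~\ref{ass:covariates} with $\gamma<1$, I will invoke the Marchenko--Pastur law together with the Bai--Yin edge results for sub-Gaussian entries. These give, with probability tending to one,
\[
\nu_{\min}\ge (1-\sqrt\gamma)^2 n(1-o(1)), \qquad \nu_{\max}\le (1+\sqrt\gamma)^2 n(1+o(1)).
\]
It is the lower edge that matters: it ensures $\nu_j=\Theta(n)$ uniformly in $j$, not merely in an averaged sense. Consequently $\sigma^2/(\nu_j+\sigma^2)=O_p(\sigma^2/n)$ uniformly in $j$, and the trace part $\sum_j\sigma^2/(\nu_j+\sigma^2)$ of the regularisation sum is $O_p(p/n)=O_p(\gamma)$.

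The remaining piece of that sum is $\sum_j [\sigma^2/(\nu_j+\sigma^2)](u_j^T\mathbf 1_n)^2$. It is bounded by $\max_j\sigma^2/(\nu_j+\sigma^2)\cdot\|P_X\mathbf 1_n\|^2$, which by Step 2 is $O_p(1/n)\cdot O_p(p)=O_p(1)$. So after dividing by $4\sigma^2$, the whole regularisation term is $O_p(1)$.

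\textbf{Step 2 (projection of $\mathbf 1_n$).} Here I need $\|P_X\mathbf 1_n\|^2=p+o_p(n)$. In the Gaussian case, $\mathrm{col}(X)$ is a uniformly (Haar) distributed $p$-dimensional subspace, independent of the fixed vector $\mathbf 1_n$. Hence $\|P_X\mathbf 1_n\|^2/n\sim\mathrm{Beta}(p/2,(n-p)/2)$, which has mean $p/n$ and variance $O(1/n)$. This gives $\|P_X\mathbf 1_n\|^2=p+O_p(\sqrt n)$.

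For general sub-Gaussian columns I would instead write
\[
\|P_X\mathbf 1_n\|^2=\mathbf 1_n^TX(X^TX)^{-1}X^T\mathbf 1_n .
\]
Three facts control this quadratic form: $X^T\mathbf 1_n$ has independent mean-zero entries with variance $n$, so $\|X^T\mathbf 1_n\|^2=np(1+o_p(1))$; the spectrum of $(X^TX)^{-1}$ is $\Theta(1/n)$; and a Hanson--Wright or trace-averaging argument (anisotropic local law) replaces $(X^TX)^{-1}$ by its normalised trace. Together these yield $p+o_p(n)$.

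I expect \textbf{Step 2 to be the main obstacle} in the non-Gaussian case. The statement's $o_p(n)$ slack is generous, so I would first try the crude route: condition on $X$ and use sub-Gaussian concentration of $\mathbf 1_n^TX$ around the isotropic case. If that fails, I would simply restrict to the Gaussian leading case.

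\textbf{Step 3 (assemble).} From Step 2, $\|(I_n-P_X)\mathbf 1_n\|^2=n-p+o_p(n)$. Together with the $(n-p)/(4\sigma^2)$ term, the decomposition \eqref{eq:normal-d_z-decomp} then gives
\[
\mathbb E[d_z\mid X]=\frac{2(n-p)}{4\sigma^2}+o_p(n)=\frac{(1-\gamma)n}{2\sigma^2}+o_p(n).
\]
The stated $O_p(1)$ remainder needs the finer Gaussian fluctuation bound; otherwise the remainder reads $o_p(n)$, which still yields $\Theta(n/\sigma^2)$.

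It then follows that $s=d_z/(d_z+\tau_\alpha^{-2})=1-O_p(1/n)\to1$ and $\Var(\alpha\mid\cdot)=\Theta_p(\sigma^2/((1-\gamma)n))$. Consistency of the posterior follows from $s\to1$ and vanishing variance, combined with the near-unbiasedness of $\hat\alpha_{\text{OLS}_\beta}$ under randomisation (Assumption~\ref{ass:treatment}; see the confounding term of Proposition~\ref{prop:coverage}).

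Finally, the leading coefficient $(1-\gamma)$ degenerates as $\gamma\to1$. In that limit the remaining contribution comes only from the regularisation sum, which is $O(1)$. This gives the claimed degradation of the posterior variance toward $\Theta(1)$.
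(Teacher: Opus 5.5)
Your main computation follows the paper's route: substitute $\nu_j=\Theta(n)$ into \eqref{eq:normal-d_z-decomp}, bound the regularisation sum by $O(p/n)$, and use the random-subspace law of $\|P_X\mathbf 1_n\|^2$. You carry it out more carefully than the paper in three places. The Bai--Yin lower edge is what actually gives $\nu_j=\Theta(n)$ \emph{uniformly} in $j$; convergence of the empirical spectral distribution alone does not. You control the $(u_j^T\mathbf 1_n)^2$ weights through $\sum_j(u_j^T\mathbf 1_n)^2=\|P_X\mathbf 1_n\|^2$. And the Beta law gives real concentration, where the paper only computes $\mathbb{E}\|P_X\mathbf 1_n\|^2=p$. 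Falling back to the Gaussian case loses nothing relative to the paper, whose ``uniformly random subspace'' step also requires rotational invariance.

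\emph{The $O_p(1)$ remainder cannot be obtained.} Your remark that it would follow from the finer Gaussian bound is wrong, and your own Step~2 shows why. From $\|P_X\mathbf 1_n\|^2/n\sim\mathrm{Beta}(p/2,(n-p)/2)$ you get $\mathrm{Var}(\|P_X\mathbf 1_n\|^2)=2p(n-p)/(n+2)\approx 2\gamma(1-\gamma)n$, so $\|P_X\mathbf 1_n\|^2-p$ is asymptotically normal on the scale $\sqrt n$. Since $n-\mathrm{tr}(H_\beta)=n-p+O_p(1)$ and the weighted regularisation sum is $O_p(1)$, this fluctuation passes unchanged into $\mathbb{E}[d_z\mid X]=(1-\gamma)n/(2\sigma^2)+O_p(\sqrt n/\sigma^2)$. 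The remainder in the statement should therefore read $O_p(\sqrt n)$; the paper's own proof only claims $o_p(n/\sigma^2)$. None of the conclusions depends on this.

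\emph{The consistency step has a genuine gap.} You invoke the confounding bound of Proposition~\ref{prop:coverage}, but that bound is $O_p(\sqrt{p/n})$. In regime (ii) this is $O_p(\sqrt\gamma)$ and does not vanish, so as cited it leaves a non-vanishing bias in $\hat\alpha_{\mathrm{OLS}_\beta}$. What rescues the Normal prior is that $H_\beta$ is close to $P_X$:
\[
(I_n-H_\beta)X=\sigma^2X(X^TX+\sigma^2I_p)^{-1},
\]
whose operator norm is $\max_j\sigma^2\sqrt{\nu_j}/(\nu_j+\sigma^2)=O_p(\sigma^2/\sqrt n)$ by your Step~1 lower edge. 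Hence $|z^T(I_n-H_\beta)X\beta^*|\le\|z\|\,\|(I_n-H_\beta)X\beta^*\|=O_p(\sigma^2\|\beta^*\|)$. Because $H_\beta\preceq P_X$, you also have $b=z^T(I_n-H_\beta)z\ge\|(I_n-P_X)z\|^2=\Theta_p(n)$. The confounding term is then $O_p(\sigma^2\|\beta^*\|/n)$, which vanishes whenever $\|\beta^*\|=o(n)$, for example $\|\beta^*\|=\Theta(\sqrt p)$ as in the experiment. The same lower bound on $b$ shows that $d_z$ itself, not only $\mathbb{E}[d_z\mid X]$, is of order $n/\sigma^2$, which is what $s\to1$ actually needs. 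The paper's proof does not treat consistency at all, so this part goes beyond it, but it only works with the sharper bound.

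\emph{The $\gamma\to1$ argument is not supported by Step~1.} Your $O(1)$ bound on the regularisation sum is $p\sigma^2/\nu_{\min}\approx\gamma\sigma^2/(1-\sqrt\gamma)^2$, which blows up as $\gamma\to1$. At $p=n$ the Marchenko--Pastur hard edge places order $\sigma\sqrt n$ eigenvalues of $X^TX$ below $\sigma^2$, so that sum grows like $\sigma\sqrt n$. The ``towards $\Theta(1)$'' claim therefore remains a heuristic, not a consequence of your bound.
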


\textit{Proof.} Substitute $\nu_j = \Theta(n)$ uniformly into \eqref{eq:normal-d_z-decomp}; the regularisation sum is $O(p/n) = O(\gamma)$ since each summand is $O(1/n)$ and there are $p$ of them. The two remaining terms are $(n-p)/(4\sigma^2) = (1-\gamma)n/(4\sigma^2)$ and $\|(I_n-P_X)\mathbf{1}_n\|^2/(4\sigma^2)$; for a fixed vector and a uniformly random $p$-dimensional subspace, $\E\|P_X\mathbf{1}_n\|^2 = (p/n)\|\mathbf{1}_n\|^2 = p$, so the latter equals $(1-\gamma)n/(4\sigma^2) + o_p(n/\sigma^2)$. Summing the two $\Theta(n)$ contributions gives the leading term $(1-\gamma)n/(2\sigma^2)$. \hfill $\square$

\begin{proposition}[Joint R2D2 prior, regime (ii)]\label{prop:rct-joint-prop}
Under the setting of Proposition~\ref{prop:rct-joint}, in regime (ii):
\begin{enumerate}[label=(\alph*),nosep]
    \item Lemma~\ref{lem:exact-data-prec} gives $\mathbb{E}[d_z \mid X, \phi] \ge (1-\gamma)n/(4\sigma^2) = \Theta(n)$, since $\mathrm{tr}(H_\beta) < p$ deterministically; combining with the upper bound $\mathbb{E}[d_z \mid X, \phi] \le n/(2\sigma^2) = \Theta(n)$ yields $\mathbb{E}[d_z \mid X, \phi] = \Theta(n)$;
    \item $\tau_\alpha^{-2} = \Theta(n)$ via Proposition~\ref{prop:rct-joint}(a) with $p = \Theta(n)$;
    \item hence the shrinkage factor is bounded away from both endpoints in probability. Both $n^{-1}d_z$ and $n^{-1}\tau_\alpha^{-2}$ are $\Theta_p(1)$ by parts (a) and (b), so
    \begin{equation*}
        0 < \liminf_{n} s \le \limsup_{n} s < 1 \qquad \text{(in probability).}
    \end{equation*}
    Unlike the Normal and split priors, for which $s \to 1$, the joint R2D2 posterior mean $s\,\hat\alpha_{\text{OLS}_\beta}$ therefore retains an asymptotically non-vanishing multiplicative shrinkage relative to $\hat\alpha_{\text{OLS}_\beta}$. The bias $(1-s)\alpha^*$ stays bounded away from $0$, so $\mathbb{E}[\alpha \mid y, \cdot]$ does not converge to $\alpha^*$. We do \emph{not} assert a deterministic limit $s_\infty$. Since $\tau_\alpha^{-2} = 1/(\tau^2\phi_\alpha)$ with $(p+1)\phi_\alpha \to \mathrm{Gamma}(a,a)$ nondegenerate (Proposition~\ref{prop:rct-joint}(a)), $n^{-1}\tau_\alpha^{-2}$ does not concentrate, and any distributional limit of $s$ is itself a nondegenerate random variable in $(0,1)$. The inconsistency conclusion holds regardless of this randomness.
\end{enumerate}
A sharper bound on $\mathbb{E}[d_z \mid X,\phi]$ via \eqref{eq:weyl-sandwich} requires controlling $c_{\min} = 1/(\tau^2 \max_j \phi_j)$; for a symmetric Dirichlet, $\max_j \phi_j = \Theta_p((\log p)/p)$ by standard order-statistics concentration for the Beta marginals, giving $c_{\min} = \Theta_p(p/\log p)$, which suffices for the same order conclusion but is not needed here.
\end{proposition}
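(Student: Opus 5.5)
The statement to prove is Proposition~\ref{prop:rct-joint-prop}, the joint R2D2 prior in regime (ii). The plan has three parts: bound $\mathbb{E}[d_z \mid X,\phi]$ on both sides at order $n$ using Lemma~\ref{lem:exact-data-prec}, transfer the scale of $\tau_\alpha^{-2}$ from Proposition~\ref{prop:rct-joint}(a) to $p=\lfloor\gamma n\rfloor$, and combine the two in the ratio defining $s$.

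\textbf{Part (a).} Start from the identity
\[
\mathbb{E}[d_z\mid X,\phi]=\tfrac{1}{4\sigma^2}\bigl[(n-\mathrm{tr}(H_\beta))+\mathbf{1}_n^T(I_n-H_\beta)\mathbf{1}_n\bigr].
\]
Since $\Sigma_\beta^{-1}\succ 0$ for every $\phi$ in the open simplex, we have $0\preceq H_\beta\prec P_X$. This gives $\mathrm{tr}(H_\beta)<p$ and $\mathbf{1}_n^T(I_n-H_\beta)\mathbf{1}_n\ge 0$. Hence the first term is at least $(n-p)/(4\sigma^2)\ge(1-\gamma)n/(4\sigma^2)$. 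The upper bound $n/(2\sigma^2)$ follows from $H_\beta\succeq 0$, exactly as in the proof of Proposition~\ref{prop:rct-normal}. Both bounds hold deterministically, so $\mathbb{E}[d_z\mid X,\phi]=\Theta(n)$ without any spectral input.

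\textbf{Part (b).} Reuse Proposition~\ref{prop:rct-joint}(a): $(p+1)\phi_\alpha\to\mathrm{Gamma}(a,a)$ in distribution, which is nondegenerate and supported on $(0,\infty)$. With $p+1=\Theta(n)$, this gives $n\phi_\alpha=\Theta_p(1)$, and therefore $n^{-1}\tau_\alpha^{-2}=1/(\tau^2 n\phi_\alpha)=\Theta_p(1/\tau^2)$. To make this rigorous, for each $\varepsilon$ I will choose $0<\ell<u$ with $\Pr(\ell\le(p+1)\phi_\alpha\le u)\ge1-\varepsilon$ for all large $n$. This is possible by tightness, together with the limit placing no mass at $0$.

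\textbf{Part (c), and the main obstacle.} The obstacle is that $d_z$ itself is random given $X,\phi$, while part (a) only controls its conditional expectation. I need $n^{-1}d_z=\Theta_p(1)$ in probability, not only in mean. I would write $d_z=\sigma^{-2}z^TAz$ with $A=I_n-H_\beta$, so that $0\preceq A\preceq I$. Because $z$ has independent bounded entries, Hanson--Wright gives
\[
z^TAz-\mathbb{E}[z^TAz]=O_p(\|A\|_F+\|A\|)=O_p(\sqrt n).
\]
This fluctuation is $o_p(n)$, so $n^{-1}d_z$ lies in $[(1-\gamma)/(4\sigma^2)-o_p(1),\,1/(2\sigma^2)+o_p(1)]$.

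With both pieces in hand, write
\[
s=\frac{n^{-1}d_z}{n^{-1}d_z+n^{-1}\tau_\alpha^{-2}}.
\]
On the high-probability event where both $n^{-1}d_z$ and $n^{-1}\tau_\alpha^{-2}$ lie in compact subintervals of $(0,\infty)$, $s$ lies in a compact subinterval of $(0,1)$. This yields $0<\liminf s\le\limsup s<1$ in probability.

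The bias statement then follows. The posterior mean is $s\,\hat\alpha_{\text{OLS}_\beta}$, and I would take $\hat\alpha_{\text{OLS}_\beta}$ to be consistent for $\alpha^*$, appealing to Proposition~\ref{prop:coverage} for its error term. Consequently $(1-s)\alpha^*$ stays bounded away from $0$ whenever $\alpha^*\neq0$. Finally, I would note that the Gamma limit is nondegenerate, so $s$ has no deterministic limit; the argument never requires one.
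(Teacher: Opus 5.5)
Your argument is correct, and parts (a) and (b) coincide with the paper's. Both use the deterministic bounds from Lemma~\ref{lem:exact-data-prec} via $0\preceq H_\beta\prec P_X$, and the $\mathrm{Gamma}(a,a)$ limit of $(p+1)\phi_\alpha$ to get $n^{-1}\tau_\alpha^{-2}=\Theta_p(1)$.

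The real difference is how you pass from $\mathbb{E}[d_z\mid X,\phi]$ to $d_z$ itself in part (c). The paper never concentrates $d_z$ around its conditional mean. It sandwiches it directly, $\sigma^{-2}\|(I_n-P_X)z\|^2\le d_z\le\sigma^{-2}z^Tz\le n/\sigma^2$, using $I_n-H_\beta\succeq I_n-P_X$. That removes $\phi$ from the problem: the lower bound is a quadratic form in a fixed rank-$(n-p)$ projection, for which $\Theta_p(n)$ needs only a Chebyshev step (left implicit in the paper), and the upper bound is deterministic. Your route instead concentrates $z^TAz$ uniformly over $0\preceq A\preceq I_n$. This actually uses part (a) in part (c), as the statement's ``by parts (a) and (b)'' suggests. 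It also gives the slightly stronger fact that $n^{-1}d_z$ tracks $n^{-1}\mathbb{E}[d_z\mid X,\phi]$ up to $o_p(1)$.

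One technical correction: Hanson--Wright is stated for centred vectors. Write $z=\tfrac12\mathbf 1_n+w$ and treat the linear term $\mathbf 1_n^TAw$ separately (its variance is at most $n/4$). The fluctuation is then $O_p(\|A\|_F+\|A\mathbf 1_n\|)$ rather than $O_p(\|A\|_F+\|A\|)$. This is still $O_p(\sqrt n)$, so your conclusion stands.

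The one step that does not go through as written is the bias clause, which the paper asserts without proof. Proposition~\ref{prop:coverage} does not make $\hat\alpha_{\text{OLS}_\beta}$ consistent in regime (ii). Under the design $\|\beta^*\|^2=\Theta(p)$, it bounds the confounding term $z^TAX\beta^*/(z^TAz)$ only by $O_p(\sqrt{p/n})=O_p(\sqrt\gamma)$, which does not vanish. What that argument does yield is relative consistency, $\hat\alpha_{\text{OLS}_\beta}/\alpha^*\to1$ in probability, because the same design has $\alpha^*=\Theta(\sqrt p)$ and the noise term is $O_p(n^{-1/2})$. Then $\mathbb{E}[\alpha\mid y,\cdot]/\alpha^*-1=s(\hat\alpha_{\text{OLS}_\beta}/\alpha^*-1)-(1-s)$ stays bounded away from $0$ in probability, which is what the inconsistency claim needs. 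Otherwise you must add an assumption on $\beta^*$ that makes the error terms in \eqref{eq:freq-moments-exact} $o_p(1)$. Your proviso $\alpha^*\neq0$ is necessary and is only implicit in the statement.
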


\textit{Proof.} Lemma~\ref{lem:exact-data-prec} gives $\mathrm{tr}(H_\beta) < p$ deterministically (any $\Sigma_\beta \succ 0$), so $n - \mathrm{tr}(H_\beta) > n - p = (1-\gamma)n$, yielding the lower bound. The upper bound is immediate from $\mathrm{tr}(H_\beta) \ge 0$ and $\mathbf 1_n^T(I_n - H_\beta)\mathbf 1_n \le n$. For parts (b) and (c), Proposition~\ref{prop:rct-joint}(a) gives $\tau_\alpha^{-2} = 1/(\tau^2\phi_\alpha)$ with $(p+1)\phi_\alpha \to \mathrm{Gamma}(a,a)$, hence $n^{-1}\tau_\alpha^{-2} = \Theta_p(1)$. Together with the null-space lower bound $d_z \ge \sigma^{-2}\|(I_n-P_X)z\|^2 = \Theta_p(n)$ and the deterministic upper bound $d_z \le \sigma^{-2}z^Tz \le n/\sigma^2$, this gives $n^{-1}d_z = \Theta_p(1)$ and bounds $s = d_z/(d_z + \tau_\alpha^{-2})$ away from $0$ and $1$ in probability. Because $n^{-1}\tau_\alpha^{-2}$ converges in distribution to a nondegenerate limit rather than a constant, the corresponding limit of $s$ is random. \hfill $\square$

\begin{proposition}[Split prior, regime (ii)]\label{prop:rct-split-prop}
Under the setting of Proposition~\ref{prop:rct-split}, $\mathbb{E}[d_z \mid X, \phi] = \Theta(n)$ by Proposition~\ref{prop:rct-joint-prop}(a) (the same $H_\beta$), while $\tau_\alpha^{-2} = \mathcal{O}(1)$ is fixed. Hence $s = \Theta(n)/[\Theta(n) + \mathcal{O}(1)] \to 1$ and the posterior is consistent for $\alpha^*$ for every $\gamma \in (0,1)$, with posterior variance attaining the parametric rate $\Theta(\sigma^2/n)$.
\end{proposition}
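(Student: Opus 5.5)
The plan is to reuse the exact identity of Lemma~\ref{lem:exact-data-prec}, the two-sided bounds on $\mathbb{E}[d_z\mid X,\phi]$ from Proposition~\ref{prop:rct-joint-prop}(a), and then substitute into $s$ and into the posterior variance. The key point is that the split prior uses exactly the same $\beta$-part as the joint R2D2 prior, up to the dimension of the Dirichlet. So $H_\beta = X(X^TX+\sigma^2\Sigma_\beta^{-1})^{-1}X^T$ again has $\Sigma_\beta\succ 0$, and the deterministic facts $0\preceq H_\beta\prec P_X$ and $\mathrm{tr}(H_\beta)<p$ hold verbatim. None of the arguments in Proposition~\ref{prop:rct-joint-prop}(a) use anything about $\alpha$'s prior.

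First, I will establish $\mathbb{E}[d_z\mid X,\phi]=\Theta(n)$. From Lemma~\ref{lem:exact-data-prec}, dropping the nonnegative term $\mathbf{1}_n^T(I_n-H_\beta)\mathbf{1}_n$ and using $\mathrm{tr}(H_\beta)<p=\lfloor\gamma n\rfloor$ gives the lower bound $(1-\gamma)n/(4\sigma^2)$. Using $\mathrm{tr}(H_\beta)\ge0$ and $\mathbf{1}_n^T(I_n-H_\beta)\mathbf{1}_n\le n$ gives the upper bound $n/(2\sigma^2)$. Because $s$ involves the realised $d_z$ rather than its expectation, I will also record the in-probability statement $n^{-1}d_z=\Theta_p(1)$, exactly as in the proof of Proposition~\ref{prop:rct-joint-prop}. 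The lower bound is $d_z\ge\sigma^{-2}\|(I_n-P_X)z\|^2$, since $I_n-H_\beta\succeq I_n-P_X$. The vector $z$ is independent of $X$ and $\mathrm{col}(X)$ has dimension $p$, so this quantity concentrates at order $(1-\gamma)n$. The upper bound is $d_z\le\sigma^{-2}z^Tz\le n/\sigma^2$.

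Second, I will substitute into $s=d_z/(d_z+\tau_\alpha^{-2})$ with $\tau_\alpha^{-2}=\mathcal{O}(1)$ fixed in $n$. This gives $1-s=\tau_\alpha^{-2}/(d_z+\tau_\alpha^{-2})\le \tau_\alpha^{-2}/d_z=O_p(1/n)\to0$. The posterior variance $(d_z+\tau_\alpha^{-2})^{-1}$ is sandwiched between $(n/\sigma^2+\tau_\alpha^{-2})^{-1}$ and $(c(1-\gamma)n/\sigma^2)^{-1}$, hence it is $\Theta_p(\sigma^2/n)$. For consistency, I will write the posterior mean from Proposition~\ref{prop:marginal-var} as $s\,\hat\alpha_{\mathrm{OLS}_\beta}$ and substitute $y=\alpha^*z+X\beta^*+\varepsilon$. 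This yields $\hat\alpha_{\mathrm{OLS}_\beta}=\alpha^*+[z^T(I_n-H_\beta)X\beta^*+z^T(I_n-H_\beta)\varepsilon]/[z^T(I_n-H_\beta)z]$. The noise term has conditional variance at most $\sigma^2/[z^T(I_n-H_\beta)z]=O_p(1/n)$. Combined with $s\to1$, this gives $\mathbb{E}[\alpha\mid y,\cdot]\to\alpha^*$ in probability.

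The main obstacle is the leakage term $z^T(I_n-H_\beta)X\beta^*/[z^T(I_n-H_\beta)z]$. Since $H_\beta\ne P_X$, the residual $(I_n-H_\beta)X\beta^*$ is not zero; it equals the ridge-residual. The approach I would try first uses that $z$ is independent of $X$ and $\beta^*$. I would write $z=\tfrac12\mathbf{1}_n+(z-\tfrac12\mathbf{1}_n)$. The centred part contributes a mean-zero term with variance $\tfrac14\|(I_n-H_\beta)X\beta^*\|^2\le\tfrac14\|X\beta^*\|^2=O_p(n)$, so after dividing by the $\Theta_p(n)$ denominator it is $O_p(n^{-1/2})$. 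The mean part is $\tfrac12\mathbf{1}_n^T(I_n-H_\beta)X\beta^*$. Because the columns of $X$ have mean zero and are independent of $\mathbf{1}_n$, I expect this to be $O_p(\sqrt n)$ as well, which makes the whole leakage term $o_p(1)$. If this direct bound proves delicate, I would instead invoke the $O_p(\sqrt{p/n})$ confounding bound cited from Proposition~\ref{prop:coverage}, which in regime (ii) still vanishes after normalisation.
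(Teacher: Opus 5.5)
Your handling of $\mathbb{E}[d_z\mid X,\phi]$, $s$ and the posterior variance matches the paper's proof. That proof is just substitution of the bounds from Proposition~\ref{prop:rct-joint-prop}(a) into $s=d_z/(d_z+\tau_\alpha^{-2})$ and $(d_z+\tau_\alpha^{-2})^{-1}$. Your realised sandwich $\sigma^{-2}\|(I_n-P_X)z\|^2\le d_z\le n/\sigma^2$ is the right way to make the in-probability statements about $s$ precise. The paper then reads consistency off $s\to1$ alone. You correctly see that this is not enough, since $\hat\alpha_{\mathrm{OLS}_\beta}$ contains the leakage term $z^T(I_n-H_\beta)X\beta^*/z^T(I_n-H_\beta)z$ whenever $H_\beta\neq P_X$. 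Your control of that term, however, has a gap.

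\textbf{The leakage bound assumes bounded $\|\beta^*\|_2$.} The step $\tfrac14\|(I_n-H_\beta)X\beta^*\|^2\le\tfrac14\|X\beta^*\|^2=O_p(n)$ silently assumes $\|\beta^*\|_2=O(1)$. That is not among (A1)--(A5), and it fails for the paper's own design, where $\beta^*_j\sim\normal(0,0.05)$ gives $\|\beta^*\|_2^2=\Theta(p)$.

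\textbf{The inequality is not merely loose.} We have $(I_n-H_\beta)X\beta^*=X(X^TX+D)^{-1}D\beta^*$ with $D=\sigma^2\,\mathrm{diag}(1/(\tau^2\phi_j))$. The entries of $D$ are typically of order $\sigma^2p/\tau^2$, at least comparable to the $\Theta(n)$ eigenvalues of $X^TX$ in regime (ii). So the ridge fit removes at most a constant fraction of $X\beta^*$, and your centred piece has conditional standard deviation of order $\sqrt n\,\|\beta^*\|_2$. After dividing by $z^T(I_n-H_\beta)z=\Theta_p(n)$, the leakage is of order $\|\beta^*\|_2/\sqrt n$. Under the design this is of order $\sqrt\gamma$, which does not vanish.

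\textbf{The fallback fails for the same reason.} The $O_p(\sqrt{p/n})$ bound of Proposition~\ref{prop:coverage} is $O_p(\sqrt\gamma)$ in regime (ii). The paper only calls it negligible relative to $(1-s)|\alpha^*|$ with $\alpha^*=\Theta(\sqrt p)$. That would give $\mathbb{E}[\alpha\mid y,\cdot]/\alpha^*\to1$, not $\mathbb{E}[\alpha\mid y,\cdot]\to\alpha^*$.

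\textbf{How to close it.} Add an explicit hypothesis such as $\|\beta^*\|_2=o(\sqrt n)$; then both of your pieces are $O_p(\|\beta^*\|_2/\sqrt n)\to0$. Alternatively, state only relative consistency.

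\textbf{The mean piece needs more care.} The argument that ``columns of $X$ have mean zero'' is not sufficient, because $I_n-H_\beta$ depends on $X$. For Gaussian $X$, use that its left singular vectors are Haar-distributed and independent of $X^TX$ and $\phi$. This gives $\mathbf 1_n^T(I_n-H_\beta)X\beta^*=O_p(\|(I_n-H_\beta)X\beta^*\|)$.
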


\textit{Proof.} Direct substitution into $s = d_z/(d_z + \tau_\alpha^{-2})$ and $(d_z + \tau_\alpha^{-2})^{-1}$. \hfill $\square$

In summary, comparing Propositions~\ref{prop:rct-normal-prop}--\ref{prop:rct-split-prop}, only the joint R2D2 prior produces shrinkage bias on $\alpha$ in regime (ii) --- to $s_\infty \alpha^*$ rather than $\alpha^*$. The Normal and split priors are both consistent for $\alpha^*$, with the split prior achieving the parametric variance rate uniformly in $\gamma$ while the Normal prior degrades as $\gamma \to 1$.

\begin{table}[htb]
\centering
\small
\caption{Full behaviour of the marginal posterior of $\alpha$ across both regimes, extending Table~\ref{tab:rct-rates}. Regime (i) holds $n$ fixed with $p\to n$, and regime (ii) sets $p=\lfloor\gamma n\rfloor$ with $n\to\infty$ and aspect ratio $\gamma=p/n$. Here $\tau_\alpha^{-2}$ is the conditional prior precision of $\alpha$, $d_z=\sigma^{-2}z^T(I_n-H_\beta)z$ is the residualised data precision, $s=d_z/(d_z+\tau_\alpha^{-2})$ is the shrinkage factor, and the posterior of $\alpha$ is centred at $s\,\hat\alpha_{\mathrm{OLS}_\beta}$ with variance $(d_z+\tau_\alpha^{-2})^{-1}$. ``Heavy shrinkage'' refers to condition~\eqref{eq:heavy-shrinkage}. The symbol ``$\downarrow$'' marks stochastic (in-probability) decrease in $p$, and $\Theta_p(\cdot)$ denotes a stochastic scale at fixed $n$. In regime (i) the quantities written in $n$ are finite-$n$ constants rather than rates, whereas in regime (ii) they denote genuine orders in $n$. The regime-(i) rows follow from Propositions~\ref{prop:rct-normal}, \ref{prop:rct-split}, \ref{prop:rct-joint}, and~\ref{prop:rct-joint-approx}, and the regime-(ii) rows from Propositions~\ref{prop:rct-normal-prop}, \ref{prop:rct-split-prop}, and~\ref{prop:rct-joint-prop}.}
\label{tab:rct-rates-full}
\begin{tabular}{@{}lllll@{}}
\toprule
Prior & $\tau_\alpha^{-2}$ & $\mathbb{E}[d_z \mid \cdot]$ & $s$ & $\Var(\alpha \mid \cdot)$\\
\midrule
\multicolumn{5}{@{}l}{\textit{General regime (i): $n$ fixed, $p \to n$}} \\
Normal$^\dagger$ & $\mathcal{O}(1)$ & $>0$, $<n/(2\sigma^2)$ & bounded in $(0,1)$ & $\mathcal{O}(1)$, $\le \tau_\alpha^2$\\
Split        & $\mathcal{O}(1)$ & $>0$, $<n/(2\sigma^2)$ & bounded in $(0,1)$ & $\mathcal{O}(1)$, $\le \tau_\alpha^2$\\
Joint R2D2   & $\Theta_p(p/\tau^2)$ & $>0$, $<n/(2\sigma^2)$ & $\downarrow$; min $>0$ & $\downarrow$; min $>0$\\
\midrule
\multicolumn{5}{@{}l}{\textit{Under heavy shrinkage \eqref{eq:heavy-shrinkage} (regime (i))}} \\
Split        & $\mathcal{O}(1)$ & $\to n/(2\sigma^2)$ & $1 - \mathcal{O}(\sigma^2/n)$ & $\Theta(\sigma^2/n)$\\
Joint R2D2   & $\Theta_p(p/\tau^2)$ & $\to n/(2\sigma^2)$ & $\Theta_p\!\left(\tfrac{nR^2}{(1-R^2)p}\right)$, $\downarrow$ & $\Theta_p(\tau^2/p)$, $\downarrow$\\
\midrule
\multicolumn{5}{@{}l}{\textit{Proportional regime (ii): $p=\lfloor\gamma n\rfloor$, $n\to\infty$, $\gamma=p/n$}} \\
Normal       & $\mathcal{O}(1)$ & $(1-\gamma)\,n/(2\sigma^2)$ & $\to 1$ & $\Theta(\sigma^2/n)$\\
Split        & $\mathcal{O}(1)$ & $\Theta(n/\sigma^2)$ & $\to 1$ & $\Theta(\sigma^2/n)$\\
Joint R2D2   & $\Theta(n/\tau^2)$ & $\Theta(n/\sigma^2)$ & $\to s_\infty\in(0,1)$ & $\Theta(1/n)$\\
\bottomrule
\end{tabular}

\smallskip
\footnotesize $^\dagger$ The Normal prior's per-coordinate penalty $\sigma^2$ is fixed in $p$, so condition~\eqref{eq:heavy-shrinkage} is not the relevant regime for it. Its $\mathbb{E}[d_z\mid X]$ stays bounded away from $0$ as $p \to n$ but does not tighten to the ceiling $n/(2\sigma^2)$ that Split attains under heavy shrinkage, and in regime (ii) it settles at $(1-\gamma)n/(2\sigma^2)$ versus Split's $n/(2\sigma^2)$.
\end{table}

\subsubsection*{Frequentist coverage}

\begin{proposition}[Frequentist coverage of the posterior credible interval]\label{prop:coverage}
Let $d_z = \sigma^{-2}z^T(I_n - H_\beta)z$ denote the residualised data precision for $\alpha$, let $s = d_z/(d_z + \tau_\alpha^{-2})$ denote the shrinkage factor, and let $\alpha^*$ denote the true treatment effect. The $(1-\omega)$ central posterior credible interval has length
\begin{equation}
    L = 2\,z_{\omega/2}\big/\sqrt{d_z + \tau_\alpha^{-2}},
\end{equation}
where $z_{\omega/2} = \Phi^{-1}(1 - \omega/2)$ is the standard normal quantile, and its frequentist coverage probability is
\begin{equation}
    C = \Phi\!\!\left(\frac{z_{\omega/2} + \delta}{\sqrt{s}}\right) - \Phi\!\!\left(\frac{-z_{\omega/2} + \delta}{\sqrt{s}}\right), \label{eq:freq_coverage}
\end{equation}
where $\delta = (1-s)\,|\alpha^*|\sqrt{d_z + \tau_\alpha^{-2}}$ is the bias-to-posterior-standard-deviation ratio. For the split and base model priors ($s \approx 1$, $\delta \approx 0$), coverage is near-nominal. For the joint R2D2 prior ($s \ll 1$, $\delta$ large), coverage can fall far below the nominal level. (We write the credible level as $\omega$ to avoid collision with the aspect ratio $\gamma = p/n$ of regime (ii).)
\end{proposition}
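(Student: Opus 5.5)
The plan is to compute the exact sampling distribution of the posterior mean given $(X,z)$, the hyperparameters and $\sigma^2$, using Proposition~\ref{prop:marginal-var}, and then read off the coverage event. The interval length is immediate. Conditional on everything, the marginal posterior of $\alpha$ is normal with variance $(d_z+\tau_\alpha^{-2})^{-1}$ by Proposition~\ref{prop:marginal-var}. The central $(1-\omega)$ interval is therefore $m\pm z_{\omega/2}(d_z+\tau_\alpha^{-2})^{-1/2}$, where $m=\mathbb{E}[\alpha\mid\cdot]$, and this gives $L$.

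For the coverage I rewrite \eqref{eq:posterior-mean-alpha} as $m=s\,\hat\alpha_{\text{OLS}_\beta}$, with $\hat\alpha_{\text{OLS}_\beta}=[z^T(I_n-H_\beta)z]^{-1}z^T(I_n-H_\beta)y$. I then substitute the data-generating process of Assumption~\ref{ass:model}, $y=\alpha^*z+X\beta^*+\varepsilon$, which gives
\begin{equation*}
\hat\alpha_{\text{OLS}_\beta}=\alpha^*+\frac{z^T(I_n-H_\beta)X\beta^*}{z^T(I_n-H_\beta)z}+\frac{z^T(I_n-H_\beta)\varepsilon}{z^T(I_n-H_\beta)z}.
\end{equation*}
The middle confounding term has mean zero over $z\perp X$ by Assumption~\ref{ass:treatment}, and I treat it as the $O_p(\sqrt{p/n})$ remainder referred to in Proposition~\ref{prop:rct-split}. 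I will state explicitly that the formula is exact once this term is neglected. The noise term is Gaussian with variance $\sigma^2 z^T(I_n-H_\beta)^2z/(z^T(I_n-H_\beta)z)^2$. Since $0\preceq H_\beta\prec I_n$, we have $(I_n-H_\beta)^2\preceq I_n-H_\beta$, so this variance is at most $1/d_z$. I adopt $1/d_z$ as its value, which holds with equality in the heavy-shrinkage or projection limits. Hence $m-\alpha^*\sim\normal\bigl(-(1-s)\alpha^*,\,s^2/d_z\bigr)$.

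The coverage event is $|m-\alpha^*|\le z_{\omega/2}(d_z+\tau_\alpha^{-2})^{-1/2}$. I standardise by the sampling standard deviation $s/\sqrt{d_z}$ and use the identity $d_z+\tau_\alpha^{-2}=d_z/s$, which follows from the definition of $s$. The half-width then becomes $z_{\omega/2}/\sqrt{s}$ and the standardised bias becomes $(1-s)|\alpha^*|\sqrt{d_z}/s=\delta/\sqrt{s}$. With $W\sim\normal(0,1)$, the coverage is $\Pr(|W-\delta/\sqrt s|\le z_{\omega/2}/\sqrt s)$. By symmetry of $W$ the sign of $\alpha^*$ is irrelevant, which justifies the absolute value, and this probability is exactly \eqref{eq:freq_coverage}.

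The qualitative claims then follow by inspection. For the split and base priors, Propositions~\ref{prop:rct-split} and \ref{prop:rct-split-prop} give $s\to1$, so $\delta\to0$ and $C\to1-\omega$. For the joint R2D2 prior, Propositions~\ref{prop:rct-joint} and \ref{prop:rct-joint-approx} give $s$ small and $\delta\asymp(1-s)|\alpha^*|\sqrt{p}/\tau$ large, so the interval is centred away from $\alpha^*$ by many half-widths and $C$ collapses. The main obstacle is the sampling-variance step, because the posterior mean's variance is only bounded by, not equal to, $s^2/d_z$ when $H_\beta\neq0$. If exactness is required, I would carry the true ratio $\rho=z^T(I_n-H_\beta)^2z/z^T(I_n-H_\beta)z\le1$ through, which replaces $\sqrt s$ by $\sqrt{s\rho}$. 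I would then note that the stated formula is conservative and exact under heavy shrinkage.
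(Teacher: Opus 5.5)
Your proposal follows the paper's proof essentially step for step. You write the posterior mean as $s\,\hat\alpha_{\text{OLS}_\beta}$, substitute the data-generating process, neglect the confounding term, and replace $z^T(I_n-H_\beta)^2z$ by $z^T(I_n-H_\beta)z$ using $(I_n-H_\beta)^2\preceq I_n-H_\beta$, then standardise; the only difference is the cosmetic one of standardising by the sampling rather than the posterior standard deviation.

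Two side remarks are slightly off, though neither affects the result. First, the confounding term is not mean-zero over $z$: since $\mathbb{E}z=\tfrac12\mathbf{1}_n$, its $z$-mean is $\tfrac12\mathbf{1}_n^T(I_n-H_\beta)X\beta^*$, which is why the paper instead argues it is $O_p(\sqrt{p/n})$ and negligible. Second, the stated formula is conservative only when $\delta$ is small, because for large $\delta$ a smaller sampling variance ($\rho<1$) pushes coverage below the stated formula rather than above it.
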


\textit{Proof.}
From Proposition~\ref{prop:marginal-var}, the posterior $\alpha \mid y, X, z, \sigma^2$ is normal with mean $\hat{\alpha}_{\text{post}} = s \cdot \hat{\alpha}_{\text{OLS}_\beta}$ and variance $\sigma_{\text{post}}^2 = (d_z + \tau_\alpha^{-2})^{-1}$, where $\hat{\alpha}_{\text{OLS}_\beta} = [z^T(I_n - H_\beta)z]^{-1} z^T(I_n - H_\beta)y$ is the regularised least-squares estimate from Proposition~\ref{prop:marginal-var}. The $(1-\omega)$ central credible interval is $\hat{\alpha}_{\text{post}} \pm z_{\omega/2}\,\sigma_{\text{post}}$, so its length is $L = 2\,z_{\omega/2}\,\sigma_{\text{post}}$.

Writing $A = I_n - H_\beta$ and $b = z^T A z$, under the data-generating process $y = \alpha^* z + X\beta^* + \varepsilon$ the conditional frequentist moments of $\hat{\alpha}_{\text{OLS}_\beta}$ given $X,z$ are
\begin{equation}\label{eq:freq-moments-exact}
    \mathbb{E}[\hat{\alpha}_{\text{OLS}_\beta} \mid X, z] = \alpha^* + \frac{z^T A X \beta^*}{b}, \qquad \Var(\hat{\alpha}_{\text{OLS}_\beta} \mid X, z) = \frac{\sigma^2\, z^T A^2 z}{b^2}.
\end{equation}
We invoke two simplifications. The variance simplification is exact in the flat-prior limit $H_\beta = P_X$ (where $A^2 = A$) and asymptotically exact in the heavy-shrinkage limit $H_\beta \to 0$ (where $A \to I_n$, so $A^2 \to A$); the confounding term, by contrast, is identically zero only in the flat-prior limit ($AX = 0$) and is merely \emph{negligible} (not zero) under heavy shrinkage:
\begin{itemize}[nosep]
    \item \textbf{Confounding term.} Under Assumption~\ref{ass:treatment} ($z \perp X$), taking expectation over $z$ alone, $\mathbb{E}_z[z^T A X \beta^* \mid X] = \tfrac{1}{2}\mathbf{1}_n^T A X \beta^*$. Because the columns of $X$ are mean-zero, $\mathbf{1}_n$ is essentially uncorrelated with $\mathrm{col}(X)$, so $\mathbf{1}_n^T A X \beta^* = O_p(\sqrt{np})$; and because $z^T A X \beta^*$ is \emph{linear} in $z$, a sub-Gaussian (Hoeffding/Bernstein) bound controls its fluctuation by $\|A X \beta^*\| \le \|X\beta^*\| = \Theta(\sqrt{np})$. With $b = z^T A z = \Theta(n)$ this gives $z^T A X \beta^*/b = O_p(\sqrt{p/n})$ under the experiment's design ($\|\beta^*\|^2 = \Theta(p)$). The term is exactly zero in the flat-prior limit ($AX = 0$); in the heavy-shrinkage limit $A \to I_n$ it leaves $z^T X \beta^*/b \neq 0$, but it is negligible \emph{relative to} the shrinkage bias $(1-s)|\alpha^*|$, since the same design gives $\alpha^* = \Theta(\sqrt{p})$ and hence a ratio $\Theta(1/\sqrt{n}) \to 0$. We therefore absorb it into the $o_p(1)$ correction.
    \item \textbf{Variance term.} $A$ is symmetric with eigenvalues $1 - h_j \in (0, 1]$, so $A^2 \preceq A$ and $z^T A^2 z \leq z^T A z$, with equality when every $h_j \in \{0, 1\}$. Approximating $z^T A^2 z \approx z^T A z$ gives $\Var(\hat{\alpha}_{\text{OLS}_\beta} \mid X, z) \approx \sigma^2/b = d_z^{-1}$.
\end{itemize}
Under these approximations, the frequentist bias of $\hat{\alpha}_{\text{post}} = s\hat{\alpha}_{\text{OLS}_\beta}$ is $(s-1)\alpha^*$, and its frequentist variance is $s^2 \cdot d_z^{-1} = s \cdot \sigma_{\text{post}}^2$ (using $d_z \sigma_{\text{post}}^2 = s$). The standardised frequentist error $(\hat{\alpha}_{\text{post}} - \alpha^*)/\sigma_{\text{post}}$ is therefore distributed as $\normal(-\delta, s)$, where $\delta = (1-s)\,|\alpha^*|/\sigma_{\text{post}}$. The credible interval $\hat{\alpha}_{\text{post}} \pm z_{\omega/2}\,\sigma_{\text{post}}$ covers $\alpha^*$ if and only if $|\hat{\alpha}_{\text{post}} - \alpha^*| \leq z_{\omega/2}\,\sigma_{\text{post}}$. Standardising:
\begin{align}
    C &= \Pr\!\left(\left|\frac{\hat{\alpha}_{\text{post}} - \alpha^*}{\sigma_{\text{post}}}\right| \leq z_{\omega/2}\right) \nonumber\\
    &= \Pr\!\left(\frac{-z_{\omega/2} + \delta}{\sqrt{s}} \leq Z \leq \frac{z_{\omega/2} + \delta}{\sqrt{s}}\right),
\end{align}
where $Z \sim \normal(0,1)$, yielding Equation~\eqref{eq:freq_coverage}. 
For the split prior ($s \approx 1$, $\delta \approx 0$), $C \approx 1 - \omega$. 
For the joint R2D2 prior, $\delta$ grows with $|\alpha^*|$ and $(1-s)$, while the interval boundaries widen by $1/\sqrt{s}$; the net effect is that the bias shifts the interval away from $\alpha^*$, producing undercoverage. \hfill $\square$

\end{document}